\def\paperversion{DOC}
\newif\ifsinglecolumn\singlecolumnfalse
\newif\ifwidemargins\widemarginsfalse
\newif\ifwarning\warningfalse
\newif\ifshowcomments\showcommentsfalse
\newif\ifblinded\blindedfalse
\newif\ifshowlinenums\showlinenumsfalse
\newif\ifreport\reportfalse
\newif\ifcopyrightspace\copyrightspacefalse
\newif\ifacknowledgments\acknowledgmentsfalse
\newif\ifshowpagenumbers\showpagenumberstrue
\newif\iffinalformat\finalformatfalse
\newif\ifweb\webfalse
\newif\ifexternalize\externalizetrue
\def\xxversion{\csname xx\paperversion\endcsname}
\newif\ifsawversion\sawversionfalse
\let\xxversion=\undefined
\crefname{section}{\S}{\S}
\Crefname{section}{\S}{\S}
\crefname{subsection}{\S}{\S}
\Crefname{subsection}{\S}{\S}
\crefname{subsubsection}{\S}{\S}
\Crefname{subsubsection}{\S}{\S}
  \lstdefinestyle{pseudocode}{
      language        = Python,
      basicstyle      = \footnotesize\ttfamily,
      backgroundcolor = \color{white},
      keywordstyle    = \color{blue},
      stringstyle     = \color{green},
      commentstyle    = \color{red}\ttfamily,
      morekeywords    = {new, atomic, has},
      escapeinside    = {@}{@},
      numbers=left,
      numbersep=5pt, 
      numberstyle=\tiny\color{gray},
  }
\title{Heterogeneous Paxos: Technical Report}
\title{Heterogeneous Paxos}
\author{\href{https://isaacsheff.com}{Isaac Sheff}}
       {  Max Planck Institute for Software Systems,
           Campus E1 5,
           Room 531,
           66121 Saarbr{\"u}cken,
           Germany 
         \ifreport\and \url{https://IsaacSheff.com}\fi}
       {isheff@mpi-sws.org}
       {https://orcid.org/0000-0002-7822-1503}
       {} 
\author{\href{https://www.cs.cornell.edu/~xinwen/}{Xinwen Wang}}
       {  Cornell University,
            Gates Hall,
            107 Hoy Road,
            Ithaca, New York, 14853,
            USA
          \ifreport\and \url{https://www.cs.cornell.edu/~xinwen/}\fi}
       {xinwen@cs.cornell.edu}
       {https://orcid.org/0000-0003-2958-6589}
       {} 
\author{\href{https://www.cs.cornell.edu/home/rvr/}{Robbert van Renesse}}
       {  Cornell University,
            433 Gates Hall,
            107 Hoy Road,
            Ithaca, New York, 14853,
            USA
          \ifreport\and \url{https://www.cs.cornell.edu/home/rvr/}\fi}
       {rvr@cs.cornell.edu}
       {https://orcid.org/0000-0003-3598-0283}
       {} 
\author{\href{https://www.cs.cornell.edu/andru/}{Andrew C. Myers}}
       {  Cornell University,
            428 Gates Hall,
            107 Hoy Road,
            Ithaca, New York, 14853,
            USA
          \ifreport\and \url{https://www.cs.cornell.edu/andru/}\fi}
       {andru@cs.cornell.edu}
       {https://orcid.org/0000-0001-5819-7588}
       {} 
\authorrunning{Isaac Sheff, Xinwen Wang, Robbert van Renesse, and Andrew C. Myers } 
\keywords{Consensus, Trust, Heterogeneous Trust}
\newcommand{\p}[1]{{\ensuremath{\left({{#1}}\right)}}}
\newcommand{\cb}[1]{{\left\{{{#1}}\right\}}}
\newcommand{\abs}[1]{{\left|{{#1}}\right|}}
\newcommand{\ceil}[1]{{\ensuremath{\left\lceil{{#1}}\right\rceil}}}
\newcommand{\tb}[1]{{\textrm{\textbf{{#1}}}}}
\newcommand{\tallpipe}[2]{{\ensuremath{\begin{array}{r|l}{{#1}}&{{#2}}\end{array}}}}
\newcommand{\crash}{crash\xspace}
\newcommand{\byzantine}{Byzantine\xspace}
\newcommand{\colort}[2]{{\color{#1}{#2}}}
\newcommand{\red}[1]{\colort{red}{#1}}
\newcommand{\blue}[1]{\colort{blue}{#1}}
\definecolor{DarkGreen}{rgb}{0,0.5,0}
\newcommand{\green}[1]{{\colort{DarkGreen}{#1}}}
\newcommand{\purple}[1]{{\colort{purple}{#1}}}
\newcommand{\orange}[1]{{\colort{orange}{#1}}}
\newcommand{\clg}[0]{{\textrm{CLG}}\xspace}
\newcommand{\edge}[2]{{{{#1}}\!\!-\!\!{{#2}}}}
\newcommand{\reallysafe}[0]{\ensuremath{\mathcal{S}}\xspace}
\newcommand{\reallylive}[0]{\ensuremath{\mathcal{L}}\xspace}
\newcommand{\entangled}[2]{{\textrm{Entangled}\p{{{#1}}, {{#2}}}}}
\newcommand{\accurate}[1]{{\textrm{Accurate}\p{{{#1}}}}}
\newcommand{\terminating}[1]{{\textrm{Terminating}\p{{{#1}}}}}
\newcommand{\sig}[1]{{\ensuremath{\textrm{Sig}\p{#1}}}}
\newcommand{\tran}[1]{{\ensuremath{\textrm{Tran}\p{#1}}}}
\newcommand{\geta}[1]{{\ensuremath{\textrm{Get1a}\p{#1}}}}
\newcommand{\ba}[1]{{\ensuremath{\textrm{b}\p{#1}}}}
\newcommand{\caught}[1]{{\ensuremath{\textrm{Caught}\p{#1}}}}
\newcommand{\con}[2]{{\ensuremath{\textrm{Con}_{{#1}}\p{{#2}}}}}
\newcommand{\qa}[1]{{\ensuremath{\textrm{q}\p{{#1}}}}}
\newcommand{\fresh}[2]{{\ensuremath{\textrm{fresh}_{{#1}}\p{{#2}}}}}
\newcommand{\buried}[2]{{\ensuremath{\textrm{Buried}\p{{{#1}}, {{#2}}}}}}
\newcommand{\cona}[2]{{\ensuremath{\textrm{Con2as}_{{#1}}\p{{#2}}}}}
\newcommand{\va}[1]{{\ensuremath{\textrm{V}\p{{#1}}}}}
\newcommand{\decision}[2]{{\ensuremath{\textrm{Decision}_{{#1}}\p{{#2}}}}}
\DeclareMathOperator*{\argmax}{argmax}
\newcommand{\andlinesTwo}[2]{{  \ensuremath{\begin{array}{r l} & {{#1}} \\ \land & {{#2}} \end{array}}}}
\newcommand{\andlinesThree}[3]{{\ensuremath{\begin{array}{r l} & {{#1}} \\ \land & {{#2}} \\ \land & {{#3}} \end{array}}}}
\newcommand{\andlinesFour}[4]{{ \ensuremath{\begin{array}{r l} & {{#1}} \\ \land & {{#2}} \\ \land & {{#3}} \\ \land & {{#4}} \end{array}}}}
\newcommand{\andlinesFive}[5]{{ \ensuremath{\begin{array}{r l} & {{#1}} \\ \land & {{#2}} \\ \land & {{#3}} \\ \land & {{#4}} \\ \land & {{#5}} \end{array}}}}
\newcommand{\andlinesSix}[6]{{  \ensuremath{\begin{array}{r l} & {{#1}} \\ \land & {{#2}} \\ \land & {{#3}} \\ \land & {{#4}} \\ \land & {{#5}} \\ \land & {{#6}} \end{array}}}}
\newcommand\membershipdisagreement[5]{
\node[circle, fill=blue,  inner sep=#4 * 3mm, draw=blue,  line width=0mm](A#3) at (0 + #1,#5 + #2) {\ };
\node[circle, fill=black, inner sep=#4 * 3mm, draw=black, line width=0mm](B#3) at (#5 + #1,0 + #2) {\ };
\node[circle, fill=black, inner sep=#4 * 3mm, draw=black, line width=0mm](C#3) at (#5 + #1,#5 + #2) {\ };
\node[circle, fill=black, inner sep=#4 * 3mm, draw=black, line width=0mm](D#3) at (#5 + #1,2 * #5 + #2) {\ };
\node[circle, fill=red!30, inner sep=#4 * 2.57mm, draw=red, line width=#4 * 1.3mm](E#3) at (2 * #5 + #1,#5 + #2) {\ };
}
\newcommand\membershipdisagreementfailure[8]{
\pgfdeclareimage[width=#6]{devil}{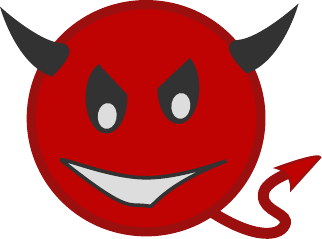}

\ifthenelse{\equal{#7}{A}}
           {\ifthenelse{\equal{#8}{byz}}{\node[] (Abyz#3) at (0 + #1,#5 + #2) {\pgfuseimage{devil}};}{}}
           {}
          
\ifthenelse{\equal{#7}{B}}
           {\ifthenelse{\equal{#8}{byz}}{\node[] (Bbyz#3) at (#5 + #1,0 + #2) {\pgfuseimage{devil}};}{}}
           {}

\ifthenelse{\equal{#7}{C}}
           {\ifthenelse{\equal{#8}{byz}}{\node[] (Cbyz#3) at (#5 + #1,#5 + #2) {\pgfuseimage{devil}};}{}}
           {}

\ifthenelse{\equal{#7}{D}}
           {\ifthenelse{\equal{#8}{byz}}{\node[] (Dbyz#3) at (#5 + #1,2 * #5 + #2) {\pgfuseimage{devil}};}{}}
           {}

\ifthenelse{\equal{#7}{E}}
           {\ifthenelse{\equal{#8}{byz}}{\node[] (Ebyz#3) at (2 * #5 + #1,#5 + #2) {\pgfuseimage{devil}};}{}}
           {}
}
\newcommand\introexamplenodes[2]{
\node[circle, fill=blue, inner sep=1.2mm, draw=blue, line width=0mm](LA) at (#1-0.9,#2) {\ };
\node[circle, fill=blue, inner sep=1.2mm, draw=blue, line width=0mm](LB) at (#1-0.9,#2+0.45) {\ };
\node[circle, fill=blue, inner sep=1.2mm, draw=blue, line width=0mm](LC) at (#1-0.9,#2+0.9) {\ };
\node[circle, fill=black, inner sep=1.2mm, draw=black, line width=0mm](LD) at (#1-.45,#2) {\ };
\node[circle, fill=black, inner sep=1.2mm, draw=black, line width=0mm](LE) at (#1-.45,#2+0.45) {\ };
\node[circle, fill=black, inner sep=1.2mm, draw=black, line width=0mm](LF) at (#1-.45,#2+0.9) {\ };
\node[circle, fill=red!30, inner sep=1mm, draw=red, line width=.4mm](LG) at (#1,#2) {\ };
\node[circle, fill=red!30, inner sep=1mm, draw=red, line width=.4mm](LH) at (#1,#2+0.45) {\ };
\node[circle, fill=red!30, inner sep=1mm, draw=red, line width=.4mm](LI) at (#1,#2+0.9) {\ };
}
\newcommand\introexamplefault[4]{
\ifthenelse{\equal{#3}{1}}{\node[]() at (#1-0.9,#2+0.9) {\pgfuseimage{#4}}}{};
\ifthenelse{\equal{#3}{2}}{\node[]() at (#1-0.45,#2+0.9) {\pgfuseimage{#4}}}{};
\ifthenelse{\equal{#3}{3}}{\node[]() at (#1,#2+0.9) {\pgfuseimage{#4}}}{};
\ifthenelse{\equal{#3}{4}}{\node[]() at (#1-0.9,#2+0.45) {\pgfuseimage{#4}}}{};
\ifthenelse{\equal{#3}{5}}{\node[]() at (#1-0.45,#2+0.45) {\pgfuseimage{#4}}}{};
\ifthenelse{\equal{#3}{6}}{\node[]() at (#1,#2+0.45) {\pgfuseimage{#4}}}{};
\ifthenelse{\equal{#3}{7}}{\node[]() at (#1-0.9,#2) {\pgfuseimage{#4}}}{};
\ifthenelse{\equal{#3}{8}}{\node[]() at (#1-0.45,#2) {\pgfuseimage{#4}}}{};
\ifthenelse{\equal{#3}{9}}{\node[]() at (#1,#2) {\pgfuseimage{#4}}}{};
}
\newcommand\block{block\xspace}
\newcommand\blocks{blocks\xspace}
\newcommand\attestations{attestations\xspace}
\newcommand\integrityattestations{integrity \attestations}
\definecolor{LightBlue}{rgb}{.9,.9,1}
\newcommand{\reducedstrut}{%
  \vrule width 0pt height .9\ht\strutbox depth .9\dp\strutbox\relax}
\newcommand{\hetdifftext}[1]{%
  \begingroup
  \setlength{\fboxsep}{0pt}%
  \colorbox{blue!10}{\reducedstrut#1\/}%
  \endgroup
}
\newcommand{\hetdiff}[1]{\hetdifftext{$\displaystyle#1$}}
\let\localcref\cref
\let\localCref\Cref
\newcommand{\techReportName}{\ifblinded{the Appendix}\else\cite{hetconsTR}\fi}
\newcommand{\tref}[1]{\localcref{#1} of~\techReportName}
\newcommand{\Tref}[1]{\localCref{#1} of~\techReportName}
\renewcommand{\cref}[1]{%
  \@ifundefined{r@#1}{%
    \tref{TR-#1}%
  }{%
    \localcref{#1}%
  }%
}
\renewcommand{\Cref}[1]{%
  \@ifundefined{r@#1}{%
    \Tref{TR-#1}%
  }{%
    \localCref{#1}%
  }%
}
\begin{document}
\maketitle
\begin{abstract}
  In distributed systems, a group of \textit{learners} achieve
\textit{consensus} when, by observing the output of some
\textit{acceptors}, they all arrive
 at the same value.
Consensus is crucial for ordering transactions in failure-tolerant
 systems.
Traditional consensus algorithms are homogeneous in three ways:
\begin{itemize}
\item  all learners are treated equally,
\item  all acceptors are treated equally, and
\item  all failures are treated equally.
\end{itemize}
These assumptions, however, are unsuitable for cross-domain
 applications, including blockchains, where not all acceptors are
 equally trustworthy, and not all learners have the same assumptions
 and priorities.
We present the first consensus algorithm to be heterogeneous in all three
 respects.
Learners set their own mixed failure tolerances over
   differently trusted sets of acceptors.
We express these assumptions in a novel \textit{Learner Graph}, and
 demonstrate sufficient conditions for consensus.

We present \textit{Heterogeneous Paxos}, an extension of Byzantine
 Paxos.
Heterogeneous Paxos achieves consensus for any viable
 Learner Graph in best-case three message sends, which is optimal.
We present a proof-of-concept implementation and
 demonstrate how tailoring for heterogeneous scenarios can save
 resources and reduce latency.

\end{abstract}

\ifreport
\newpage
\fi

\section{Introduction}
\label{sec:intro}
The rise of blockchain systems has renewed interest in the classic
 problem of consensus, but traditional consensus protocols are not
 designed for the highly decentralized, heterogeneous environment
 of blockchains.
In a Consensus protocol, processes called
 \textit{learners} try to decide on the same value, based on the
 outputs of some set of processes called \textit{acceptors},
 some of whom may fail.
(In our model, learners send no messages, and so they cannot fail.)
Consensus is a vital part of any fault-tolerant system maintaining
 strongly consistent state, such as
 Datastores~\cite{corbett2013, calder2011},
 Blockchains~\cite{bitcoin,ethereum,ScalingDecentralizedBlockchains},
 or indeed anything which orders transactions.
Traditionally, consensus protocols have been \textit{homogeneous}
 along three distinct dimensions:
\begin{itemize}
\item Homogeneous acceptors.
      Traditional systems tolerate some number $f$ of
       failed acceptors, but acceptors are interchangeable.
      Prior work including
       ``failure-prone sets''~\cite{Malkhi97a,survivor-sets}
       explores \textit{heterogeneous} acceptors.
\item Homogeneous failures.
      Systems are traditionally designed to tolerate either purely
       \byzantine or purely \crash failures.
      There is no distinction between failure
       scenarios in which the same acceptors fail, but possibly in
       different ways.
      However, some projects have explored \textit{heterogeneous}, or
       ``mixed'' failures~\cite{Siu1998,clement2009upright,Liu2015XFTPF}.
\item Homogeneous learners.
      All learners make the same assumptions, so system
       guarantees apply either to all learners, or to none.
      Systems with \textit{heterogeneous} learners include
       Cobalt~\cite{cobalt} and
       Stellar~\cite{mazieresstellar,StellarDISC,GarcaPrez2018FederatedBQ}.
\end{itemize}

Blockchain systems can violate homogeneity on all three dimensions.
Permissioned blockchain systems like Hyperledger~\cite{hyperledger},
 J.P. Morgan's Quorum~\cite{quorumWP}, and R3's Corda~\cite{corda}
 exist specifically to facilitate atomic transactions between mutually
 distrusting businesses. 
A crucial part of setting up any implementation has been settling on a
 set of equally trustworthy, failure-independent acceptors. 
These setups are complicated by the reality that different parties
 make different assumptions about whom to trust, and how.

Defining heterogeneous consensus poses challenges
 not covered by homogeneous definitions,
 particularly with respect to learners.
How should learners express their failure tolerances?
When different learners expect different possible failures, when do
 they need to agree? 
If a learner's failure assumptions are wrong, does it have
 any guarantees?
No failure models developed for one or two dimensions of heterogeneity
 easily compose to describe all three.

Failure models developed for one or two dimensions of heterogeneity
do not easily compose to describe all three, but
 our new trust model, the
 \textit{Learner Graph}~(\cref{sec:learnergraph}),
 can express the precise trust assumptions of learners in terms of
 diverse acceptors and failures.
Compared to trying to find a homogeneous setup agreeable to all
 learners, finding a learner graph for which consensus is possible is
 strictly more permissive.
In fact, the learner graph is substantially more expressive than the models
 used in prior heterogeneous learner consensus work, including Stellar's
 \textit{slices}~\cite{mazieresstellar} or Cobalt's
 \textit{essential subsets}~\cite{cobalt}.
%
Building on our learner graph, we present the first fully
 \textit{heterogeneous} consensus protocol.
It generalizes Paxos to be heterogeneous along all three
 dimensions.

Heterogeneity allows acceptors to tailor a consensus protocol for
 the specific requirements of learners, rather than trying to force
 every learner to agree whenever any pair demand to agree.
This increased flexibility can save time and resources, or even make
 consensus possible where it was not before, as we now show with an
 example.

\begin{figure}
 \centering
\begin{tabular}{cc}
\begin{minipage}{.3\textwidth}
\begin{tikzpicture}[scale=.5]
\pgfdeclareimage[width=5mm]{eye-blue}{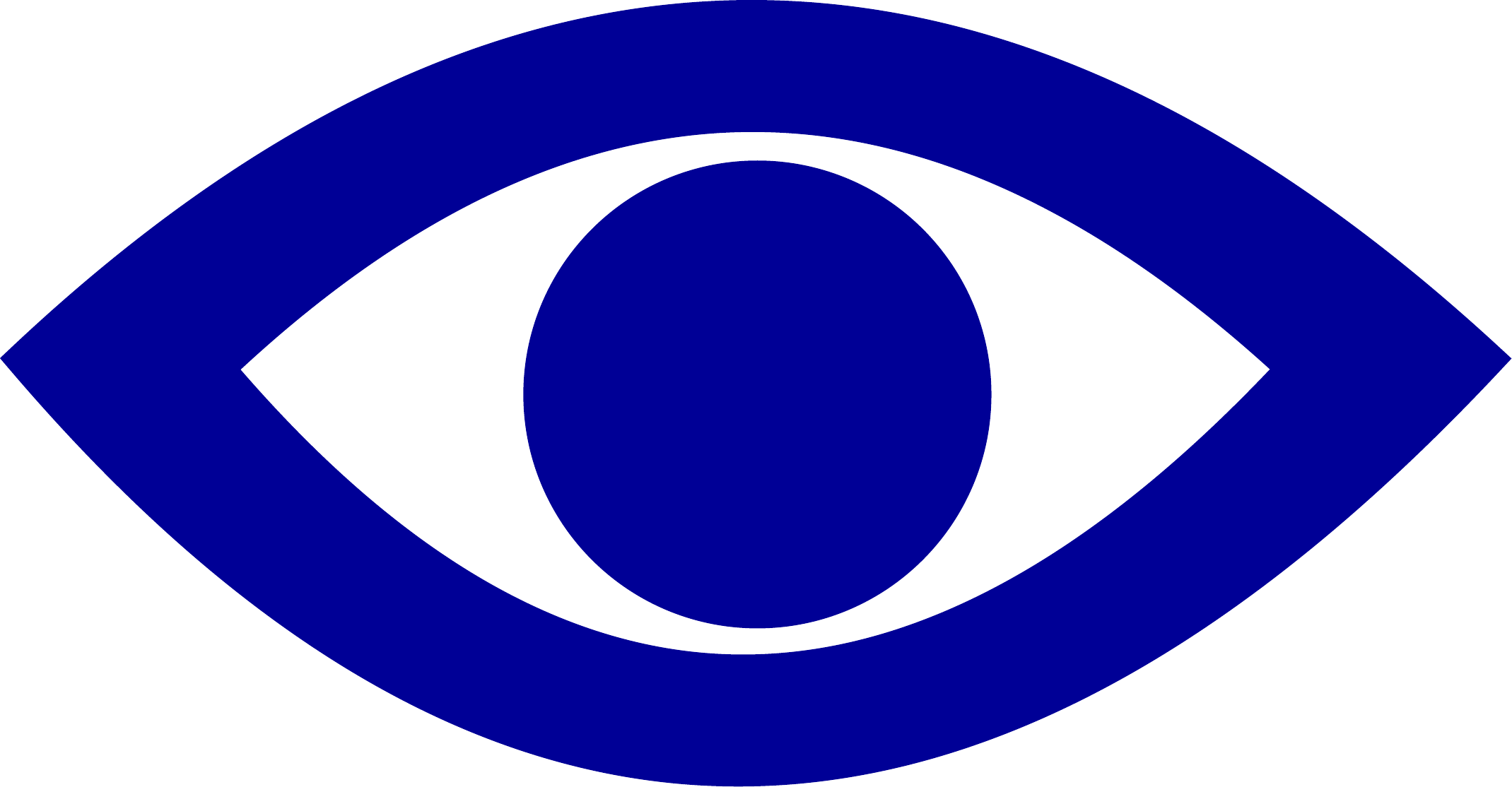}
\pgfdeclareimage[width=5mm]{eye-red}{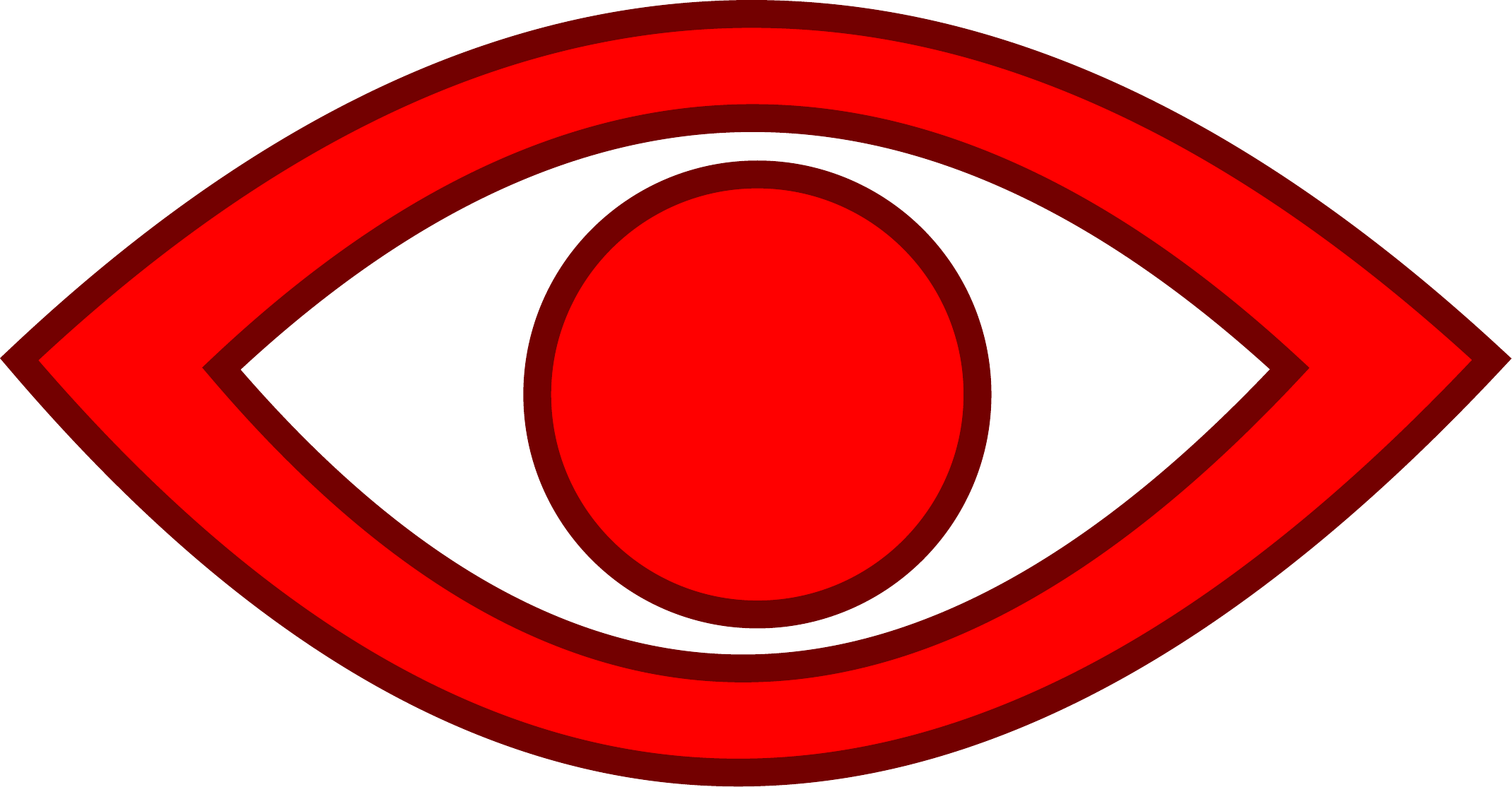}
\draw[blue,fill=blue!20] plot [smooth cycle] coordinates {
    (-.4, -.4)
    (-.4, 1.4)
    (1.4, 1.4)
    (1.4,  -.4)
  };
\draw[red, pattern=north west lines, pattern color=red!70] plot [smooth cycle] coordinates {
    (2.4, 2.4)
    ( .6, 2.4)
    ( .6,  .6)
    (2.4,  .6)
  };
\node[circle, fill=blue, inner sep=1.5mm, draw=blue, line width=0mm](A) at (0,0) {\ };
\node[circle, fill=blue, inner sep=1.5mm, draw=blue, line width=0mm](B) at (0,1) {\ };
\node[circle, fill=blue, inner sep=1.5mm, draw=blue, line width=0mm](C) at (0,2) {\ };
\node[circle, fill=black, inner sep=1.5mm, draw=black, line width=0mm](D) at (1,0) {\ };
\node[circle, fill=black, inner sep=1.5mm, draw=black, line width=0mm](E) at (1,1) {\ };
\node[circle, fill=black, inner sep=1.5mm, draw=black, line width=0mm](F) at (1,2) {\ };
\node[circle, fill=red!30, inner sep=1.28mm, draw=red, line width=.65mm](G) at (2,0) {\ };
\node[circle, fill=red!30, inner sep=1.28mm, draw=red, line width=.65mm](H) at (2,1) {\ };
\node[circle, fill=red!30, inner sep=1.28mm, draw=red, line width=.65mm](I) at (2,2) {\ };

\node[](blue0) at (-2,.5) {\pgfuseimage{eye-blue}};
\node[](blue1) at (-2,1.5) {\pgfuseimage{eye-blue}};

\node[](red0) at (4,.5) {\pgfuseimage{eye-red}};
\node[](red1) at (4,1.5) {\pgfuseimage{eye-red}};
\end{tikzpicture}
\end{minipage} & \begin{minipage}{.6\textwidth}
\caption[Heterogeneous Learners, Acceptors, and Failures]{
Illustration of the scenario in~\cref{sec:introexample}.
Blue learners are drawn as blue eyes, red learners as red,
 outlined eyes.
Blue acceptors are drawn as blue circles, red acceptors as red,
 outlined circles, and third parties as black circles. 
 The light solid blue region holds a quorum for the blue learners,
 and the striped red holds a quorum for the red learners.}
\label{fig:example}
\end{minipage}
\end{tabular}
\end{figure}

\subsection{Example}
\label{sec:introexample}
Suppose organizations \blue{Blue Org} and \red{Red Org} want to
 agree on a value, such as the order of transactions involving both of
 their databases or blockchains.
The people at \blue{Blue Org} are
 \textit{\blue{blue learners}}: they want to decide on a value
 subject to \emph{their} failure assumptions.
Likewise, the people at \red{Red Org} are
 \textit{\red{red learners}} with their own assumptions.
While neither organization's learners believe their own
 organization's acceptors (machines) are \byzantine, they do not
 trust the other organization's acceptors at all.
To help achieve consensus, they enlist three trustworthy third-party
 acceptors.
\Cref{fig:example} illustrates this situation.

All learners want to agree so long as there are no \byzantine failures.
However, no learner is willing to lose liveness
 (never decide on a value) if only one of its own acceptors has
 crashed, one third-party acceptor is \byzantine, and 
 all the other organization's learners are \byzantine.
Furthermore, learners within the same organization expect
 \textit{never} to disagree, so long as none of their own
 organization's acceptors are \byzantine.

Unfortunately, existing protocols cannot satisfy these learners.
Stellar~\cite{mazieresstellar}, for instance, has one of the most
 expressive heterogeneous models available, but it cannot express
 heterogeneous failures.
It cannot express \blue{blue} and \red{red} learners'
 desire to terminate if a third-party acceptor crashes, but not
 necessarily agree a third-party acceptor is \byzantine.
Our work enables a heterogeneous consensus protocol that
 satisfies all learners.
\ifreport
In particular, \blue{blue learners} will have quorums of any 2
 \blue{blue} and any 2 third-party acceptors, and
 \red{red learners} will have quorums of any 2 \red{red} and any
 2 third-party acceptors.
\fi

\subsection{Heterogeneous Paxos}
\label{sec:introhetcons}
Heterogeneous Paxos, our novel generalization of Byzantine
 Paxos achieves consensus in a fully
 heterogeneous setting~(\cref{sec:hetcons}), with
 precisely defined conditions under which learners in
 guaranteed safety and
 liveness\ifreport~(\cref{sec:correctness})\else\relax\fi.
Heterogeneous Paxos inherits Paxos' optimal 3-message-send best-case
 latency, making it especially good for latency-sensitive applications
 with geodistributed acceptors, including blockchains.
We have implemented this protocol and used it to construct several
 permissioned blockchains~\cite{GarcaPrez2018FederatedBQ}.
We demonstrate the savings in latency and resources that arise from
 tailoring consensus to specific learners' constraints\ifreport
 in a variety of scenarios~(\cref{sec:evaluation}).
By including the trust requirements of learners affiliated with
 multiple chains, we can even append single blocks to multiple chains
 simultaneously.
This is the equivalent of multi-shard transactions, where each shard
 has its own trust configuration.
With this implementation, we can build composable blockchains with
 heterogeneous trust, allowing application-specific chains to order
 only the blocks they need~(\cref{sec:charlotte})\fi.

\subsection{Contributions}
\label{sec:contributions}
\begin{itemize}
  \item The \textbf{Learner Graph} offers a general way to
         express heterogeneous trust assumptions in all three
         dimensions~(\cref{sec:learnergraph}).
  \item We \textbf{formally generalize the traditional consensus
         properties} (Validity, Agreement, and Termination) for the
         fully heterogeneous setting
         (\cref{sec:heterogeneousconsensus}).
  \item \textbf{Heterogeneous Paxos} is the first consensus protocol
         with heterogeneous learners, heterogeneous acceptors,
         and heterogeneous failures~(\cref{sec:hetcons}).
        It also inherits Paxos' optimal 3-message-send best-case
         latency\ifreport, and supports repeated
         consensus~(\cref{sec:repeatedconsensus})\fi.
\item \textbf{Experimental results} from our implementation of
        Heterogeneous Paxos demonstrate its use to construct
        permissioned blockchains with previously unobtainable
        security and performance properties (\cref{sec:charlotte}).
\end{itemize}

\section{System Model}
\label{sec:systemmodel}
We consider a \textit{closed-world} (or \textit{permissioned})
system consisting of a fixed set of \textit{acceptors}, a fixed set of
 \textit{proposers}, and a fixed set of \textit{learners}.
Proposers and acceptors can send messages to other acceptors and
 learners.
Some predetermined, but unknown set of acceptors are
 \textit{faulty} (we assume a non-adaptive adversary).
Faults include crash failures, which are not \textit{live} (they can
 stop at any time without detection), and
 \byzantine failures, which are neither \textit{live} nor
 \textit{safe} (they can behave arbitrarily).
\begin{definition}[Live]
\label{defn:live}
A live acceptor eventually sends every message required by the
 protocol.
\end{definition}
\begin{definition}[Safe]
  \label{defn:safe}
A safe acceptor will not send messages unless they are required by the
 protocol, and will send messages only in the order specified by the
 protocol.
\end{definition}

Learners set the conditions under which they expect to agree.
They want to decide values, and to be guaranteed agreement under
 certain conditions.
While learners can make bad assumptions, since they do not send
 messages, they cannot misbehave, and so there are no
 ``faulty learners.''

\ifreport\subsection{Network}
\else
\textbf{Network:}
\fi
\label{sec:network}
Network communication is point-to-point and reliable: if a live
 acceptor sends a message to another live acceptor, or to
 a learner, the message arrives.
We adopt a slight weakening of
 \textit{partial synchrony}~\cite{Dwork1988}: after some unknown
 global stabilization time (GST), all messages between live
 acceptors arrive within some unknown latency bound $\Delta$. 
In Heterogeneous Paxos, live acceptors send all messages to all
 acceptors and learners, but \byzantine acceptors may
 equivocate, sending messages to different recipients in different
 orders, with unbounded delays. 
We assume that messages carry effectively unbreakable cryptographic
 signatures, and that acceptors are identified by public keys. 
We also assume messages can \textbf{reference} other messages
 \textit{by collision-resistant hash}: if one message contains a
 hash of another, it uniquely identifies the message it is
 referencing~\cite{collision-resistance}. 

\ifreport\subsection{Consensus}
\else
\textbf{Consensus:}
\fi
\label{sec:consensus}
The purpose of consensus is for each learner to
 decide on exactly one value, and for all learners to decide on the
 same value.
%
Here, \textit{execution} refers to a specific instance of consensus:
 the actions of a specific set of acceptors during some 
 time frame.
A \textit{protocol} refers to the instructions that safe acceptors
 follow during an execution.

An execution of consensus begins when \textit{proposers}
 \textit{propose} candidate values, in the form of a message received
 by a correct acceptor.
(No consensus can make guarantees about proposed values only known to
 crashed or \byzantine acceptors.)
Proposers might be clients sending requests into the system.
We make no assumptions about proposer correctness for safety
 properties, but to guarantee liveness, we will assume that acceptors
 can act as proposers as well
 (i.e. proposers are a superset of acceptors).
\ifreport
We assume, however, that values are \textit{verifiable}: 
 for instance, a \textit{propose} message might need to bear relevant
 signatures.
\fi
After receiving some messages from acceptors, each learner
 eventually \textit{decides} on a single value.

Traditionally, consensus requires three properties~\cite{Fischer82b}:
\begin{itemize}
  \item \textit{Validity}: if a learner decides $p$, then $p$ was
         proposed.\footnote
           { Correia, Neves, and Ver\'issimo list several popular
              \textit{validity} conditions.  Ours corresponds to
              MCV2~\cite{Correia2006}}
  \item \textit{Agreement}: if learner $\red a$ decides value
         $\red v$, and learner $\blue b$ decides value
         $\blue{v^\prime}$, then $\red v = \blue{v^\prime}$.
  \item \textit{Termination}: all learners eventually decide.
\end{itemize}
In \cref{sec:heterogeneousconsensus}, we generalize these properties to account
 for heterogeneity.

\section{The Learner Graph}
\label{sec:learnergraph}

We characterize learners' failure assumptions with a
 novel construct called a \textit{learner graph}.
The learner graph is a general way to characterize trust
 assumptions for heterogeneous consensus.
It can encompass most existing formulations, including Stellar's
 ``slices''~\cite{mazieresstellar} and Cobalt's
 ``essential sets''~\cite{cobalt}.
We discuss other formulations
 in~\cref{sec:related}.
\begin{definition}[Learner Graph]
  \label{defn:learnergraph}
A learner graph is an undirected graph in which vertices are learners,
 each labeled with the conditions under which they must
 terminate~(\cref{sec:termination} formally defines
 \textit{termination}).
Each pair of learners is connected by an edge, labeled with the
 conditions under which those learners must
 agree~(\cref{sec:agreement} formally defines \textit{agreement}).
\end{definition}

\subsection{Quorums}
\label{sec:quorums}
A \textit{quorum} is a set of acceptors sufficient to make a learner decide: even if everything else has crashed~\cite{LAMP}, if a quorum are behaving correctly, a learner will eventually decide.
In a learner graph, each learner $\red a$ is labeled with a set of quorums $\red{Q_a}$.
The learner requires termination precisely when at least one quorum are all live. 
\ifreport
\begin{definition}[Quorum]
\label{defn:quorum}
A quorum $\red{q_a}$ is a set of acceptors.
\end{definition}
\fi

Within a specific execution, we assume some (unknown) set of
pre-determined acceptors are actually \textit{live}. We call this set
\reallylive.

\subsection{Safe Sets}
\label{sec:safesets}
To characterize the conditions under which two learners want to agree, we need to express all possible failures they anticipate. 
Surprisingly, crash failures cannot cause disagreement: any disagreement that occurs when some acceptor has crashed could also occur if the same acceptor were correct, but very slow, and did not act until after the learner decided.
Therefore, for agreement purposes, each tolerable failure scenario is
characterized by a \textit{safe set} (usually written $\red s$), the set of acceptors who are
\textit{safe}, meaning they act only according to the protocol.
\ifreport
\begin{definition}[Safe Set]
\label{defn:safeset}
A safe set $\red s$ is a set of acceptors.
\end{definition}
\fi
Between any pair of learners $\red a$ and $\blue b$ in the learner
graph, we label the edge  between them with a set of safe sets
$\edge{\red a}{ \blue b}$: so long as one of the safe sets in
$\edge{\red a}{ \blue b}$ indeed comprises only safe acceptors, the learners demand agreement.

Within a specific execution, we assume some (unknown) set of
pre-determined acceptors are actually \textit{safe}. We call this set
\reallysafe.
We do not require it, but systems often assume that
 $\reallysafe \subseteq \reallylive$, since a \byzantine
acceptor~\cite{Lamport82} may choose not to send messages.
\ifreport
Flexible BFT~\cite{Malkhi2019} explores ``alive-but-corrupt''
failures, which we would characterize as 
\fi

\subsubsection{Subset of Tolerable Failures}
We generally assume that a subset of tolerable failures is always tolerated:
\begin{assumption}\label{defn:subsetfailures}
Subset of failures properties:
\ifreport\[\else$\fi
\begin{array}{r c l c l}
  \forall \orange \ell
  &\hspace{-2mm}.&  \hspace{-2mm}
  \red{q_a} \in \red{Q_a}
  &\hspace{-2mm}\Rightarrow&\hspace{-2mm}
  \orange \ell \cup \red{q_a} \in \red{Q_a}
\\
  \forall \green x
  &\hspace{-2mm}.&\hspace{-2mm}
  s \in \edge{\red a}{\blue b}
  &\hspace{-2mm}\Rightarrow&\hspace{-2mm}
  \green x \cup s \in \edge{\red a}{\blue b}
\end{array}
\ifreport\]\else$\fi
\end{assumption}
One might imagine, for example, two learners who demand agreement if
 two acceptors fail, but not if only one acceptor fails.
However, we have no guarantee on time: if two acceptors are indeed
 faulty, one might act normally for an indefinite time, so the system
 would act as though only one has failed, and we will have to
 guarantee agreement.

\subsubsection{Generalized Learner Graph Labels}
It is possible to generalize the labels of learners and learner graph
 edges, and characterize quorums
 (conditions under which a learner must terminate)
 and safe sets
 (conditions under which pairs of learners must agree)
 as more detailed formal models
 (e.g., modeling network synchrony failures).
All consensus failure models of which we
 are aware can be formalized using learner graphs with generalized labels.
Heterogeneous Paxos works with any model of
 labels, so long as each label can be mapped
 (not necessarily uniquely) to a set of quorums for each learner, and
 a set of safe sets for each edge.
For simplicity, in this work, we define labels as a set of quorums for
 each learner, and a set of safe sets for each edge.

\subsection{Example}
\label{sed:clgexample}
\begin{figure}
\centering
\vspace{-35mm}
\begin{tikzpicture}[scale=1]
\pgfdeclareimage[width=10mm]{eye-blue}{figures/icons/eye-icon-custom-larger-darkblue.pdf}
\pgfdeclareimage[width=10mm]{eye-red}{figures/icons/eye-icon-custom-outlinered.pdf}
\pgfdeclareimage[width=4mm]{skull}{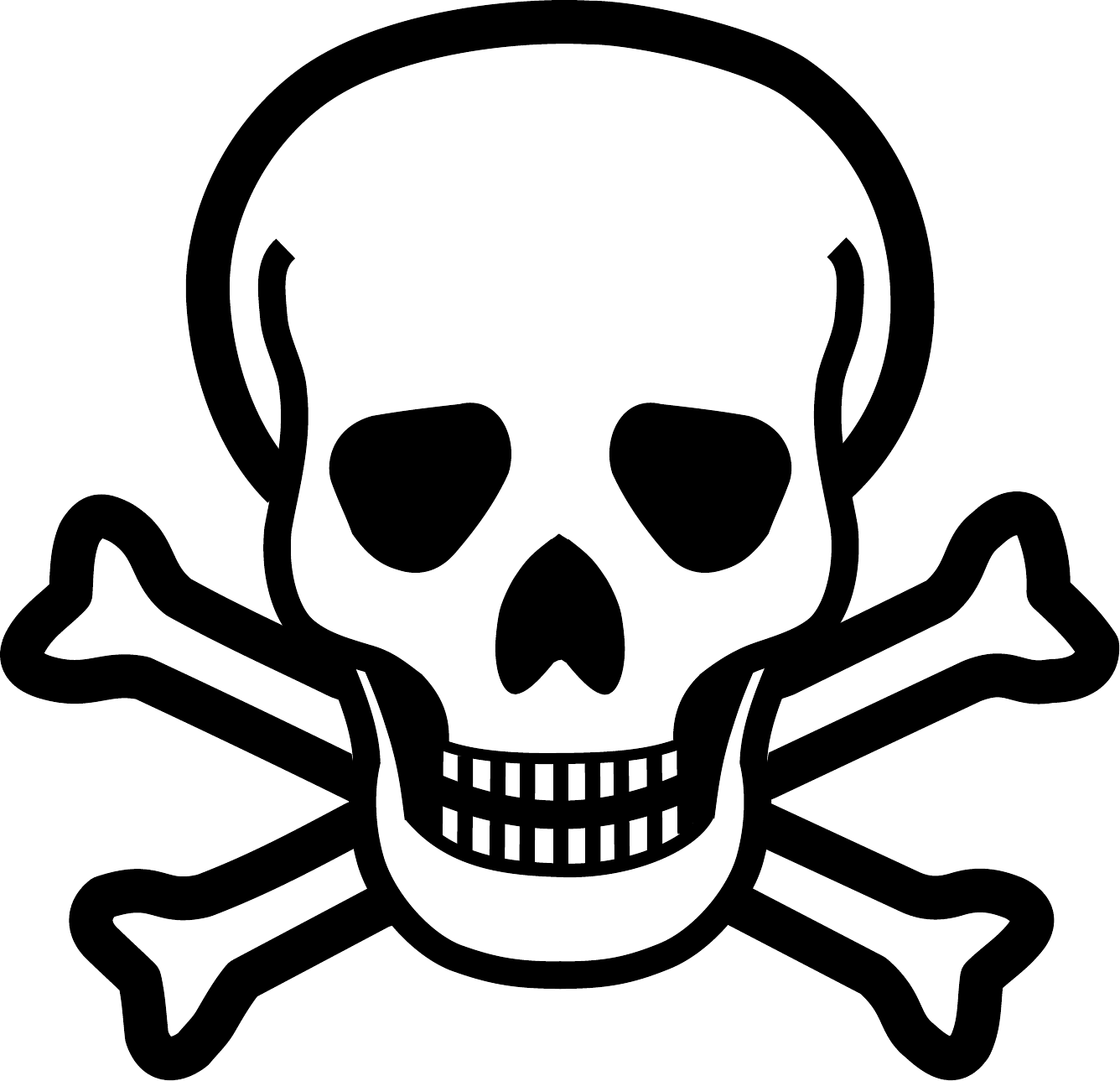}
\pgfdeclareimage[width=5mm]{devil}{figures/icons/devil.pdf}
\pgfdeclareimage[width=13mm]{label}{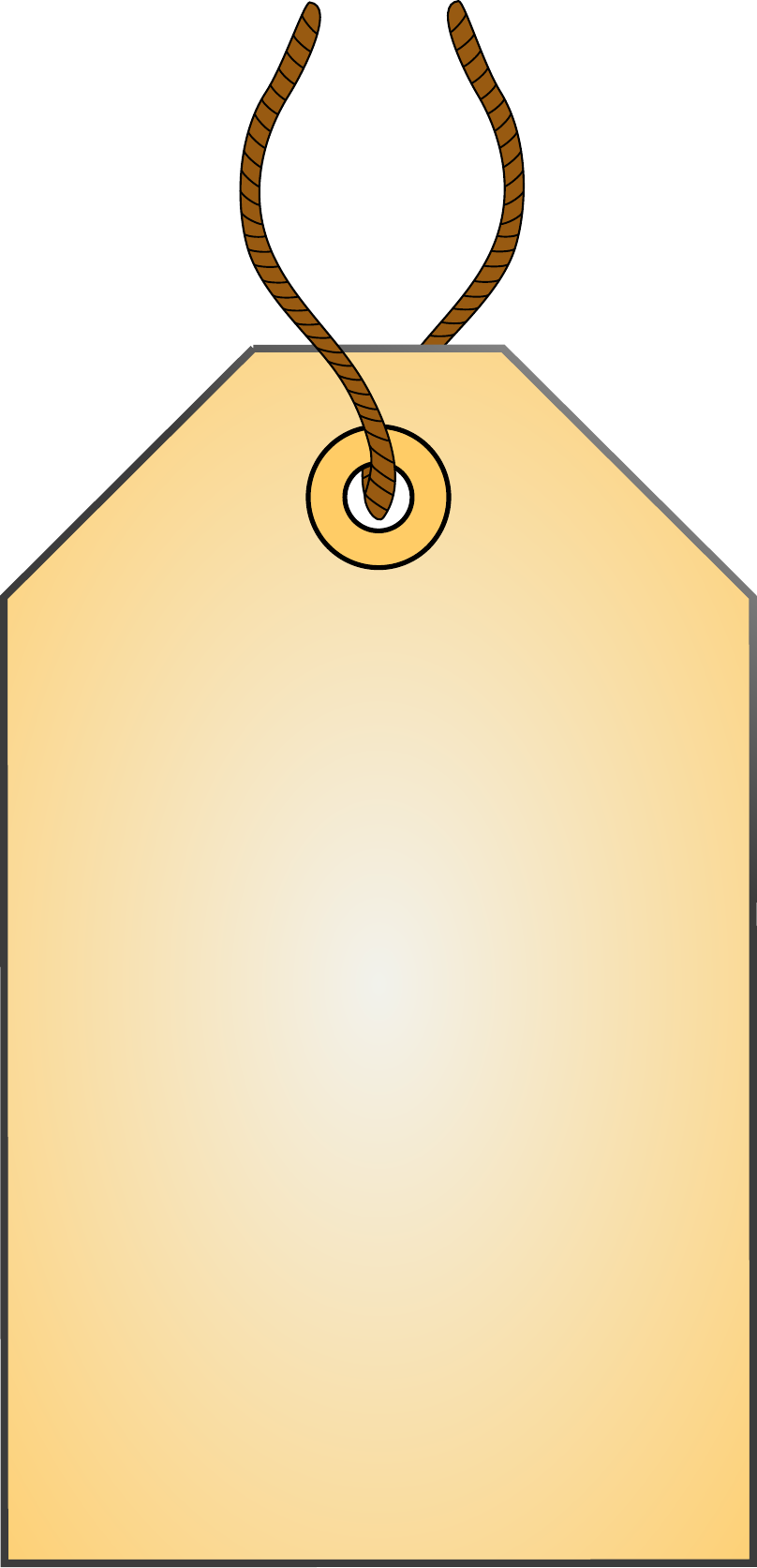}
\pgfdeclareimage[width=16mm]{label-sideways}{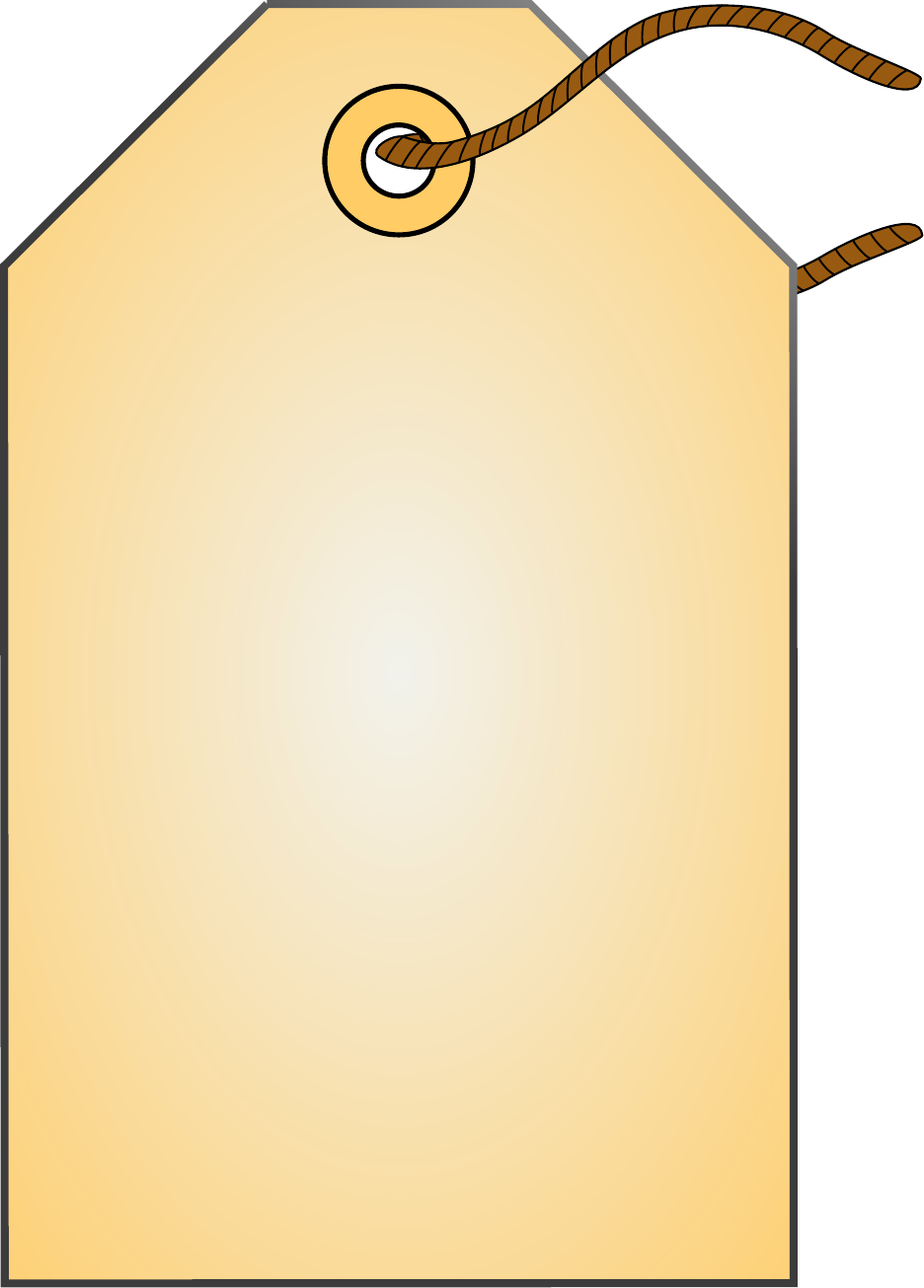}

\node[](label0) at (-2.3,5.0) {
  \begin{pgfrotateby}{\pgfdegree{270}}\pgfuseimage{label}\end{pgfrotateby}};
\introexamplenodes{-1.5}{2.55}
\introexamplefault{-1.5}{2.55}{1}{skull}
\introexamplefault{-1.5}{2.55}{2}{skull}
\introexamplefault{-1.5}{2.55}{3}{skull}
\introexamplefault{-1.5}{2.55}{6}{skull}
\introexamplefault{-1.5}{2.55}{9}{skull}

\node[](label0) at (-2.3,2.0) {
  \begin{pgfrotateby}{\pgfdegree{270}}\pgfuseimage{label}\end{pgfrotateby}};
\introexamplenodes{-1.5}{-0.45}
\introexamplefault{-1.5}{-0.45}{1}{skull}
\introexamplefault{-1.5}{-0.45}{2}{skull}
\introexamplefault{-1.5}{-0.45}{3}{skull}
\introexamplefault{-1.5}{-0.45}{6}{skull}
\introexamplefault{-1.5}{-0.45}{9}{skull}

\node[]() at (6.15,3.7) {
  \begin{pgfrotateby}{\pgfdegree{90}}\pgfuseimage{label}\end{pgfrotateby}};
\introexamplenodes{5}{2.55}
\introexamplefault{5}{2.55}{1}{skull}
\introexamplefault{5}{2.55}{4}{skull}
\introexamplefault{5}{2.55}{8}{skull}
\introexamplefault{5}{2.55}{7}{skull}
\introexamplefault{5}{2.55}{9}{skull}

\node[]() at (6.15,0.7) {
  \begin{pgfrotateby}{\pgfdegree{90}}\pgfuseimage{label}\end{pgfrotateby}};
\introexamplenodes{5}{-0.45}
\introexamplefault{5}{-0.45}{1}{skull}
\introexamplefault{5}{-0.45}{4}{skull}
\introexamplefault{5}{-0.45}{8}{skull}
\introexamplefault{5}{-0.45}{7}{skull}
\introexamplefault{5}{-0.45}{9}{skull}

\node[](blue0) at (0,0) {\pgfuseimage{eye-blue}};
\node[](blue1) at (0,3) {\pgfuseimage{eye-blue}};
\node[](label0) at (-1.9,3.5) {
  \begin{pgfrotateby}{\pgfdegree{270}}\pgfuseimage{label}\end{pgfrotateby}};
\draw[line width=3mm] (blue0) -- (blue1);

\introexamplenodes{-1.25}{1.05}

\node[](LDemonBlack0) at (-1.7,1.95) {\pgfuseimage{devil}};
\node[](LDemonBlack0) at (-1.7,1.5) {\pgfuseimage{devil}};
\node[](LDemonBlack0) at (-1.7,1.05) {\pgfuseimage{devil}};
\node[](LDemonRed0) at (-1.25,1.95) {\pgfuseimage{devil}};
\node[](LDemonRed1) at (-1.25,1.5) {\pgfuseimage{devil}};
\node[](LDemonRed2) at (-1.25,1.05) {\pgfuseimage{devil}};

\node[](red0) at (2.5,0) {\pgfuseimage{eye-red}};
\node[](red1) at (2.5,3) {\pgfuseimage{eye-red}};
\draw[line width=3mm] (blue0) -- (red1);
\draw[line width=3mm] (blue1) -- (red0);
\draw[line width=3mm] (blue0) -- (red0);

\node[xscale=-1](label1) at (1.25,1.8) {\pgfuseimage{label}};

\introexamplenodes{1.71}{0.85}


\draw[line width=3mm] (blue1) -- (red1);

\node[](label2) at (5.8,2.2) {
  \begin{pgfrotateby}{\pgfdegree{90}}\pgfuseimage{label}\end{pgfrotateby}};
\introexamplenodes{4.7}{1.05}

\node[](RDemonBlack2) at (4.25,1.05) {\pgfuseimage{devil}};
\node[](RDemonBlack2) at (4.25,1.5) {\pgfuseimage{devil}};
\node[](RDemonBlack2) at (4.25,1.95) {\pgfuseimage{devil}};
\node[](RDemonBlue0) at (3.8,1.95) {\pgfuseimage{devil}};
\node[](RDemonBlue1) at (3.8,1.5) {\pgfuseimage{devil}};
\node[](RDemonBlue2) at (3.8,1.05) {\pgfuseimage{devil}};
\draw[line width=3mm] (red0) -- (red1);
\end{tikzpicture}
\caption{
 Learner Graph from~\cref{sed:clgexample}:
Learners are eyes, with darker \blue{blue learners} on the
  left, and outlined \red{red learners} on the right.
Edge labels display \textit{one} safe set for which the learners want to agree (unsafe acceptors are marked with a devil).
The center label represents all edges between \red{red} and \blue{blue} learners.
Learner labels display \textit{one} quorum for which the learner wants to terminate (crashed acceptors are marked with a skull).
In each label, \blue{blue acceptors} are blue circles, \red{red acceptors} are red, outlined circles, and third-party acceptors are black circles.
}
\label{fig:examplecog}
\end{figure}

Consider our example from~\cref{sec:introexample} and~\cref{fig:example}.
All learners want to agree when all acceptors are safe.
However, each learner demands termination
 (it must eventually decide on a value) even when one of its own
  acceptors has crashed, and one third part as well as all the other
  organization's acceptors have failed as well.
Furthermore, learners  within the same organization expect \textit{never}
 to disagree, so long as none of their own organization's
 acceptors are \byzantine: neither organization tolerates the
 other, or third-party acceptors, creating internal disagreement.
In~\cref{fig:examplecog}, we diagram the learner graph.
For space reasons, we draw each label with only \textit{one} quorum or
 \textit{one} safe set.

\subsection{Agreement is Transitive and Symmetric}
\label{sec:undirected}
\label{sec:transitivity}
Agreement~(formally defined in~\cref{sec:agreement}) is symmetric, so learner graphs are
 undirected ($\edge{\red a}{\blue b} = \edge{\blue b}{\red a}$).
Agreement is also transitive:
 if $\red a$ agrees with $\blue b$ and $\blue b$ agrees with
 $\green c$, then $\red a$ agrees with $\green c$.
As a result, $\red a$ and $\green c$ must agree whenever both the
 conditions $\edge{\red a}{\blue b}$ and $\edge{\blue b}{\green c}$
 are met.
When learners' requirements reflect this assumption, we call the
 resulting learner graph \textit{condensed}.
\ifreport\else
We describe how to condense a learner graph
 in \cref{sec:condensed}.
\fi
\begin{definition}[Condensed Learner Graph $\p\clg$]
  \label{defn:condensed}
A learner graph $G$ is \textit{condensed} iff:
$
 \forall \red a, \blue b, \green c .\ 
 \p{\edge{\red a}{\blue b} \cap \edge{\blue b}{\green c}}
 \subseteq
 \edge{\red a}{\green c}
$
\end{definition}

\ifreport
\subsection{Condensed Learner Graph \p\clg}
\label{sec:condensed}
A \textit{Condensed Learner Graph $\p\clg$} represents all the
 conditions under which each pair of learners must actually
 have agreement, after transitivity is taken into account.
As a result of the transitive property of agreement, for any learner
 graph  $G$ featuring principals $\red a$, $\blue b$, and $\green c$,
 we can create a new learner graph $G^\prime$, in which
 $\edge{\red a}{\green c}$ is replaced by:
\[
\edge{\red a}{\green c} \cup
\p{\edge{\red a}{\blue b}
   \cap
   \edge{\blue b}{\green c}}
\]

Starting with learner graph $G$, applying this operation
 repeatedly results in a Condensed Learner Graph \p\clg:
 a learner graph in which this operation doesn't change anything
 when applied to any pair of adjacent edges.\footnote{
  With the Floyd-Warshall algorithm~\cite{Floyd1962}, this can be done
   in $O\p{\abs{G}^3}$ time.}

Given \cref{defn:subsetfailures}, there is an equivalent
 expression in terms of unions of safe sets:

\begin{lemma}\label{lemma:equivcondensed}
\[
  \p{
     \p{\edge{\red a}{\blue b} \cap \edge{\blue b}{\green c}}
     \subseteq
     \edge{\red a}{\green c} }
  \Leftrightarrow
  \p{
      \purple s \in \edge{\red a}{\blue b}
    \land
      \orange{s^\prime} \in \edge{\blue b}{\green c}
    \Rightarrow
      \purple s \cup \orange{s^\prime} \in \edge{\red a}{\green c}
  }
\]
\end{lemma}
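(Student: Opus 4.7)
The plan is to prove both directions directly, with the forward direction relying crucially on~\cref{defn:subsetfailures} (closure of each edge's label under supersets), and the backward direction being essentially immediate.

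For the forward direction ($\Rightarrow$), I assume $\p{\edge{\red a}{\blue b} \cap \edge{\blue b}{\green c}} \subseteq \edge{\red a}{\green c}$ and fix arbitrary $\purple s \in \edge{\red a}{\blue b}$ and $\orange{s^\prime} \in \edge{\blue b}{\green c}$. The goal is to exhibit a single safe set that lies in \emph{both} edge labels, so that the hypothesis carries it to $\edge{\red a}{\green c}$. The natural candidate is $\purple s \cup \orange{s^\prime}$. Applying~\cref{defn:subsetfailures} to $\purple s \in \edge{\red a}{\blue b}$ with the extra acceptors $\orange{s^\prime}$ shows $\purple s \cup \orange{s^\prime} \in \edge{\red a}{\blue b}$; applying it symmetrically to $\orange{s^\prime} \in \edge{\blue b}{\green c}$ with the extra acceptors $\purple s$ shows $\purple s \cup \orange{s^\prime} \in \edge{\blue b}{\green c}$. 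Hence $\purple s \cup \orange{s^\prime}$ lies in the intersection, and by the hypothesis in $\edge{\red a}{\green c}$, as required.

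For the backward direction ($\Leftarrow$), I take any $t \in \edge{\red a}{\blue b} \cap \edge{\blue b}{\green c}$ and instantiate the RHS with $\purple s = \orange{s^\prime} = t$; since $t \cup t = t$, this gives $t \in \edge{\red a}{\green c}$ directly, establishing the subset inclusion. This direction does not need~\cref{defn:subsetfailures} at all.

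The only conceptual step is recognizing that, because safe sets are closed upward under union with arbitrary acceptors, two individually-labeled safe sets for adjacent edges can be ``merged'' into a common safe set that witnesses membership in both. The main obstacle, if any, is simply making explicit that \cref{defn:subsetfailures} is what licenses this union-closure; once that is noted, both implications are one-liners.
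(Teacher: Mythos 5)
Your proposal is correct and follows essentially the same route as the paper's proof: the forward direction uses \cref{defn:subsetfailures} to place $\purple s \cup \orange{s^\prime}$ in both $\edge{\red a}{\blue b}$ and $\edge{\blue b}{\green c}$ and hence in the intersection, and the backward direction instantiates both safe sets with the same element of the intersection. If anything, your write-up of the backward direction is slightly cleaner than the paper's, whose final displayed inclusion is stated with the containment symbol reversed (an apparent typo), whereas you state the intended conclusion $\edge{\red a}{\blue b} \cap \edge{\blue b}{\green c} \subseteq \edge{\red a}{\green c}$ correctly.
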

\begin{proof}
First, we show:
\[
  \p{
     \p{\edge{\red a}{\blue b} \cap \edge{\blue b}{\green c}}
     \subseteq
     \edge{\red a}{\green c} }
  \Rightarrow
  \p{
      \purple s \in \edge{\red a}{\blue b}
    \land
      \orange{s^\prime} \in \edge{\blue b}{\green c}
    \Rightarrow
      \purple s \cup \orange{s^\prime} \in \edge{\red a}{\green c}
  }
\]

By \cref{defn:subsetfailures}:
\[
  \andlinesTwo{
    \purple s \in \edge{\red a}{\blue b} \Rightarrow
    \purple s \cup \orange{s^\prime}
    \in
    \edge{\red a}{\blue b}
  }{
    \orange{s^\prime} \in \edge{\blue b}{\green c} \Rightarrow
    \purple s \cup \orange{s^\prime}
    \in
    \edge{\blue b}{\green c}
  }
\]

Therefore: $
    \purple s \in \edge{\red a}{\blue b}
  \land
    \orange{s^\prime} \in \edge{\blue b}{\green c} \Rightarrow
    \purple s \cup \orange{s^\prime}
    \in
    \edge{\red a}{\blue b} \cap \edge{\blue b}{\green c}
  $.
Given $
     \p{\edge{\red a}{\blue b} \cap \edge{\blue b}{\green c}}
     \subseteq
     \edge{\red a}{\green c}
     $:
\[
      \purple s \in \edge{\red a}{\blue b}
    \land
      \orange{s^\prime} \in \edge{\blue b}{\green c}
    \Rightarrow
      \purple s \cup \orange{s^\prime} \in \edge{\red a}{\green c}
\]

\noindent Second, we show
\[
  \p{
      \purple s \in \edge{\red a}{\blue b}
    \land
      \orange{s^\prime} \in \edge{\blue b}{\green c}
    \Rightarrow
      \purple s \cup \orange{s^\prime} \in \edge{\red a}{\green c}
  }
  \Rightarrow
  \p{
     \p{\edge{\red a}{\blue b} \cap \edge{\blue b}{\green c}}
     \subseteq
     \edge{\red a}{\green c} }
\]
For any $\purple s
         \in
         \edge{\red a}{\blue b}\cap\edge{\blue b}{\green c}$,
it certainly holds that
\[
\andlinesTwo
{\purple s \in \edge{\red a}{\blue b}}
{\purple s \in \edge{\blue b}{\green c}}
\]
Given $
  \purple s \in \edge{\red a}{\blue b}
\land
  \orange{s^\prime}
  \in
  \edge{\blue b}{\green c}
\Rightarrow
  \purple s \cup \orange{s^\prime}
  \in
  \edge{\red a}{\green c}$, 
it follows that
$\edge{\red a}{\blue b}\cap\edge{\blue b}{\green c}
 \supseteq
 \edge{\red a}{\green c}$.
\end{proof}

\else
\fi

\ifreport
\subsection{Self-Edges}
\else
\noindent \textbf{Self-Edges:}
\fi
A \clg describes when a learner $\red a$
 agrees with itself (i.e., if it decides twice, both decisions must
 have the same value):
 $\edge{\red a}{\red a}$.
\begin{lemma}[Self-agreement]
  \label{lemma:selfedge}
   A learner must agree with itself in order to agree with anyone:
  $
    \edge{\red a}{\blue b} \subseteq \edge{\red a}{\red a}
  $
\end{lemma}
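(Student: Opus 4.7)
The plan is to derive the inclusion from two properties already in place: the symmetry of learner graph edges (established in \cref{sec:undirected}) and the transitivity property that defines a condensed learner graph (\cref{defn:condensed}). The whole lemma is essentially ``apply transitivity with $\green c := \red a$, then use symmetry to collapse the intersection.''

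Concretely, first I would instantiate \cref{defn:condensed} with the triple $\red a, \blue b, \red a$ (i.e.\ take $\green c = \red a$). This immediately gives
\[
  \p{\edge{\red a}{\blue b} \cap \edge{\blue b}{\red a}} \subseteq \edge{\red a}{\red a}.
\]
Next I would invoke undirectedness, $\edge{\red a}{\blue b} = \edge{\blue b}{\red a}$, to rewrite the intersection as $\edge{\red a}{\blue b} \cap \edge{\red a}{\blue b} = \edge{\red a}{\blue b}$. Substituting back yields $\edge{\red a}{\blue b} \subseteq \edge{\red a}{\red a}$, which is the claim.

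The only subtlety is that \cref{defn:condensed} is stated as a property that a learner graph \emph{may or may not} satisfy, so I would first note explicitly that the lemma is about condensed learner graphs (consistent with the preceding subsection, which argues that we work with \clg's because agreement is transitive). If one prefers to avoid assuming condensation, the same conclusion follows directly from the semantic transitivity argument sketched in \cref{sec:transitivity}: any safe set $s \in \edge{\red a}{\blue b}$ forces $\red a$ to agree with $\blue b$, and by symmetry $s \in \edge{\blue b}{\red a}$ forces $\blue b$ to agree with $\red a$; composing these two agreement obligations forces $\red a$ to agree with $\red a$ whenever $s$ holds, so $s \in \edge{\red a}{\red a}$.

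I do not expect any real obstacle here; the main thing to get right is simply being explicit that transitivity is applied with $\green c = \red a$ and that undirectedness is what makes the two sides of the intersection identical. No appeal to \cref{defn:subsetfailures} or to \cref{lemma:equivcondensed} is needed, since we are not reasoning about unions of distinct safe sets, only about a single safe set on both edges.
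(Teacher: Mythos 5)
Your proposal is correct and matches the paper's own argument: the paper's proof is exactly ``follows from \cref{defn:condensed} and the fact that the \clg is undirected,'' which is the instantiation $\green c = \red a$ plus symmetry that you spell out. You have simply made the one-line proof explicit; no further comparison is needed.
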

\begin{proof}
Follows from \cref{defn:condensed},
 and the fact that the \clg is undirected~(\cref{sec:undirected})
\end{proof}

\subsection{Liveness Bounds from Safety}
Given the conditions under which learners want to agree, we can derive a (sufficient) bound on the quorums they require to terminate. 
In other words, given labels for the edges in the learners graph, we can bound the labels for the vertices.

As we will cover in more detail in~\cref{sec:valid}, each of a learner's quorums must intersect its neighbors quorums at a safe acceptor.
As a result, we can construct a sufficient set of quorums for each learner in a \clg as follows: for each edge of the learner, each quorum includes a majority of acceptors from a each of the safety sets.

\subsection{Safety Bounds from Liveness}
Given the conditions under which learners want to terminate, we can derive a (necessary) bound on the safe sets they can require on each of their edges.
As we will cover in more detail in~\cref{sec:valid}, each of a learner's quorums must intersect its neighbors quorums at a safe acceptor.
As a result, safe sets can be assembled for each edge in a \clg as follows: each set includes one acceptor from the intersection of each pair of quorums (one from each learner).

\section{Heterogeneous Consensus}
\label{sec:heterogeneousconsensus}
We now define our novel heterogeneous generalization of traditional
 consensus properties.

\subsection{Validity}
\label{sec:validity}
Intuitively, a consensus protocol shouldn't allow learners to 
 always decide some predetermined value.
Validity is the same in heterogeneous and homogeneous settings.

\begin{definition}[Heterogeneous Validity]
  \label{defn:validity}
  \ 
  \begin{itemize}
    \item A consensus execution is \textit{valid} if all
           values learners decide were proposed in that
           execution.
    \item A consensus protocol is \textit{valid} if all possible
           executions are valid.
  \end{itemize}
\end{definition}

\subsection{Agreement}
\label{sec:agreement}
Our generalization of Agreement from the homogeneous setting to a
 heterogeneous one is the key insight that makes our conception of
 heterogeneous consensus possible.
It generalizes not only the traditional homogeneous approach, but also the
 ``intact nodes'' concept from Stellar~\cite{mazieresstellar}, and
 ``linked nodes'' from Cobalt~\cite{cobalt}.

\begin{definition}[Entangled]
  \label{defn:entangled}
  In an execution, two learners are \textit{entangled} if their
   failure assumptions matched the 
   failures that actually happen:
  $
    \entangled{\red a}{\blue b} \triangleq
    \reallysafe \in \edge{\red a}{\blue b}
  $
\end{definition}
\noindent
In the example~(\cref{sec:introexample}), if one third-party acceptor
 were \byzantine, the \blue{blue learners} would be entangled
 with each other, and similarly with the \red{red learners},
 but no \blue{blue learners} would be entangled with \red{red learners}.
It is possible for failures to divide the learners into separate
 groups, which may then decide different values
 even if they agree among themselves.  

\begin{definition}[Heterogeneous Agreement]
  \label{defn:agreement}
  \ 
  \begin{itemize}
    \item Within an execution, two learners have \textit{agreement}
           if all decisions for
           either learner have the same value.
    \item A heterogeneous consensus protocol has \textit{agreement}
           if, for all possible executions of that protocol, all
           entangled pairs of learners have agreement.
  \end{itemize}
\end{definition}

In Heterogeneous Paxos, as in many other protocols, learners decide on
 a value whenever certain conditions are met for that value: learners
 can even decide multiple times.
If there aren't too many failures, a learner is guaranteed to decide
 the same value every time.
 Because learners send no messages, they cannot  \textit{fail}, but they can
 make incorrect assumptions.
Within the context of an execution, entanglement neatly defines when a
 learner is \textit{accurate}, meaning it cannot decide different
 values.
\begin{definition}[Accurate Learner]
  \label{defn:accurate}
  \ifreport An accurate learner \fi is entangled with itself:
  \ifreport\[\else$\fi
    \accurate{\red a} \triangleq \entangled{\red a}{\red a}
  \ifreport\]\else$\fi
\end{definition}
In the example~(\cref{sec:introexample}), if one third-party acceptor
 were \byzantine, then the \blue{blue} and \red{red} learners would
 be accurate, but if a \blue{blue acceptor} were also \byzantine,
 the \blue{blue learners} would not be accurate (although the
 \red{red learners} would still be accurate).

\subsection{Termination}
\label{sec:termination}
Termination has no well agreed-upon definition for the heterogeneous
 setting, as it does not generalize easily from the homogeneous one.
A heterogeneous consensus protocol is specified in terms of the
 (possibly differing) conditions under which each learner is
 guaranteed termination~(\cref{sec:learnergraph}).
\ifblinded
For example,
Sheff et al.\@~\cite{hetconstechreport} distinguish between
 ``gurus,'' learners with accurate failure assumptions, and
 ``chumps,'' who hold inaccurate assumptions;
\else
For example, in our prior work on Heterogeneous Fast Consensus, we
 distinguish between ``gurus,'' learners with accurate failure assumptions,
 and ``chumps,'' who hold inaccurate
 assumptions~\cite{hetconstechreport};
\fi
Stellar calls them ``intact'' and ``befouled''~\cite{mazieresstellar}.
When discussing termination properties, we use the following
 terminology:

\begin{definition}[Termination]
  \label{defn:termination}
  \ 
  \begin{itemize}
    \item Within an execution, a learner has termination if it
           eventually decides.
    \item A heterogeneous consensus protocol has \textit{termination}
           if, for all possible executions of that protocol, all
           learners with a safe and live quorum have termination.
  \end{itemize}
\end{definition}

\noindent
Protocols can only guarantee termination under specific network
 assumptions, and varying notions of
 ``eventually''~\cite{Fischer82b,paxos-made-simple,Miller2016}.
Following in the footsteps of
 Dwork et al.~\cite{Dwork1988}, Heterogeneous Paxos
 guarantees Validity and Agreement in a fully asynchronous
 network, and termination in a partially synchronous
 network (\cref{defn:networkassumption}).
Furthermore, as in all other consensus protocols, if there are too
many acceptor failures, some learners may not terminate.
Specifically, a learner will decide (terminate) if at least one of its quorums is live.
\begin{definition}[Terminating Learner]
  \label{defn:terminating}
  \ifreport A terminating learner \fi has a live, safe quorum:
  \ifreport\[\else$\fi
  \terminating{\red a} \triangleq \reallylive\cup\reallysafe \in \red{Q_a}
  \ifreport\]\else$\fi
\end{definition}

\begin{figure}
\newsavebox\kmsub
\savebox\kmsub{\scriptsize\ttfamily known\char95messages}
\newcommand\Decision{\mathit{Decision}}
\newcommand\WellFormed{\mathit{WellFormed}}

\begin{minipage}[t]{0.63\columnwidth}
\begin{lstlisting}[style=pseudocode]
@\textit{\large acceptor\_initial\_state}@:
 known_messages = @$\cb{}$@
 recently_received = @$\cb{}$@

@\textit{\large acceptor\_on\_receipt}@(@$\red m$@):
 for @$\green r \in \red{m.refs}$@:
  while @$\green r ∉\,$@known_messages:
   wait()
 atomic:
  if @$\red m∉\,$@known_messages:
   forward @$\red m$@ to all acceptors and learners
   recently_received @$\cup$@= @$\cb{\red m}$@
   known_messages @$\cup$@= @$\cb{\red m}$@
   if @$\red m$@ has type 1a:
     @$\green z$@ = new 1b(refs = recently_received)
     recently_received = @$\cb{}$@
     on_receipt(@$\green z$@)
   if @$\red m$@ has type 1b and @$\ba{\red m}$@ == max@$_{x∈\usebox\kmsub}\,\ba{\green x}$@
     @\hetdifftext{\blue{for} learner $\in$ learners:}@
       @$\green z$@ = new 2a(refs = recently_received@\hetdifftext{, lrn = learner}@)
       if @$\WellFormed(\green z)$@:
         recently_received = @$\cb{}$@
         on_receipt(@$\green z$@)
\end{lstlisting}
\end{minipage}
\begin{minipage}[t]{0.37\columnwidth}
\begin{lstlisting}[style=pseudocode]
@\textit{\large learner\_initial\_state}@:
  known_messages = @$\cb{}$@

@\textit{\large learner\_on\_receipt}@(@$\red m$@):
 for @$\green r \in \red{m.refs}$@:
  while @$\green r$ ∉\,@known_messages:
   wait()
 known_messages @$\cup$@= @$\cb{\red m}$@
 for S @$\subseteq$@ known_messages:
  if @${\decision{\hetdifftext{\blue{\texttt{self}}}}{{\texttt{S}} \cup \cb{\red m}}}$@:
    decide(@$\va{\red m}$@)
\end{lstlisting}
\end{minipage}
\vspace{-5mm}
\caption{Pseudo-code for Acceptor (left) and Learner (right).
\cref{sec:hetcons} defines 
message structure~(\cref{sec:messaging}),
$WellFormed$~(\cref{defn:wellformedness}),
$\ba{}$~(\cref{defn:b}), 
$\va{}$~(\cref{defn:va}),
and $\decision{}{}$~(\cref{defn:decision}).  
}
\label{fig:pseudocode}
\end{figure}

\section{Heterogeneous Paxos} 
\label{sec:hetcons}
Heterogeneous Paxos is a consensus protocol~(\cref{sec:consensus})
 based on Byzantine Paxos, Lamport's
 \byzantine-fault-tolerant~\cite{Lamport82} variant of
 Paxos~\cite{paxos, paxos-made-simple} using a simulated
 leader~\cite{byzantizing-paxos}.
This protocol is conceptually simpler than 
 \textit{Practical Byzantine Fault Tolerance}~\cite{osdi99}.
When all learners have the same failure assumptions, Heterogeneous Paxos
 is \textit{exactly} Byzantine Paxos.

Byzantine Paxos was originally written as a sequence of changes from
 crash-tolerant Paxos~\cite{byzantizing-paxos,paxos}.
We were able to construct a complete version of Byzantine Paxos in
 such a way that we could describe Heterogeneous Paxos with only a few
 additions, \hetdifftext{highlighted in pale blue}.
To our knowledge, without the portions
 \hetdifftext{highlighted in pale blue} this is also the most direct
 description of the Byzantine Paxos via Simulated Leader protocol in
 the literature. 
\Cref{fig:pseudocode} presents pseudocode for Heterogeneous Paxos
 acceptors and learners.

\ifreport
\subsection{Overview}
\fi
\label{sec:overview}
Informally, Heterogeneous Paxos proceeds as a series of
 (possibly overlapping) \textit{phases} corresponding to three types
 of messages, traditionally called
 \textit{1a}, \textit{1b}, and \textit{2a}:
\begin{itemize}
  \item Proposers send \textit{1a} messages, each carrying a value and
         unique \emph{ballot number} (stage identifier), to
         acceptors. 
  \item Acceptors send \textit{1b} messages to each other to
         communicate that they've received a \textit{1a}
         (line 15 of~\cref{fig:pseudocode}).
  \item When an acceptor receives a \textit{1b} message for the
         highest ballot number it has seen from a
         learner \red a's \textit{quorum} of acceptors,  it sends a
         \textit{2a} message labeled with \red a and that ballot
         number (line 20 of~\cref{fig:pseudocode}).
        There is one exception ($WellFormed$ in~\cref{fig:pseudocode}): 
         once a safe acceptor sends a \textit{2a} message
         \green m for a learner \red a, it never sends a
         \textit{2a} message with a different value for a learner
         \blue b, unless:
        \begin{itemize}
          \item It knows that a quorum of acceptors has seen
                 \textit{2a} messages with learner \red a and 
                 ballot number higher than \green m.
          \item Or it has seen \byzantine behavior that proves
                \red a and \blue b do not have to agree.
        \end{itemize}
  \item A learner \red a \textit{decides} when it receives
         \textit{2a} messages with the same ballot number from one of its
         quorums of acceptors (line 11 on the right of~\cref{fig:pseudocode}).
\end{itemize}
Proposers can restart the protocol at any time, with a new ballot
 number.
Acceptor and Learner behavior in Heterogeneous Paxos is described
 in~\cref{fig:pseudocode}.
We now describe their sub-functions, including message
 construction~(\cref{sec:messaging}),
 $WellFormed$~(\cref{defn:wellformedness}),
 $\ba{}$~(\cref{defn:b}), 
 $\va{}$~(\cref{defn:va}),
 and $\decision{}{}$~(\cref{defn:decision}).  

\textbf{Key Insight:}
Intuitively, Heterogeneous Paxos operates much like Byzantine Paxos,
 except that all acceptors execute the final phase separately for each
 learner.
The shared phases allow learners to agree when possible, while the
 replicated final phase allows different learners to decide under
 different conditions.
\ifreport
\subsection{Assumptions and Definitions}
\label{sec:heterogeneousassumptions}
Additional assumptions and definitions are needed to precisely define
 Heterogeneous Paxos:
\begin{itemize}
\item \textbf{Acceptors} are servers which both send and receive
               messages as part of the
               protocol~\cite{byzantizing-paxos}.
\item \textbf{Proposers} are machines that propose
              potential values for consensus~\cite{byzantizing-paxos}.
\item \textbf{Learners} receive messages from the acceptors.
              Learners \textit{decide} on
               values~\cite{byzantizing-paxos}.
\item Acceptors can exchange \textbf{messages} over the
       network, and \textbf{digital signatures} make messages
       unforgeable.
\item Messages can \textbf{reference} other messages
       \textit{by collision-resistant hash}: if one message contains a
       hash of another, it uniquely identifies the message it is
       referencing~\cite{collision-resistance}. 
\item \textbf{Safe} acceptors will not send messages unless required
        by the protocol, and will not send messages in any order other
        than the order specified by the protocol.
\item \textbf{Live} acceptors eventually send any message required by
        the protocol.
\item Any message from one live acceptor to another live
       acceptor eventually arrives (\cref{sec:network}).
\item The learner graph (including its labels) as \textit{valid}.
\end{itemize}
\fi
\Cref{sec:examples} describes several heterogeneous consensus
 scenarios, as well as quorums for each learner.

\subsection{Valid Learner Graph}
\label{sec:valid}
Naturally, there are bounds on the learner graphs for which
 Heterogeneous Paxos can provide guarantees.
Unlike traditional consensus, in a Heterogeneous Consensus learner
 graph, each learner $\red a$ has its own set of quorums $\red{Q_a}$. 
These describe the learner's termination constraints: it may not
 terminate if all of its quorums contain a non-live
 acceptor~(\cref{defn:terminating}).
The notion of a \textit{valid} learner graph generalizes the
 homogeneous assumption that every pair of quorums have a safe
 acceptor in their intersection.

Homogeneous Byzantine Paxos guarantees
 agreement~(\cref{sec:agreement}) when all pairs of quorums have
 \ifreport at least one \else $\geq 1$ \fi safe acceptor in their
 intersection.
The heterogeneous case has a similar requirement:

\begin{definition}[Valid Learner Graph]
\label{defn:valid}
A learner graph is valid iff for each pair of learners
 $\red a$ and $\blue b$, whenever they must agree, all of their quorums feature at least one safe acceptor in their intersection:
$
\purple s \in \edge{\red a}{\blue b}\land
\red{q_a}\in\red{Q_a}\land
\blue{q_b}\in\blue{Q_b}
\ \Rightarrow\ 
\red{q_a}\cap\blue{q_b}\cap\purple s\ne \emptyset
$
\end{definition}

\subsection{Messaging}
\label{sec:messaging}
Acceptors send messages to each other. 
Live acceptors echo all messages sent and received to all other
 acceptors and learners, so if one live acceptor
 receives a message, all acceptors eventually receive it.
When safe acceptors receive a message, they process and send
 resulting messages specified by the protocol atomically:
 they do not receive messages between sending results to
 other acceptors.
Safe acceptors also receive any messages they send to themselves
 im\-med\-iate\-ly: they receive no other messages between sending and
 receiving.

Each message $\green x$ contains a cryptographic signature
 allowing anyone to identify the signer:
\begin{definition}[Message Signer]
\label{defn:sig}
$
  \sig{\green x\!:\!message}\!\triangleq\!
  \textrm{the acceptor or proposer that signed }\green x
$
\end{definition}
We can define $\sig{}$ over sets of messages, to mean the set of signers of those
 messages:
\begin{definition}[Message Set Signers]
  \label{defn:sigs}
$
  \sig{\green x : set}
  \triangleq \cb{\tallpipe{\sig{\blue m}}{\blue m \in \green x}}
$
\end{definition}

Furthermore, each message $\green x$ carries references to 0 or more
 other messages, \textit{\green{x.refs}}.
These references are by hash, ensuring both the absence of cycles in the reference
 graph and that it is possible to know exactly when one message
 references another~\cite{collision-resistance}.
In each message, safe acceptors reference each
 message they received since the last message they sent.
Since all messages sent are sent to all acceptors, and safe
 acceptors receive messages sent to themselves immediately, each
 message a safe acceptor sends \textit{transitively} references
 all messages it has ever sent or received.
Safe acceptors delay receipt of any message until they have
 received all messages it references.
This ensures they receive, for example, a $1a$ for a given ballot
 before receiving any $1b$s for that ballot.
\ifreport
It also ensures that safe acceptors' messages are always
 received in the order they were sent.
\fi

Each message has a unique ID and an identifiable type:
 $1a, 1b, $ or $2a$.
A \textit{2a} message $\green x$ has one type-specific field:
 $\green{x.lrn}$ specifies a learner.
A $1a$ message $\blue y$ has two type-specific fields:
  $\blue{y.value}$ is a proposed value, and
  $\blue{y.ballot}$ is a natural number specific to this
    proposal.

We assume that each $1a$ has a unique ballot number,
which could be accomplished by including signature information
 in the least significant bits of the ballot number:
\begin{assumption}[Unique ballot assumption]
\label{defn:ballotassumption} 
$\red z\!:\!1a \land \blue y\!:\!1a \land
 \red{z.ballot} = \blue{y.ballot} \Rightarrow \red z = \blue y$
\end{assumption}

\subsection{Machinery}
\label{sec:machinery}
To describe Heterogeneous Paxos, we require
 some mathematical machinery.

\ifreport\subsubsection{Transitive References}
\else\textbf{Transitive References:}\fi
\label{sec:tran}
We define $\tran{\green x}$ to be the transitive closure of message
 $\green x$'s references. 
Intuitively, these are all the messages in the ``causal past'' of
 $\green x$.
\begin{definition}
\label{defn:tran}
$
  \tran{\green x}
  \triangleq
    \cb{\green x}
  \cup
    \bigcup_{\blue m \in \green{x.\mathit{refs}}} \tran{\blue m}
$
\end{definition}

\ifreport\subsubsection{Get1a}
\else\textbf{Get1a:}
\fi
\label{sec:get1a}
It is useful to refer to the $1a$ that started the ballot of a
 message:
 the highest ballot number $1a$ in its transitive references.
\begin{definition}
\label{defn:geta}
$\displaystyle
\geta{\green x} \triangleq
\argmax_{\blue m:\textit{1a}\in\tran{\green x}}{\blue{m.ballot}}
$
\end{definition}

\ifreport\subsubsection{Ballot Numbers}
\else\textbf{Ballot Numbers:}
\fi
\label{sec:b}
The ballot number of a $1a$ is part of the message, and
 the ballot number of anything else is the highest ballot number among
 the \textit{1a}s it (transitively) references.
\begin{definition}
\label{defn:b}
$
  \ba{\green x} \triangleq \geta{\green x}.ballot
$
\end{definition}

\ifreport\subsubsection{Value}
\else\textbf{Value:}
\fi
\label{sec:va}
The value of a $1a$ is part of the message, and
 the value of anything else is the value of the highest ballot
 $1a$  among the messages it (transitively) references.
\begin{definition}
\label{defn:va}
$
  \va{\green x} \triangleq \geta{\green x}.value
$
\end{definition}

\ifreport\subsubsection{Decisions}
\else\textbf{Decisions:}
\fi
\label{sec:decision}
A learner decides when it has observed a set of \textit{2a}
 messages with the same ballot, sent by a quorum of acceptors. 
We call such a set a \textit{decision}:
\begin{definition}
  \label{defn:decision}
\ifreport
$
  \decision{\red a}{\red{q_a}} \triangleq
    \sig{\red{q_a}} \in \red{Q_a}
  \ \land\ 
  \forall \cb{\green x,\blue y} \subseteq \red{q_a} . \ \p{\andlinesThree
    {\ba{\green x}=\ba{\blue y}}
    {\green{x.lrn}=\red a}
    {\green x:\textit{2a}}}
$
\else
$
  \decision{\red a}{\red{q_a}} \triangleq
    \sig{\red{q_a}} \in \red{Q_a}
  \land
  \forall \cb{\green x,\!\blue y} \subseteq \red{q_a} . \ 
        \ba{\green x}\!=\!\ba{\blue y}
      \land
        \green{x.lrn}\!=\!\red a
      \land
        \green x\!:\!\textit{2a}
$
\fi
\end{definition}

Messages in a decision share a ballot
 (and therefore a value), so we extend our value
 function to include decisions:
$
  \decision{\red a}{\red{q_a}} 
  \Rightarrow
  \va{\red{q_a}} =
  \tallpipe{\va{\blue m}}{\blue m \in \red{q_a}}
$

Although decisions are not messages, applications might send decisions
 in other messages as a kind of ``proof of consensus.''
This is how the Heterogeneous Paxos \integrityattestations
 work in our prototype blockchains~(\cref{sec:charlotte}).

\ifreport\subsubsection{\hetdifftext{Caught}}
\else
\hetdifftext{
\textbf{Caught:}
}
\fi
Some behavior can create proof that an acceptor is \byzantine.
Unlike Byzantine Paxos, our acceptors and learners must
 adapt to \byzantine behavior. 
We say that an acceptor $\purple p$ is \textit{Caught} in a message
 $\green x$ if the transitive references of the messages include
 evidence such as two messages, $\red m$ and $\blue{m^\prime}$, both
 signed by $\purple p$, in which neither is featured in the other's
 transitive references (safe acceptors transitively reference
 all prior messages).

\begin{definition}
\label{defn:caught}
\ifreport
$
  \caught{\green x} \triangleq
  \cb{\tallpipe{\sig{\red m}}{\andlinesFour
        { \cb{\red m, \blue{ m^\prime}} \subseteq \tran{\green x} }
        {  \sig{\red m} = \sig{\blue{ m^\prime}}}
        { \red m \not\in \tran{\blue{m^\prime}}}
        { \blue{m^\prime} \not\in \tran{\red m}}
   }}
$
\else
$
  \caught{\green x} \triangleq
  \cb{\tallpipe{\sig{\red m}}{\andlinesTwo
        { \cb{\red m, \blue{ m^\prime}} \subseteq \tran{\green x}
          \ \land\ 
          \sig{\red m} = \sig{\blue{ m^\prime}}\hspace{-3mm}}
        { \red m \not\in \tran{\blue{m^\prime}}
          \ \land\ 
          \blue{m^\prime} \not\in \tran{\red m}}
   }}
$
\fi
\end{definition}
\ifreport\subsubsection{\hetdifftext{Connected}}
\else

  \hetdifftext{\textbf{Connected:}}
\fi
When some acceptors are proved \byzantine, clearly
 some learners need not agree, meaning that {\reallysafe}
 isn't in the edge between them in the \clg: at least one acceptor in
 each safe set in the edge is proven \byzantine.
Homogeneous learners are always connected unless there
 are so many failures no consensus is required.
\begin{definition}
\label{defn:con}
\ifreport
\[
  \con{\red a}{\green x} \triangleq
  \cb{\tallpipe{\blue b}{\andlinesTwo
        {\purple s \in \edge{\red a }{ \blue b} \in \clg}
        {\purple s \cap \caught{\green x} = \emptyset }
      }
     }
   \]
\else
$
  \con{\red a}{\green x} \triangleq
  \cb{\tallpipe{\blue b}{
        {\purple s \in \edge{\red a }{ \blue b} \in \clg}
    \ \land\ 
        {\purple s \cap \caught{\green x} = \emptyset }
      }
     }
     $
\fi
\end{definition}
It is clear that disconnected learners may not agree, and so
 each \textit{2a} message $\green x$ will have some implications only
 for learners still connected to its specified learner:
 $\con{\green{x.lrn}}{\green x}$.

\ifreport\subsubsection{Quorums in Messages}
\else\textbf{Quorums in Messages:}
\fi
\textit{2a} messages reference \textit{quorums of messages} with the same
 value and ballot.
A \textit{2a}'s quorums are formed from fresh \textit{1b} messages with the
 same ballot and value (we define \textit{fresh} in \cref{defn:fresh}).
\begin{definition}
\label{defn:qa}
\ifreport
\[
  \qa{\green x : \textit{2a}} \triangleq \cb{\tallpipe{\red m}{\andlinesFour
  {\red m : \textit{1b}}
  {\fresh{\hetdiff{\green{x.lrn}}}{\red m}}
  {\red m \in \tran{\green x}}
  {\ba{\red m} = \ba{\green x}}
}}
\]
\else
$
  \qa{\green x\!:\!\textit{2a}} \triangleq \cb{\hspace{-2mm}\tallpipe{\red m\!}{\!
  {\red m\!:\!\textit{1b}}\ \land\ 
  {\fresh{\hetdiff{\green{x.lrn}}}{\red m}}\ \land\ 
  {\red m \in \tran{\green x}}\ \land\ 
  {\ba{\red m} = \ba{\green x}}
}\hspace{-2mm}}
$
\fi
\end{definition}

\ifreport\subsubsection{Buried messages}
\else\textbf{Buried messages:}
\fi
A \textit{2a} message can become irrelevant if, after a time, an entire quorum of acceptors has
 seen \textit{2a}s with different values, \hetdifftext{the same learner}, and
 higher ballot numbers.
We call such a \textit{2a} \textit{buried} (in the context of some later message \purple y):
\begin{definition}
\label{defn:buried}
\ifreport
$
  \buried{\green x : \textit{2a}}{ \purple y} \triangleq
  \cb{\tallpipe{\sig{\red m}}{\andlinesSix
      {\red m \in \tran{\purple y}}
      {\blue z : \textit{2a}}
      {\cb{\green x, \blue z} \subseteq \tran{\red m}}
      {\va{\blue z} \ne \va{\green x}}
      {\ba{\blue z} > \ba{\green x}}
      {\hetdiff{\blue{z.lrn} = \green{x.lrn}}}
  }}
  \in \green{Q_{\hetdiff{x.lrn}}}
$
\else
\[
\begin{array}{l}
  \vspace{-8mm}
\\
  \buried{\green x : \textit{2a}}{ \purple y} \triangleq
  \cb{\tallpipe{\hspace{-2mm}\sig{\red m}\hspace{-1mm}}{\hspace{-2mm}\andlinesTwo
      {\hspace{-3mm}
       \red m \in \tran{\purple y}
       \ \land\ 
       \blue z\!:\!\textit{2a}
       \ \land\ 
       \cb{\green x, \blue z} \subseteq \tran{\red m}
       }
      {\hspace{-3mm}
       \va{\blue z} \ne \va{\green x}
       \ \land\ 
       \ba{\blue z} > \ba{\green x}
       \ \hetdiff{\land\ \blue{z.lrn} = \green{x.lrn}}
       \hspace{-3mm}}
  }}
  \in \green{Q_{\hetdiff{x.lrn}}}
\\
  \vspace{-3mm}
\end{array}
\]
\fi
\end{definition}

\ifreport\subsubsection{Well-Formedness}
\else\textbf{Well-Formedness:}
\fi
In addition to the basic message layout, \textit{2a} and \textit{1b}
 messages must be \textit{well-formed}.
No \textit{2a} should have an invalid quorum upon creation,
 and no acceptor should create a \textit{2a} unless it sent one of
 the \textit{1b} messages in the \textit{2a}.
Similarly, no \textit{1b} should reference any message with the same
 ballot number besides a \textit{1a} (safe acceptors make \textit{1b}s as soon as
 they receive a \textit{1a}).
Acceptors and learners should ignore messages that are not well-formed.
\begin{assumption}[Well-Formedness Assumption]
  \label{defn:wellformedness}
\[
  \begin{array}{l}
\green x : \textit{1b}\ \land
  \ \blue y \in \tran{\green x}
  \ \land\ \green x \ne \blue y 
  \ \land\ \blue y \ne \geta{\green x}
  \ \Rightarrow\ \ba{\blue y} \ne \ba{\green x}
\\
\red z : \textit{2a}\Rightarrow 
  \ \qa{\red z} \in \red{Q_{\hetdiff{z.lrn}}}
  \ \land\ \sig{\red z} \in \sig{\qa{\red z}}
\end{array}
\]
\end{assumption}

\ifreport\subsubsection{Connected \textit{2a} messages}
\else\textbf{Connected \textit{2a} messages:}
\fi
Entangled learners must agree, but learners that are not connected
 are not entangled, so they need not agree.
Intuitively, a \textit{1b} message references a \textit{2a} message
 to demonstrate that some learner may have decided some value.
For learner $\red a$, it can be useful to find the set of
 \textit{2a} messages from the same sender as a message ${\green x}$
 (and sent earlier) which are still unburied,
 and for learners connected to $\red a$.
The \textit{1b} cannot be used to make any new \textit{2a}
 messages for learner $\red a$ that have values
 different from these \textit{2a} messages.
\begin{definition}
  \label{defn:cona}
    \ifreport
    $
      \cona{\hetdiff{\red a}}{\green x} \triangleq
      \cb{\tallpipe{\blue m}{\andlinesFive
          {\blue m : \textit{2a}}
          {\blue m \in \tran{\green{x}}}
          {\sig{\blue m} = \sig{\green x}}
          {\lnot \buried{\blue m}{\green x}}
          {\hetdiff{\blue{m.lrn} \in \con{\red a}{\green x}}}
      }}
     $
    \else
    $
      \cona{\hetdiff{\red a}}{\green x} \triangleq
      \cb{\tallpipe{\blue m}{\andlinesTwo
          {\blue m\!:\!\textit{2a}
           \ \land\ 
           \blue m \in \tran{\green{x}}
           \ \land\ 
           \sig{\blue m} = \sig{\green x}
           \hspace{-3mm}}
          {\lnot \buried{\blue m}{\green x}
           \ \hetdiff{\land\ \blue{m.lrn} \in \con{\red a}{\green x}}}
      }}
     $
    \fi
\end{definition}

\ifreport\subsubsection{Fresh \textit{1b} messages}
\else\textbf{Fresh \textit{1b} messages:}
\fi
Acceptors send a \textit{1b} message whenever they receive a
 \textit{1a} message with a ballot number higher than they have yet
 seen.
However, this does not mean that the \textit{1b}'s value
 (which is the same as the \textit{1a}'s) agrees with that of
 \textit{2a} messages the acceptor has already sent.
We call a \textit{1b} message \textit{fresh}
 (with respect to a learner)
 when its value agrees with that of unburied \textit{2a} messages the
 acceptor has sent.
\begin{definition}
  \label{defn:fresh}
  $
    \fresh{\hetdiff{\red a}}{\green x : \textit{1b}} \triangleq
    \forall \blue m \in \cona{\hetdiff{\red a}}{\green x} . \ 
      \va{\green x} = \va{\blue m}
  $
\end{definition}

\ifreport
\subsection{The Heterogeneous Paxos Protocol}
\else
\subsection{Ballots}
\fi
\label{sec:protocol}
Heterogeneous Paxos can be thought of as taking place in
 \textit{stages} identified by natural numbers called ballots.
\Cref{sec:ballotnumbers} describes one way to construct unique ballot numbers.
\ifreport
Acceptor and Learner behavior is described
 in~\cref{fig:pseudocode}.

Heterogeneous Paxos reduces to a Byzantine Paxos variant in the
 case of homogeneous learners.
For a precise definition of this Byzantine Paxos variant, ignore the
 text \hetdifftext{highlighted in pale blue}.

It may be possible to optimize an implementation with additional
 messages pertaining to outdated ballots, but these are not strictly
 necessary for correctness.
For instance, an acceptor might inform a proposer that their
 proposal has been ignored because of its ballot number, and the
 proposer might then know to try again later.

\subsubsection{For Ballot $b$}
\label{sec:heterogeneousballot}
The actions in each ballot are broken down into three \textit{phases},
 traditionally called \textit{1a}, \textit{1b}, and
 \textit{2a}~\cite{paxos-made-simple}, corresponding to the subtypes
 of message.

\noindent\textbf{1a}:
A proposer proposes a value $v$ by creating a new ballot number $b$, and
 sending a \textit{1a} message containing $v$ and $b$ to all
 acceptors.

\noindent\textbf{1b}:
Upon receiving a \textit{1a}, an acceptor sends a \textit{1b}, which
 references all messages it has ever received (transitively).

\noindent\textbf{2a}:
If an acceptor receives a \textit{1b} message $\blue m$,
 and has never received a message with a higher ballot number,
 it creates a new \textit{2a} message $\blue{m^\prime}$
 \hetdifftext{for each learner}, and if
 $\blue{m^\prime}$ is well-formed~(\cref{defn:wellformedness}), and
 if $m\in\qa{m^\prime}$, it sends $\blue{m^\prime}$ to all
 acceptors and learners.

If a learner $\hetdiff{\blue{a}}$ receives \textit{2a} messages 
 \hetdifftext{for learner $\blue a$}
 all with ballot $b$ and value $v^\prime$, signed by
 \hetdifftext{one of learner $\blue a$'s}
 quorums of acceptors, then
 \hetdifftext{learner $\blue{a}$}
 decides value $v^\prime$.

\subsubsection{Multiple Ballots}
\else

\textbf{Multiple Ballots:}
\fi
\label{sec:multiple}
Proposers construct new $1a$ messages (with a value and a unique
 ballot number), and send them to all acceptors.
Just like in Homogeneous Byzantine Consensus, it is possible for a
 ballot to \textit{fail}: after some number of
 ballots, it may be the case that all messages have arrived, the
 protocol in~\cref{fig:pseudocode} doesn't require any acceptor to
 send any further messages, and yet no learner has decided. 
For this reason, it is necessary to start a new ballot when 
 an old one is failing.

One way to handle this is to leave the responsibility at the
 proposers:
 if a proposer proposes a ballot, and learners don't decide for a
 while, then the proposer should propose again.
Randomized exponential backoff can be used to allow clients to adapt
 to the unknown delay in a partially synchronous~\cite{Dwork1988}
 network without flooding the system.

Another way is to have acceptors propose after a ballot has
 failed: when sufficiently many \textit{1b} messages for a given
 ballot are collected,  but none are fresh, an acceptor could send a
 new \textit{1a}.
There are subtleties to ensuring liveness, which we discuss
 in~\cref{sec:terminationproof}.

\ifreport\subsubsection{Ballot Numbers}
\label{sec:ballotnumbers}
In our implementation, ballot numbers are constructed from
 lexicographically ordered
 pairs featuring the current time on the proposer's clock and the
 proposer signature of the hash of the value for that ballot.
The latter ensures no two \textit{1a}s with the same ballot have
 different values.
The former ensures that ballot numbers generally increase with time,
 which improves expected decision latency.
To prevent proposers from entering artificially high ballot numbers,
 each acceptor delays the receipt of any received message until its
 own clock exceeds the message's ballot number's time.
If all clocks are synchronized, the ``best'' a proposer can do is to
 use the correct time.
\fi

\newcommand{\thmvalidity}{
\ifreport
  \begin{restatable}[Validity]{theorem}{validity}
\else
  \begin{theorem}[Validity]
\fi
  \label{thm:validity}
Heterogeneous Paxos is Valid (\cref{defn:validity}):\\
$
  \decision{\red a}{\red{q_a}} \Rightarrow
  \exists \green x : 1a . \va{\green x} = \va{\red{q_a}}
$
\ifreport
  \end{restatable}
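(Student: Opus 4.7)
The plan is to unpack the definitions step by step, showing that the value of any decided-upon 2a must trace back through its references to a 1a message carrying that value.

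First, I would note that $\decision{\red a}{\red{q_a}}$ requires $\red{q_a}$ to be a nonempty set of 2a messages (nonempty because $\sig{\red{q_a}} \in \red{Q_a}$ and any quorum, being the union of some base quorum and acceptors by \cref{defn:subsetfailures}, contains acceptors). Pick any $\green x \in \red{q_a}$; then $\va{\red{q_a}} = \va{\green x}$. By \cref{defn:va,defn:geta}, $\va{\green x} = \geta{\green x}.\mathit{value}$, where $\geta{\green x}$ is the highest-ballot 1a in $\tran{\green x}$. So it suffices to exhibit a 1a in $\tran{\green x}$ whose value equals $\va{\green x}$.

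Second, I would use well-formedness (\cref{defn:wellformedness}) applied to $\green x : \textit{2a}$: it gives $\qa{\green x} \in \red{Q_{x.lrn}}$ and $\sig{\green x} \in \sig{\qa{\green x}}$, so $\qa{\green x}$ is nonempty. Pick any $\blue m \in \qa{\green x}$; by \cref{defn:qa}, $\blue m$ is a 1b with $\blue m \in \tran{\green x}$ and $\ba{\blue m} = \ba{\green x}$. Since $\blue m$ is a 1b, safe acceptors create it in direct response to receiving a 1a, so $\geta{\blue m}$ exists and is a 1a of matching ballot; thus $\geta{\blue m} \in \tran{\blue m} \subseteq \tran{\green x}$.

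Third, I would invoke \cref{defn:ballotassumption}: since $\ba{\green x} = \ba{\blue m} = \geta{\blue m}.\mathit{ballot}$, and $\geta{\green x}$ is the 1a of ballot $\ba{\green x}$ in $\tran{\green x}$, uniqueness of ballots forces $\geta{\green x} = \geta{\blue m}$. Hence $\va{\green x} = \geta{\green x}.\mathit{value}$ is the value of a 1a message, as required.

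I do not expect a real obstacle here; the proof is essentially a chase through the definitions. The one subtle point worth being careful about is the existence of a 1a in $\tran{\blue m}$ for a 1b $\blue m$: this follows from the fact that $\ba{\blue m}$ is defined via $\geta{\blue m}$, so the maximum is taken over a nonempty set, i.e., the set of 1a messages in $\tran{\blue m}$ must be nonempty for $\ba{\blue m}$ to be a well-defined natural number matching $\ba{\green x}$. This is implicit in the way $\ba{}$ is used throughout, and is guaranteed operationally by the fact that safe acceptors only emit a 1b after receiving a 1a and referencing it.
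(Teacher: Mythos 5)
Your proof is correct and follows essentially the same route as the paper's: the paper's entire argument is that, by the definition of $\geta{}$, the message $\geta{\red{q_a}}$ is a \textit{1a} whose value equals $\va{\red{q_a}}$. Your additional detour through a \textit{1b} in $\qa{\green x}$ and the unique-ballot assumption just spells out the existence of a \textit{1a} in $\tran{\green x}$, which the paper leaves implicit in the well-definedness of $\geta{}$.
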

\else
  \end{theorem}
\fi
}

\newcommand{\thmagreement}{
\ifreport
  \begin{restatable}[Agreement]{theorem}{agreement}
\else
  \begin{theorem}[Agreement]
\fi
  \label{thm:agreement}
Heterogeneous Paxos has Agreement~(\cref{defn:agreement}):\\
$
  \entangled{\red a}{\blue b} \land 
  \decision{\red a}{\red{q_a}} \land 
  \decision{\blue b}{\blue{q_b}} \Rightarrow
  \va{\red{q_a}} = \va{\blue{q_b}}
$
\ifreport
  \end{restatable}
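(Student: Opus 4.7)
My plan is to argue by contradiction using induction on ballot numbers, in the classical Paxos style. Assume $\entangled{\red a}{\blue b}$, $\decision{\red a}{\red{q_a}}$, $\decision{\blue b}{\blue{q_b}}$, and $\va{\red{q_a}} \ne \va{\blue{q_b}}$; let $v = \va{\red{q_a}}$ and $B = \ba{\red{q_a}}$, assuming without loss of generality that $B \le \ba{\blue{q_b}}$. Two preparatory observations keep the bookkeeping clean. First, by Self-agreement (\cref{lemma:selfedge}) together with condensation of the learner graph (\cref{defn:condensed}), $\reallysafe$ lies in the self-edges of $\red a$ and $\blue b$, and entanglement is transitive on the set of learners entangled with $\red a$. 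Second, safe acceptors never equivocate, so $\reallysafe \cap \caught{x} = \emptyset$ for every message $x$; hence whenever $\entangled{c}{c'}$ we have $c \in \con{c'}{x}$ in every acceptor's view.

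The main claim I would prove by strong induction on $\ba{z}$ is: \emph{for every 2a message $z$ with $\ba{z} \ge B$ whose $z.\mathit{lrn}$ is entangled with $\red a$, $\va{z} = v$.} The base $\ba{z} = B$ is immediate from the unique-ballot assumption (\cref{defn:ballotassumption}): $z$ and every 2a in $\red{q_a}$ reference the single 1a at ballot $B$, forcing $\va{z} = v$. For the inductive step, pick such a $z$ for a learner $c$ entangled with $\red a$ with $\ba{z} > B$. Well-formedness (\cref{defn:wellformedness}) guarantees that $\sig{\qa{z}}$ is a quorum of $c$'s acceptors, and validity of the learner graph (\cref{defn:valid}) applied with $\reallysafe \in \edge{\red a}{c}$ supplies a safe acceptor $p \in \sig{\red{q_a}} \cap \sig{\qa{z}} \cap \reallysafe$. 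Let $m$ be $p$'s 2a in $\red{q_a}$ and $x$ be $p$'s 1b in $\qa{z}$; since $p$ is safe, $m \in \tran{x}$. Once I can show $m \in \cona{c}{x}$, freshness of $x$ yields $\va{x} = v$, and the unique-ballot assumption applied at $\ba{x}$ forces $\va{z} = v$, closing the induction.

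The hard step is ruling out $\buried{m}{x}$. Suppose $m$ were buried in $x$; then there is a quorum in $\red{Q_a}$ whose members each witness some 2a $w$ with $w.\mathit{lrn} = \red a$, $\ba{w} > B$, and $\va{w} \ne v$, with each such $w$ lying in $\tran{x}$. Applying validity to $\edge{\red a}{\red a}$ (which contains $\reallysafe$ by the reflexivity noted above) together with $\red{q_a}$ yields a safe witness, and hence at least one such $w$ actually in $\tran{x}$. The well-formedness clause for 1b messages is essential here: every message in $\tran{x}$ other than $x$ itself and $\geta{x}$ has ballot number different from $\ba{x}$, so $\ba{w} < \ba{x}$. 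The inductive hypothesis, applicable because $\red a$ is entangled with itself, then gives $\va{w} = v$, contradicting the definition of burial. Hence $m \in \cona{c}{x}$, the induction closes, and instantiating the claim at any $z \in \blue{q_b}$ produces $\va{\blue{q_b}} = v = \va{\red{q_a}}$, the desired contradiction.
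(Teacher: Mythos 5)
Your proof is correct and follows essentially the same route as the paper's: your ``main claim'' is the paper's key lemma (\cref{lemma:afterquorum}), proved by the same induction on ballot numbers, using quorum intersection at a trustworthy acceptor, the freshness-versus-burial dichotomy, the well-formedness bound on ballots of \textit{2a}s transitively referenced by a \textit{1b}, and the induction hypothesis applied to the self-edge to rule out burial. The only cosmetic differences are that you extract a \emph{safe} acceptor directly from \cref{defn:valid} with $\reallysafe$ where the paper's \cref{lemma:quorumintersection} extracts an \emph{uncaught} one via $\con{}{}$, and that you wrap the induction in a superfluous contradiction.
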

\else
  \end{theorem}
\fi
}

\ifreport
\section{Correctness}
\label{sec:correctness}

\subsection{Useful Lemmas}
\label{sec:lemmas}
First, we build up some useful facts about Heterogeneous Paxos.

\subsubsection{\con{}{} and \caught{}}
\begin{lemma}{Safe Acceptors are Uncaught:}
  \label{lemma:safeacceptoruncaught}
No safe acceptor can be caught in any message.
\[
  \purple p \textrm{ is safe } \Rightarrow
  \purple p \not\in \caught{\green x}
\]
\end{lemma}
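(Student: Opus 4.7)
The plan is to prove the contrapositive in spirit: assume towards contradiction that some safe acceptor $\purple p$ lies in $\caught{\green x}$ for some message $\green x$, and derive a violation of the safety conditions laid out in \cref{sec:messaging}. By \cref{defn:caught}, this would give us two messages $\red m, \blue{m^\prime} \in \tran{\green x}$, both signed by $\purple p$, with $\red m \notin \tran{\blue{m^\prime}}$ and $\blue{m^\prime} \notin \tran{\red m}$.

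First, I would establish a supporting invariant: for any two messages $\red m$ and $\blue{m^\prime}$ signed by a safe acceptor $\purple p$, one transitively references the other. The argument is that safe acceptors send messages in the order specified by the protocol, and each sent message references every message received since the last send (line~10--11 of \cref{fig:pseudocode}). Since safe acceptors receive their own messages immediately and atomically, whichever of $\red m, \blue{m^\prime}$ was sent second must have the other in its \textit{recently\_received} set (or have already referenced it via an earlier send that was then transitively referenced). Chasing references back through $\tran{}$ (\cref{defn:tran}), the later message transitively references the earlier one.

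Combining this invariant with the assumed existence of $\red m, \blue{m^\prime}$ signed by $\purple p$ contradicts $\red m \notin \tran{\blue{m^\prime}} \land \blue{m^\prime} \notin \tran{\red m}$, so $\purple p \notin \caught{\green x}$.

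The main obstacle is formalizing ``sent in order'' given that the protocol is described operationally in \cref{fig:pseudocode} rather than as an explicit total order on an acceptor's output. I would discharge this by appealing to the atomic block in \textit{acceptor\_on\_receipt} together with the stipulation in \cref{sec:messaging} that safe acceptors process message receipt and sending atomically and reference every message received since the previous send. This guarantees a well-defined send order whose causal closure under \textit{refs} is total on each safe acceptor's signed messages, which is exactly what the lemma needs.
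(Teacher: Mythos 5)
Your proposal is correct and takes essentially the same approach as the paper: the paper's own proof is a one-line appeal to correct behavior (a safe acceptor's every message transitively references all messages it has previously sent or received, as stipulated in \cref{sec:messaging}), unforgeability of signatures, and \cref{defn:caught}, which is exactly the total-ordering invariant you spell out in more detail. The only ingredient you leave implicit is signature unforgeability, which is needed to conclude that the two messages signed by $\purple p$ in the definition of $\caught{\green x}$ were actually produced by $\purple p$ rather than forged.
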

\begin{proof}
By the definitions of correct behavior, the unforgeability of
 signatures, and the definition of Caught (\cref{defn:caught}).
\end{proof}

\begin{lemma}{Quorum Intersection:}
\label{lemma:quorumintersection}
If a message $\green x$ references two quorums' messages, and the
 learners for those quorums are still connected as of $\green x$,
 then there must be an uncaught acceptor who signed at least one
 message in each.
  \[
    \p{\andlinesFour
      {\sig{\red{q_a}}\in \red{Q_a} }
      {\sig{\blue{q_b}} \in \blue{Q_b} }
      {\red{q_a}\cup\blue{q_b}\subseteq \tran{\green x} }
      {\blue b \in \con{\red a}{\green x}}
  }\Rightarrow 
    \exists \red{m_a}\in\red{q_a},
    \blue{m_b}\in\blue{q_b},
    \purple p \not\in \caught{\green x}.\ 
      \sig{\red{m_a}} = \sig{\blue{m_b}} = \purple p
  \]
\end{lemma}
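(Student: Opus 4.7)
The plan is to combine the definition of connectedness with the validity of the learner graph. First I would unpack the hypothesis $\blue b \in \con{\red a}{\green x}$ using \cref{defn:con}: there exists a safe set $\purple s \in \edge{\red a}{\blue b}$ in the \clg such that $\purple s \cap \caught{\green x} = \emptyset$.

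Next I would appeal to \cref{defn:valid} (Valid Learner Graph), applied to this $\purple s$ together with the two acceptor-quorums $\sig{\red{q_a}} \in \red{Q_a}$ and $\sig{\blue{q_b}} \in \blue{Q_b}$. Validity guarantees a safe acceptor $\purple p$ in the three-way intersection $\sig{\red{q_a}} \cap \sig{\blue{q_b}} \cap \purple s$. Because $\purple p \in \purple s$ and $\purple s$ is disjoint from $\caught{\green x}$, we immediately get $\purple p \notin \caught{\green x}$, the uncaught condition required by the conclusion.

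Finally, I would convert the signer membership back into witness messages. Since $\purple p \in \sig{\red{q_a}}$, \cref{defn:sigs} supplies some $\red{m_a} \in \red{q_a}$ with $\sig{\red{m_a}} = \purple p$, and symmetrically some $\blue{m_b} \in \blue{q_b}$ with $\sig{\blue{m_b}} = \purple p$. These witnesses, together with $\purple p$, discharge the existential in the conclusion. The hypothesis $\red{q_a} \cup \blue{q_b} \subseteq \tran{\green x}$ is not actually needed for this step of the argument; it will presumably be used downstream (when reasoning that \textit{2a} messages in these quorums are observed in $\green x$'s causal past), rather than here.

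The main obstacle is essentially bookkeeping: $\red{q_a}$ and $\blue{q_b}$ are sets of \emph{messages}, whereas \cref{defn:valid} is phrased over sets of \emph{acceptors}. The argument must pass through $\sig{\cdot}$ to apply validity in the learner graph, and then pull the resulting acceptor back to message witnesses. Once this type juggling is handled, the proof is a short chain of definition unfoldings with no further structural content.
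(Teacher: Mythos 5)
Your proof is correct and takes essentially the same route as the paper's: unfold $\con{\red a}{\green x}$ to get a safe set $\purple s \in \edge{\red a}{\blue b}$ disjoint from $\caught{\green x}$, apply the valid-learner-graph condition to obtain $\purple p \in \sig{\red{q_a}} \cap \sig{\blue{q_b}} \cap \purple s$, and pull $\purple p$ back to message witnesses in each quorum. The paper's proof also never uses the hypothesis $\red{q_a}\cup\blue{q_b}\subseteq \tran{\green x}$ and invokes validity only implicitly as ``the definition of \clg and quorum properties,'' which you make explicit via \cref{defn:valid}.
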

\begin{proof}
  By the definition of Con (\cref{defn:con}):
  \[
    \exists \purple s\in \edge{\red a}{\blue b} \in \clg 
    \land  \purple s \cap \caught{\green x} = \emptyset
  \]
Therefore, by the definition of \clg and quorum properties:
\[
  \sig{\red{q_a}} \cap \sig{\blue{q_b}} \cap \purple s \ne \emptyset
\]
\[
\therefore \exists \purple p \in \sig{\red{q_a}} \cap
                                 \sig{\blue{q_b}}\cap
                                 \purple s 
\]
\[
  \purple p \in \sig{\red{q_a}} \Rightarrow
  \exists \red{m_a}\in\red{q_a}. \sig{\red{m_a}} = \purple p
\]
\[
  \purple p \in \sig{\blue{q_b}} \Rightarrow
  \exists \blue{m_v}\in\blue{q_b}. \sig{\blue{m_b}} = \purple p
\]
\[
  \purple p \in \purple s \Rightarrow
  \purple p \not\in \caught{\green x}
\]
\end{proof}

\subsubsection{Entangled Learners}
\label{sec:entangledlearners}

\begin{lemma}{Entangled Learners are Connected:}
\label{lemma:entangledconnected}
If two learners are entangled, then for all messages that are
 actually sent, they are connected.
  \[
    \entangled{\red a}{\blue b} \Rightarrow
    \red a \in \con{\blue b}{\green x}
  \]
\end{lemma}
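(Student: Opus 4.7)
The plan is to unfold the definitions of \entangled{\red a}{\blue b} and \con{\blue b}{\green x}, and then exhibit $\reallysafe$ itself as the witness safe set needed by Con. Concretely, the definition of Con (\cref{defn:con}) requires us to find some $\purple s \in \edge{\blue b}{\red a} \in \clg$ with $\purple s \cap \caught{\green x} = \emptyset$, and entanglement (\cref{defn:entangled}) directly hands us $\reallysafe \in \edge{\red a}{\blue b}$. Since the learner graph is undirected (\cref{sec:undirected}), this is the same as $\reallysafe \in \edge{\blue b}{\red a}$, so $\reallysafe$ is available as a candidate witness.

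Next I would verify that this witness survives the passage to the \clg and the Caught test. For the \clg, the condensation operation only takes unions into existing edge labels, so every safe set of the original learner graph is also a safe set of the corresponding \clg edge; hence $\reallysafe \in \edge{\blue b}{\red a} \in \clg$. For the Caught test, I invoke \cref{lemma:safeacceptoruncaught}: every acceptor in $\reallysafe$ is by definition safe, and no safe acceptor appears in $\caught{\green x}$ for any message $\green x$. Thus $\reallysafe \cap \caught{\green x} = \emptyset$, and by \cref{defn:con} we conclude $\red a \in \con{\blue b}{\green x}$.

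The main thing to be careful about is conflating the raw learner graph with the \clg, since $\con{}{}$ is defined with respect to the \clg. The argument works because condensation is monotone in edge labels (it only ever adds safe sets by union, never removes any), so the membership $\reallysafe \in \edge{\red a}{\blue b}$ is preserved. Apart from that bookkeeping, the proof is a direct substitution of $\reallysafe$ into the existential in \cref{defn:con}, discharged by \cref{lemma:safeacceptoruncaught}.
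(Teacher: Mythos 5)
Your proof is correct and follows essentially the same route as the paper's: exhibit $\reallysafe$ as the witness safe set from \cref{defn:entangled}, discharge the Caught condition via \cref{lemma:safeacceptoruncaught}, and conclude by \cref{defn:con}. The extra bookkeeping you add about undirectedness and the monotonicity of condensation is sound but not something the paper bothers to spell out.
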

\begin{proof}
  By the definition of Entangled (\cref{defn:entangled}):
  \[
  \reallysafe \in \edge{\red a}{\blue b}
  \]
  Therefore, by \cref{lemma:safeacceptoruncaught}:
  \[
    \reallysafe \cap \caught{\green x} = \emptyset
  \]
  So by the definition of \textit{Con} (\cref{defn:con}):
  \[
    \red a \in \con{\blue b}{\green x}
  \]
\end{proof}

\begin{lemma}{Entanglement is not ordered:}
\label{lemma:entanglementunordered}
\[
\entangled{\red a}{\blue b} \Rightarrow \entangled{\blue b}{\red a}
\]
\end{lemma}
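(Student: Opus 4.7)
The plan is to prove this symmetry result directly by unfolding the definition of \entangled{}{} and invoking the fact that the learner graph is undirected. From \cref{defn:entangled}, $\entangled{\red a}{\blue b}$ is just shorthand for $\reallysafe \in \edge{\red a}{\blue b}$, so the whole claim reduces to showing that membership of \reallysafe in the edge label is independent of the order in which its two endpoints are listed.

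First, I would assume the hypothesis $\entangled{\red a}{\blue b}$ and rewrite it as $\reallysafe \in \edge{\red a}{\blue b}$. Second, I would appeal to \cref{sec:undirected}, which establishes that agreement is symmetric and therefore that learner graphs (and hence condensed learner graphs) are undirected, giving the equality of labels $\edge{\red a}{\blue b} = \edge{\blue b}{\red a}$ for every pair of learners. Substituting this equality yields $\reallysafe \in \edge{\blue b}{\red a}$, which is by definition $\entangled{\blue b}{\red a}$.

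I do not expect any real obstacle here, since the argument is purely a definitional unfolding plus one appeal to the symmetry of edge labels. The only thing to be slightly careful about is not to conflate the (undirected) edge label with a directed pair: the notation $\edge{\red a}{\blue b}$ is suggestive of order, but the learner graph is an undirected graph, so the two notations denote the same set of safe sets. The lemma's proof can therefore be stated in a single line, and its real purpose is presumably to license later proofs to swap the two learners freely when reasoning about entangled pairs.
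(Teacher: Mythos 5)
Your proof is correct and is essentially identical to the paper's own argument: both unfold \cref{defn:entangled} to $\reallysafe \in \edge{\red a}{\blue b}$, invoke the undirectedness of the \clg from \cref{sec:undirected} to get $\edge{\red a}{\blue b} = \edge{\blue b}{\red a}$, and conclude. No differences worth noting.
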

\begin{proof}
  By the definition of entangled (\cref{defn:entangled}):
  \[
    \reallysafe \in \edge{\red a}{\blue b}
  \]
  \clg is undirected, so
   $\edge{\red a}{\blue b} = \edge{\blue b}{\red a}$.
  Therefore:
  \[
    \reallysafe \in \edge{\blue b}{\red a}
  \]
  And so:
  \[
    \entangled{\blue b}{\red a}
  \]
\end{proof}

\begin{lemma}{Entanglement is Transitive:}
\label{lemma:entanglementtrans}
If $\red a$ and $\blue b$ are entangled, and $\blue b$ and $\green c$
 are entangled, then so are $\red a$ and $\green c$.
\[
    \entangled{\red a}{\blue b}
  \land
    \entangled{\blue b}{\green c}
\Rightarrow
  \entangled{\red a}{\green c}
\]
\end{lemma}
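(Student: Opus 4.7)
The plan is to unfold the definition of entanglement on both hypotheses, apply the condensed‐graph containment of \cref{defn:condensed}, and then refold. Concretely, from $\entangled{\red a}{\blue b}$ I would read off $\reallysafe \in \edge{\red a}{\blue b}$, and from $\entangled{\blue b}{\green c}$ I would read off $\reallysafe \in \edge{\blue b}{\green c}$. Hence $\reallysafe \in \edge{\red a}{\blue b} \cap \edge{\blue b}{\green c}$.

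Next I would invoke the fact that the learner graph in use is condensed, i.e.\ the containment
$\p{\edge{\red a}{\blue b} \cap \edge{\blue b}{\green c}} \subseteq \edge{\red a}{\green c}$
holds (\cref{defn:condensed}). Applied to $\reallysafe$, this gives $\reallysafe \in \edge{\red a}{\green c}$, which is precisely $\entangled{\red a}{\green c}$ by \cref{defn:entangled}. Alternatively, one could appeal to \cref{lemma:equivcondensed} with $\purple s = \orange{s^\prime} = \reallysafe$, noting $\reallysafe \cup \reallysafe = \reallysafe$, and reach the same conclusion.

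\textbf{Main obstacle.} There is essentially no obstacle beyond making sure we are entitled to treat the learner graph as condensed; the lemma is really a direct definitional chase relying on the structural assumption of \cref{sec:transitivity}. It is worth noting for the reader that the role of $\reallysafe$ here is merely that it is a single safe set common to two adjacent edges, so transitivity of entanglement is strictly weaker than—and a consequence of—transitivity of agreement, which is the conceptual content already baked into \textit{condensed}.
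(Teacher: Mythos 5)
Your proposal is correct and matches the paper's proof essentially step for step: unfold \cref{defn:entangled} on both hypotheses, apply the condensed-graph property (the paper phrases this as $\reallysafe \cup \reallysafe \in \edge{\red a}{\green c}$, i.e.\ the union form of \cref{lemma:equivcondensed} that you mention as your alternative route), and refold the definition. The only cosmetic difference is that you lead with the intersection form of \cref{defn:condensed} while the paper leads with the union form; these are equivalent and the argument is the same.
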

\begin{proof}
  By the definition of entangled (\cref{defn:entangled}):
  \[
    \andlinesTwo
    {\reallysafe \in \edge{\red a}{\blue b}}
    {\reallysafe \in \edge{\blue b}{\green c}}
  \]
  By the definition of condensed (\cref{defn:condensed}):
  \[\reallysafe \cup \reallysafe
    \in \edge{\red a}{\green c}\]
  \[\therefore \reallysafe
               \in \edge{\red a}{\green c}\]
  And so by the definition of entangled (\cref{defn:entangled}):
\[
\entangled{\red a}{\green c}
\]
\end{proof}

\begin{lemma}{Self-Entanglement:}
  \label{lemma:entanglementrefl}
If a learner is entangled to anyone, it must also be
 entangled with itself.
  \[
    \entangled{\red a}{\blue b} \Rightarrow \entangled{\red a}{\red a}
  \]
\end{lemma}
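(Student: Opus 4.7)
The plan is to derive self-entanglement directly from the symmetry and transitivity lemmas just established (\cref{lemma:entanglementunordered} and \cref{lemma:entanglementtrans}), treating this as a one-line corollary rather than unfolding definitions again.

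First, I would take the hypothesis $\entangled{\red a}{\blue b}$ and apply \cref{lemma:entanglementunordered} to obtain $\entangled{\blue b}{\red a}$. Then I would feed both $\entangled{\red a}{\blue b}$ and $\entangled{\blue b}{\red a}$ into \cref{lemma:entanglementtrans} (with $\green c$ instantiated to $\red a$) to conclude $\entangled{\red a}{\red a}$.

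As a sanity check, I would verify that the argument also goes through at the level of the underlying definitions: $\entangled{\red a}{\blue b}$ unfolds to $\reallysafe \in \edge{\red a}{\blue b}$, and \cref{lemma:selfedge} gives $\edge{\red a}{\blue b} \subseteq \edge{\red a}{\red a}$, hence $\reallysafe \in \edge{\red a}{\red a}$, i.e.\ $\entangled{\red a}{\red a}$. This alternative route confirms that nothing subtle is hiding in the transitivity step.

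There is essentially no obstacle here, since both of the prerequisite lemmas are already in hand and the proof is a two-step symbolic manipulation; the only thing worth being careful about is that \cref{lemma:entanglementtrans} is stated for three potentially distinct learners, and one must check that the proof of that lemma does not implicitly assume the learners are distinct (a quick inspection of its proof shows it only uses \cref{defn:condensed} and the fact that $\reallysafe \cup \reallysafe = \reallysafe$, so instantiating $\green c := \red a$ is harmless).
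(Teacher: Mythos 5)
Your proposal is correct and matches the paper's own proof exactly: it applies \cref{lemma:entanglementunordered} to get $\entangled{\blue b}{\red a}$ and then \cref{lemma:entanglementtrans} with $\green c := \red a$ to conclude $\entangled{\red a}{\red a}$. Your sanity checks (the definitional route via \cref{lemma:selfedge} and the observation that transitivity does not require distinct learners) are sound but not needed.
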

\begin{proof}
  By \cref{lemma:entanglementunordered},
  $
    \entangled{\blue b}{\red a}
  $.
  And so by \cref{lemma:entanglementtrans}:
  $
    \entangled{\red a}{\red a}
  $.
\end{proof}

\begin{lemma}{After Observing a Decision:}\label{lemma:afterquorum}
If a learner observes a quorum of 2as with the same ballot, then any
 2as for an entangled learner with a higher ballot must have the same
 value.
  \[
      \p{\andlinesFive
        {\entangled{\red a}{\blue b} }
        { \decision{\red a}{\red{q_a}}}
        {\orange w : \textit{2a}}
        {\orange{w.lrn} =\blue b}
        {\ba{\orange w} > \ba{\red {q_a}}}
      } \Rightarrow \va{\red {q_a}} = \va{\orange w}
  \]
(Using our definition of $\va{}$ from~\cref{sec:decision})
\end{lemma}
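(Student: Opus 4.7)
The plan is to prove the lemma by strong induction on $\ba{\orange w}$. The inductive hypothesis states that for every 2a message $\orange{w'}$ with $\orange{w'.lrn}$ entangled with $\red a$ and $\ba{\red{q_a}} < \ba{\orange{w'}} < \ba{\orange w}$, we have $\va{\orange{w'}} = \va{\red{q_a}}$. Self-entanglement (\cref{lemma:entanglementrefl}) gives $\entangled{\red a}{\red a}$, so the hypothesis will in particular apply to 2as for $\red a$ itself in the intermediate ballot range --- exactly what the central step requires.

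The first step is to bridge $\red{q_a}$ and $\orange w$ via a safe acceptor. Well-formedness (\cref{defn:wellformedness}) yields $\qa{\orange w} \in \blue{Q_b}$, a set of 1bs in $\tran{\orange w}$, each of ballot $\ba{\orange w}$. Entanglement together with \cref{lemma:entangledconnected} gives $\red a \in \con{\blue b}{\orange w}$, and Quorum Intersection (\cref{lemma:quorumintersection}) then produces an uncaught acceptor $\purple p$ signing both a 2a $\red{m_a} \in \red{q_a}$ and a 1b $\blue{m_b} \in \qa{\orange w}$. Since $\purple p$ is safe and its 1b has a strictly higher ballot than its 2a, the order in which safe acceptors send messages forces $\red{m_a} \in \tran{\blue{m_b}}$.

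The heart of the proof is to show $\red{m_a} \in \cona{\blue b}{\blue{m_b}}$: the type, signature, transitive-reference, and connectedness clauses are immediate, so only $\lnot\buried{\red{m_a}}{\blue{m_b}}$ remains. I would suppose toward contradiction that a quorum in $\red{Q_a}$ of acceptors had each witnessed some 2a $\blue z$ with $\blue{z.lrn} = \red a$, $\ba{\blue z} > \ba{\red{m_a}}$, and $\va{\blue z} \ne \va{\red{m_a}}$, all referenced from $\blue{m_b}$. The 1b well-formedness clause forbids any non-1a message in $\tran{\blue{m_b}}$ from matching the ballot of $\blue{m_b}$, pinning $\ba{\blue z}$ strictly below $\ba{\orange w}$. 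The inductive hypothesis applied to $\blue z$ (via $\entangled{\red a}{\red a}$) then gives $\va{\blue z} = \va{\red{q_a}} = \va{\red{m_a}}$, contradicting the choice of $\blue z$. Hence $\red{m_a}$ is unburied; the freshness of $\blue{m_b}$ as a member of $\qa{\orange w}$ yields $\va{\blue{m_b}} = \va{\red{m_a}} = \va{\red{q_a}}$; and the Unique ballot assumption (\cref{defn:ballotassumption}) forces $\geta{\blue{m_b}} = \geta{\orange w}$, so $\va{\orange w} = \va{\blue{m_b}} = \va{\red{q_a}}$.

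The main obstacle is the buriedness step: it is the single place where the inductive hypothesis genuinely bites, and its soundness rests on two delicate points --- self-entanglement (so the hypothesis covers 2as for $\red a$ itself) and the 1b well-formedness clause (so the counterexample 2as have strictly smaller ballot than $\ba{\orange w}$ and thus lie within reach of the hypothesis). Everything else is the familiar Byzantine Paxos safety choreography, rephrased into the heterogeneous idiom through the \emph{Con2as} and \emph{Con} constructions.
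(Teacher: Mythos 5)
Your proof is correct and follows essentially the same route as the paper's: induction on the ballot number, Quorum Intersection to obtain a common uncaught signer of a \textit{2a} in $\red{q_a}$ and a \textit{1b} in $\qa{\orange w}$, and the fresh/Con2as/buried dichotomy resolved by applying the induction hypothesis through self-entanglement. The only slight deviation is that you call the intersecting acceptor \emph{safe} where Quorum Intersection only yields \emph{uncaught}, but the paper's own proof relies on the same implicit ordering of that acceptor's messages to place $\red{m_a}$ in $\tran{\blue{m_b}}$, so your argument matches the intended one (and is in places more explicit, e.g.\ in justifying $\va{\orange w} = \va{\blue{m_b}}$ via the unique-ballot assumption).
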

\begin{proof}
  As ballot numbers are natural numbers, we prove this by
   \tb{induction} on $\ba{\orange w}$.

  \noindent\tb{Base Case:} for
    $\ba{\orange w} \le \ba{\red {q_a}}$,
    \cref{lemma:afterquorum} trivially holds.

  \noindent\tb{Induction:}
  Assume \cref{lemma:afterquorum} holds for all values of
   $\orange w < \orange{w^\prime}$.
  We now show that it holds for $\orange{w^\prime}$.

  By well-formedness (\cref{defn:wellformedness}), 
  \[
    \qa{\orange{w^\prime}} \in \blue{Q_b}
  \]
  By \cref{lemma:entangledconnected}:
  \[
    \blue b \in \con{\red a}{\orange{w^\prime}}
  \]
  By \cref{lemma:quorumintersection}, 
  \[
    \exists \red{m_a} \in \red{q_a},
            \orange{m_w} \in \qa{\orange{w^\prime}} ,
            \purple p \not\in \caught{\orange{w^\prime}}\ .\ 
        \sig{\red{m_a}} = \sig{\orange{m_b}} = \purple p
  \]
  By the definition of Well-formed (\cref{defn:wellformedness}):
  \[
    \forall \p{\purple r : \textit{2a}}\in\tran{\orange{m_w}}\ .\ 
      \ba{\purple r} < \ba{\orange{m_w}}
  \]
  By the definition of $q$ (\cref{defn:qa}), $\orange{m_w}$ is fresh.
  By \cref{lemma:entangledconnected},
   $\blue b \in \con{\red a}{\orange{w^\prime}}$,
   and so by the definition of fresh (\cref{defn:fresh}),
   either $\red{m_a} \in \cona{\blue b}{\orange{m_w}}$,
   in which case:
   \[
     \va{\red{m_a}} =
     \va{\red {q_a}} =
     \va{\orange{w^\prime}} =
     \va{\orange{m_w}}
   \]
   or $\buried{\red{m_a}}{\orange{m_w}}$.
  However, by the definition of buried (\cref{defn:buried}):
\[
  \exists \p{\red{r} : \textit{2a}} \in \tran{\orange{m_w}}\ .\ 
      \red{r.lrn = \red a}
    \land
      \p{ \ba{\red r} > \ba{\red {m_a}}}
    \land
      \p{\va{\red r} \ne \va{\red{m_a}}}
\]
By \cref{lemma:entanglementrefl}, $\entangled{\red a}{\red a}$,
 so by our induction hypothesis, no such $\red{r}$ exists.
Therefore, $\red{m_a}$ is not buried, so:
\[
  \va{\red {q_a}} = \va{\orange{w^\prime}} 
\]
\end{proof}
 
\subsection{Validity}
\label{sec:correctnessvalidity}
If a learner decides a value, that value was proposed.
\ifreport
\thmvalidity
\else
Recall~\cref{thm:validity}:
\validity*
\fi
\begin{proof}
  By the definition of Get1A for Decisions (\cref{defn:geta}), 
  \[
    \geta{\red{q_a}} : 1a \land 
    \va{\geta{\red{q_a}}} = \va{\red{q_a}}
  \]
\end{proof}

\subsection{Agreement}
\label{sec:correctnessagreement}
If two entangled learners ($\red a$ and $\blue b$) each decide,
 they decide the same value~(\cref{sec:agreement}).
\ifreport
\thmagreement
\else
Recall~\cref{thm:agreement}:
\agreement*
\fi
\begin{proof}
  If $\ba{\red{q_a}} = \ba{\blue{q_b}}$, then by the ballot
   assumption~(\cref{defn:ballotassumption}), and definition of
   $V$~(\cref{sec:decision}): 
   \[
     \va{\red{q_a}} = \va{\blue{q_b}}
   \]
  Otherwise, without loss of generality, assume
   $\ba{\red{q_a}} < \ba{\blue{q_b}}$.
  By the definition of decision (\cref{defn:decision}):
  \[
    \exists \blue m : \textit{2a} \in \blue{q_b}\ .\ 
    \va{\blue m} = \va{\blue{q_b}}
  \]
  By \cref{lemma:afterquorum}:
  \[
    \va{\blue m} = \va{\blue{q_b}} = \va{\red{q_a}}
  \]
\end{proof}
\else
\subsection{Safety}
\label{sec:safety}
Under our assumptions~(\cref{sec:heterogeneousassumptions}),
 Heterogeneous Paxos has the safety properties of Validity and
 Agreement (proofs in~\cref{sec:correctnessvalidity}
 and~\cref{sec:correctnessagreement}):
\thmvalidity
\thmagreement
\fi

\subsection{Liveness}
\label{sec:liveness}
\ifreport
\subsubsection{Network Assumption}
\fi
Heterogeneous Paxos, and indeed Byzantine Paxos,
 rely on a weak network assumption to guarantee
 termination.
The assumption is complex precisely because it is weak;
 a simpler but stronger assumption, such as a partially synchronous
 network, would suffice.

\ifreport
  \begin{restatable}[Network Assumption]{assumption}{networkassumption}
\else
  \begin{assumption}[Network Assumption]
\fi
\label{defn:networkassumption}
To guarantee that a learner $\red a$ decides, we assume that for
 some quorum $\red{q_a} \in \red{Q_a}$ of safe and live acceptors:
\begin{itemize}
  \item Eventually, there will be 13 consecutive periods of any
         duration, with no time in between, numbered 0 through 12,
         such that any message sent to $\red a$ or an acceptor in
         $\red{q_a}$ before one period begins is
         delivered before it ends.
  \item If an acceptor in $\red{q_a}$ sends a message in between
         receiving two messages $\red m$ and $\orange{ m^\prime}$
         (and it receives no other messages in between),
         and $\red m$ is delivered in some period $n$, then the
         message is sent in period $n$.
  \item No 1a message except $\green x$, $\blue y$, and $\purple z$
         is delivered to any acceptor in $\red{q_a}$ during any
         \ifreport of the 13 periods\else period\fi.
  \item $\green x$ is delivered to an acceptor in $\red{q_a}$ in period 0,
        $\blue y$ is delivered to an acceptor in $\red{q_a}$ in period 4, and
        $\purple z$ is delivered to an acceptor in $\red{q_a}$ in period 9.
  \item $\va{\blue y} = \va{\purple z}$ is the value of the highest
         ballot 2a known to any acceptor in $\red{q_a}$ at the
         end of period 3.
  \item $\ba{\green x}$ is greater than any ballot number of any
         message delivered to any acceptor in $\red{q_a}$ before period 0, and
         $\ba{\green x} < \ba{\blue y} < \ba{\purple z}$.
\end{itemize}
\ifreport
  \end{restatable}
\else
  \end{assumption}
\fi
\noindent This assumption is \textit{only necessary} for termination, not any
 safety property.
\ifreport
Furthermore, it is possible to use artificial delays to reduce other
 network assumptions to this
 one~(\cref{sec:semisynchronous}).
\else
We prove our termination theorem in~\cref{sec:terminationproof}.
\fi
\ifreport
  \begin{restatable}[Termination]{theorem}{termination}
\else
  \begin{theorem}[Termination]
\fi
\label{thm:termination}
  If \cref{defn:networkassumption} holds for learner $\red a$,
   then $\red a$ has Termination~(\cref{defn:termination}).
  Specifically, after period 12:
  $
    \terminating{\red a} \Rightarrow
    \exists\red{q_a}. \decision{\red a}{\red{q_a}}
  $
  If \cref{defn:networkassumption} holds for all terminating learners,
   then Heterogeneous Paxos has Termination.
\ifreport
  \end{restatable}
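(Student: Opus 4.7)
The plan is to trace propagation of the three 1a messages $\green x$, $\blue y$, $\purple z$ through the 13 periods and show that by the end of period~12, the learner $\red a$ has observed a set of 2a messages satisfying~\cref{defn:decision}. Throughout I use two standing facts from~\cref{defn:networkassumption}: a message sent in period~$n$ to any acceptor in $\red{q_a}$ is delivered by the end of period~$n+1$, and live acceptors forward every message they receive, so whatever one member of $\red{q_a}$ receives reaches all members within one period.

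I would carry out the argument in three sub-phases, one per 1a. In Phase~I (periods 0--3) the cascade of $\green x$, followed by the 1bs it triggers, followed by the 2as those 1bs trigger, propagates through $\red{q_a}$, producing a well-defined snapshot at the end of period~3; by construction $\va{\blue y}$ is the value of the highest-ballot 2a in that snapshot. In Phase~II (periods 4--8) the same cascade runs on $\blue y$, and the extra (fifth) period is used to establish a \emph{burial claim}: for every $\purple p \in \red{q_a}$ and every 2a $\blue m$ signed by $\purple p$ with $\ba{\blue m} < \ba{\blue y}$, $\va{\blue m} \ne \va{\blue y}$, and $\blue{m.lrn} \in \con{\red a}{\cdot}$, $\blue m$ is buried in every $\red{q_a}$-acceptor's view by the end of period~8. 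In Phase~III (periods 9--12), the burial claim together with $\va{\purple z} = \va{\blue y}$ makes the 1bs triggered by $\purple z$ inside $\red{q_a}$ fresh for $\red a$, so by~\cref{defn:wellformedness} each $\red{q_a}$-member emits a well-formed 2a at ballot $\ba{\purple z}$ for learner $\red a$; forwarding delivers all of them to $\red a$ by period~12, satisfying~\cref{defn:decision}.

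The main obstacle is the burial step in Phase~II. In homogeneous Byzantine Paxos it is essentially automatic, but here freshness is parameterized by a specific target learner, and a conflicting prior 2a of some $\purple p \in \red{q_a}$ may be labelled with a neighbor $\blue b$ that agrees with $\red a$ only on certain safe sets. I would case-split: either enough caught-acceptor evidence has accumulated that $\blue b \notin \con{\red a}{\cdot}$, in which case the offending 2a drops out of $\cona{\red a}{\cdot}$ and freshness holds vacuously, or $\blue b$ is still connected, in which case a full quorum of $\red{q_a}$ must have emitted a 2a for $\blue b$ at ballot $\ba{\blue y}$ with value $\va{\blue y}$ before period~9, satisfying~\cref{defn:buried}. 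The delicate point is avoiding circularity: showing that these Phase~II 2as themselves are well-formed requires their own 1bs to be fresh, which is exactly what the two extra propagation periods budgeted in Phases~II and~III of~\cref{defn:networkassumption} give us room to unwind.
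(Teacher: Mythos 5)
Your proposal follows essentially the same route as the paper's proof: a period-by-period trace of the cascades triggered by $\green x$, $\blue y$, and $\purple z$, with the same phase boundaries (periods 0--3, 4--8, and 9--12) and the same pivotal middle step---burying every conflicting \textit{2a} by the end of period 8 so that the \textit{1b}s answering $\purple z$ are fresh and yield a quorum of \textit{2a}s for $\red a$ by period 12. If anything you are more detailed than the paper, which asserts the burial in a single sentence; your connectivity case-split is elaboration rather than divergence, though note that $\buried{\blue m}{\cdot}$ for a \textit{2a} labelled with learner $\blue b$ requires the witnessing set to be a quorum in $\blue{Q_b}$, not merely to contain all of $\red{q_a}$---a subtlety neither your sketch nor the paper's proof makes explicit.
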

\else
  \end{theorem}
\fi
\ifreport
\label{sec:terminationproof}
\begin{proof}
Periods 0-3 are sufficient for all acceptors in $\red{q_a}$ to send \textit{1b}s
 (and \textit{2a}s) in response to any \textit{1a} delivered prior to period 0. 

By the end of period 1, $\green x$ will be delivered to all
 acceptors in $\red{q_a}$.
They will therefore cease generating \textit{2a}s for any ballot number
 $< \ba{\green x}$.

By the end of period 3, all acceptors in $\red{q_a}$ will have received
 all \textit{1b}s and \textit{2a}s generated by acceptors in $\red{q_a}$ with
 $\le \ba{\green x}$.
This means that any prior \textit{2a} signed by an acceptor in
 $\red{q_a}$  is received by all acceptors in $\red{q_a}$ .
If there are no prior \textit{2a}s, there will now be \textit{2a}s with value
 $\va{\green x}$.

By the end of period 5, $\blue y$ will be delivered to all acceptors
 in $\red{q_a}$ .
They will then send \textit{1b}s to each other.

By the end of period 8, all \textit{2a}s generated by acceptors in
 $\red{q_a}$  with values $\ne \va{\blue y}$ will be buried. 

By the end of period 10, $\purple z$ is delivered to all
 acceptors in $\red{q_a}$.
All acceptors in $\red{q_a}$ must respond with fresh \textit{1b}s.

By the end of period 11, all acceptors in $\red{q_a}$ receive a quorum
  of fresh \textit{1b}s, and so produce \textit{2a}s.

By the end of period 12, a quorum of \textit{2a}s with ballot $\ba{\purple z}$
 have been delivered.
These form $\red{q_a}$.
\end{proof}

\subsubsection{Introducing Artificial Delays in a Partially Synchronous Network}
\label{sec:semisynchronous}
\fi
A \textit{partially synchronous} network is one in which, after some
 point in time,  there exists some (possibly unknown) constant
 latency $\Delta$ such that all sent messages arrive within
 $\Delta$~\cite{Dwork1988}. 
\ifreport
We can introduce artificial delays to ensure that in any partially
 synchronous network, Heterogeneous Paxos has Termination.

If the set of proposers is finite (it could be limited to the set of
 acceptors in any learner's quorum), then each can be assigned
 specific times they are allowed to propose.
For example, if proposals include a timestamp, and all safe
 acceptors delay receipt of a \textit{1a} until after that time,
 there could  be a limited set of timestamps each proposer is allowed
 to use.

Allowed times can be allocated in turns: each proposer gets a
 span of time during which only they can propose, and
 these turns are allocated in a round-robin fashion. 
If turns grow exponentially longer with time (from some predetermined
 start time), then for any maximum message latency $\Delta$ and
 any maximum clock skew $\Delta^\prime$, the network assumption
 will eventually be met during the turn of a correct proposer,
 with a turn of dura\-tion exceeding $13\p{\Delta + \Delta^\prime}$.
The correct proposer need only propose once, wait a third of its
 turn, and then propose twice more, using the value of the highest
 known \textit{2a} to any safe acceptor. 
\else
We explain elsewhere how to add artificial message
 receipt delays to Heterogeneous Paxos in order to
 guarantee~\cref{defn:networkassumption} in a partially synchronous
 network (\Cref{sec:semisynchronous}).
\fi

\ifreport
\section{Repeated Consensus}
\label{sec:repeatedconsensus}
In maintaining, for instance, an ordered log, it is useful for
 learners to \textit{decide} on the value which goes in each
 \textit{slot}, traditionally starting at slot 0, then 1, etc.
In general, one might want to prohibit filling a slot before a
 previous slot has been filled.
With homogeneous learners, one might say that \textit{1a} messages
 should be ignored unless they can show that consensus for the
 previous slot was reached.
For instance, an acceptor might ignore a \textit{1a} for slot $n$
 unless it references \textit{2a} messages signed by a quorum of
 acceptors which share a ballot and a value identified as belonging
 in slot $n-1$. 
In other words, an acceptor can demand \textit{proof of consensus}
 for slot $n-1$ before filling slot $n$ (for $n \ne 0$).

Heterogeneous learners makes this concept more difficult.
What if the previous slot has been filled for one learner, but not
 for another?
What if they are filled with different values for different learners?
We describe a few possible solutions:

\subsection{Allow slots to be filled in any order.}
\label{sec:fillanyorder}
Each consensus protocol for each slot is fully independent.
This is easier to implement, technically correct, and probably
 acceptable for some applications.

\subsection{A \textit{1a} for slot $s$ is a \textit{1a} for slot $s-1$.}
\label{sec:reference1a}
Suppose each \textit{1a} for slot $s > 0$ is required to reference a
 \textit{1a} for slot $s-1$. 
Furthermore, each safe acceptor delays receipt of each
 \textit{1a} until it has received and acted on the \textit{1a}s
 referenced.
Other than that, we treat all slots independently.

No slot could be filled for any learner without consensus at least
 having \textit{begun} for all prior slots. 
This does not guarantee, however, that consensus has yet
 \textit{finished} for all prior slots.
However, given termination~(\cref{sec:termination}), it
 \textit{will finish}, and so all prior slots \textit{will be filled}.

This does not guarantee that the value decided in slot $s$ references
 the value decided in slot $s-1$.

\subsection{Keep track of proof of consensus per learner.}
\label{sec:trackperlearner}
A proposer could put all known proofs for slot $s-1$ in
 each \textit{1a} for slot $s$, and acceptors only consider those
 learners in the \textit{2a} phase for the ballot that \textit{1a}
 begins.
This pushes the duty of tracking proofs of consensus onto the proposer.

Later \textit{1a} messages might feature more learners, thus enabling
 more learners to achieve consensus for slot $s$.

This is compatible with the \textit{1a} for slot $s$
 references a \textit{1a} for slot $s-1$
 solution~(\cref{sec:reference1a}).
\fi

\ifreport
\section{Examples}
\label{sec:examples}
In \cref{sec:intro}, we contrast Heterogeneous Paxos
 with traditional consensus in three ways:
\begin{itemize}
  \item Heterogeneous failures
  \item Heterogeneous acceptors
  \item Heterogeneous learners
\end{itemize}
We illustrate the advantages of heterogeneity through examples with
 all eight combinations of heterogeneous or homogeneous failures,
 acceptors, and learners.

\subsection{Fully Homogeneous}
\label{sec:examplehomogeneous}
Consider a traditional \byzantine-tolerant consensus protocol
 with 4 acceptors, tolerating any 1 \byzantine failure.
Quorums (for any learner) consist of any 3 acceptors.
Safe sets (for any edge) likewise consist of any 3 acceptors.
We illustrate this traditional scenario in \cref{fig:fullhom}.

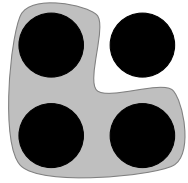
\begin{figure}
  \centering
\begin{tikzpicture}[scale=2]
\draw[gray,fill=gray!50] plot [smooth cycle] coordinates {
  (0,0) (0,1) (.5,1) (.5,.5) (1,.5) (1,0)};
\node[circle, fill=black, inner sep=3mm, draw](A) at (.2,.2) {\ };
\node[circle, fill=black, inner sep=3mm, draw](B) at (.2,.8) {\ };
\node[circle, fill=black, inner sep=3mm, draw](C) at (.8,.2) {\ };
\node[circle, fill=black, inner sep=3mm, draw](D) at (.8,.8) {\ };
\end{tikzpicture}
\caption[Fully Homogeneous Example]{
Fully Homogeneous Example: acceptors are shown as black circles,
 and a quorum is shown as a shaded region.}
\label{fig:fullhom}
\end{figure}

\subsection{Heterogeneous Failures}
\label{sec:exampleheterogeneousfailures}
Heterogeneous Paxos can express protocols wherein learners and
 acceptors are homogeneous, but mixed failures~\cite{Siu1998} are
 allowed.
For instance, consider a protocol with 6 acceptors, tolerating at
 most 1 \byzantine failure, and 1 additional \crash failure.
Quorums (for any learner) consist of any 4 acceptors.
Safe sets (for any edge) consist of any 5 acceptors: the crash failure
 is safe, but not live.
We illustrate this scenario in 
\cref{fig:hetfail}.

\begin{figure}
  \centering
\begin{tikzpicture}[scale=1.2]
\draw[gray,fill=gray!50] plot [smooth cycle] coordinates {
    ( .6* 1.5,-.5  *1.5)
    (-.05*1.5, .65 *1.5)
    (-.5* 1.5,.866 *1.5)
    (-1*  1.5, 0)
    (-.5* 1.5,-.866*1.5)
    (.5 * 1.5,-.866*1.5)
  };
\node[circle, fill=black, inner sep=3mm, draw](A) at (-1,0) {\ };
\node[circle, fill=black, inner sep=3mm, draw](B) at (-.5,-.866) {\ };
\node[circle, fill=black, inner sep=3mm, draw](C) at (.5,-.866) {\ };
\node[circle, fill=black, inner sep=3mm, draw](D) at (1,0) {\ };
\node[circle, fill=black, inner sep=3mm, draw](E) at (.5,.866) {\ };
\node[circle, fill=black, inner sep=3mm, draw](F) at (-.5,.866) {\ };
\end{tikzpicture}
\caption[Heterogeneous Failures Example]{
  Heterogeneous Failures Example: acceptor are shown as circles,
   and a quorum is shown as a shaded region.}
\label{fig:hetfail}
\end{figure}
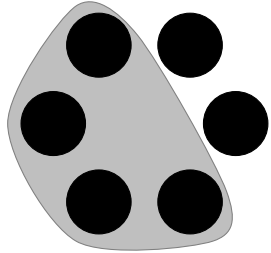

Note that in order to tolerate 2 total failures, a homogeneous
 \byzantine-fault-tolerant consensus protocol would need at least 7
 acceptors.
Heterogeneity spares the expense (in latency and resources) of an
 unnecessary additional acceptor.

\subsection{Heterogeneous Acceptors}
\label{sec:exampleheterogeneousacceptors}
Heterogeneous Paxos can express protocols wherein learners and
 failures are homogeneous, but not all acceptors are the same.
For instance, consider a protocol with 6 acceptors, divided into
 two groups of 3.
We tolerate up to 2 \byzantine failures, but only if
 \textit{all failures are in one group}. 
This makes the acceptors heterogeneous: it matters \textit{which}
 two acceptors fail.
Quorums (for any learner) consist of 3 acceptors from one group,
 and one acceptor from the other.
Safe sets (for any edge) likewise consist of 3 acceptors from one
 group, and one from the other.
This scenario is illustrated in \cref{fig:heta}.

 \begin{figure}
   \centering
\begin{tikzpicture}[scale=1]
\draw[gray,fill=gray!50] plot [smooth cycle] coordinates {
    (-0.4, -0.4)
    (-0.4,  2.3)
    (1,     2.3)
    (1,     0.5)
    (2.3,   0.5)
    (2.3,  -0.5)
  };
\node[circle, fill=blue, inner sep=3mm, draw=blue, line width=0mm](A) at (0,0) {\ };
\node[circle, fill=blue, inner sep=3mm, draw=blue, line width=0mm](B) at (0,1) {\ };
\node[circle, fill=blue, inner sep=3mm, draw=blue, line width=0mm](C) at (0,2) {\ };
\node[circle, fill=red!30, inner sep=2.57mm, draw=red, line width=1.3mm](D) at (2,0) {\ };
\node[circle, fill=red!30, inner sep=2.57mm, draw=red, line width=1.3mm](E) at (2,1) {\ };
\node[circle, fill=red!30, inner sep=2.57mm, draw=red, line width=1.3mm](F) at (2,2) {\ };
\end{tikzpicture}
\caption[Heterogeneous Acceptors Example]{
  Heterogeneous Acceptors Example: 
  Here, we draw one group of acceptors as solid blue circles,
   and the other as hollow red circles.
A quorum is shown as a shaded region.
}
\label{fig:heta}
\end{figure}
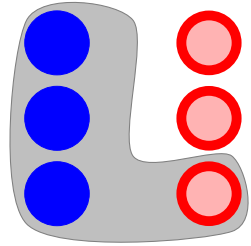

Note that in order to tolerate 2 total failures, a homogeneous
 \byzantine-fault-tolerant consensus protocol would need at least 7
 acceptors.
Heterogeneity spares the expense (in latency and resources) of an
 unnecessary additional acceptor
It is possible to express this kind of heterogeneity with quorums in
 Byzantine Paxos~\cite{byzantizing-paxos}.

\subsection{Heterogeneous Failures and Acceptors}
\label{sec:exampleheterogeneousfailuresacceptors}
Heterogeneous Paxos can express protocols wherein learners
 are homogeneous, but with mixed failures, and heterogeneous
 acceptors.
For instance, consider a protocol with 8 acceptors, divided into
 two groups of 4.
We tolerate up to 2 \byzantine failures, and 1 additional crash
 failure, but only if \textit{all \byzantine failures are in
 one group, and at most 2 failures occur in the same group}.
This makes the acceptors heterogeneous: it matters \textit{which}
 acceptors fail.
Quorums (for any learner) consist of 3 acceptors from one group,
 and 2 acceptors from the other.
Safe sets (for any edge) consist of 4 acceptors from one group, and 2
 acceptors from the other.
We illustrate this example in 
\cref{fig:hetfaila}.
 \begin{figure}
   \centering
\begin{tikzpicture}[scale=1]
\draw[gray,fill=gray!50] plot [smooth cycle] coordinates {
    (-0.4, -0.4)
    (-0.4,  2.3)
    (.8,    2.3)
    (1,     1.5)
    (2.4,   1.4)
    (2.3,  -0.5)
  };
\node[circle, fill=blue, inner sep=3mm, draw=blue, line width=0mm](A) at (0,0) {\ };
\node[circle, fill=blue, inner sep=3mm, draw=blue, line width=0mm](B) at (0,1) {\ };
\node[circle, fill=blue, inner sep=3mm, draw=blue, line width=0mm](C) at (0,2) {\ };
\node[circle, fill=blue, inner sep=3mm, draw=blue, line width=0mm](D) at (0,3) {\ };
\node[circle, fill=red!30, inner sep=2.57mm, draw=red, line width=1.3mm](E) at (2,0) {\ };
\node[circle, fill=red!30, inner sep=2.57mm, draw=red, line width=1.3mm](F) at (2,1) {\ };
\node[circle, fill=red!30, inner sep=2.57mm, draw=red, line width=1.3mm](G) at (2,2) {\ };
\node[circle, fill=red!30, inner sep=2.57mm, draw=red, line width=1.3mm](H) at (2,3) {\ };
\end{tikzpicture}
\caption[Heterogeneous Failures and Acceptors Example]{
  Heterogeneous Failures and Acceptors Example:
  Here, we draw one group of acceptors as solid blue circles,
   and the other as hollow red circles.
  A quorum is shown as a shaded region.
}
\label{fig:hetfaila}
\end{figure}
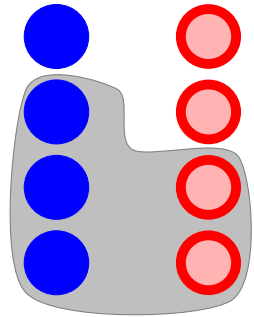

Note that in order to tolerate 3 total failures, a homogeneous
 \byzantine-fault-tolerant consensus protocol would need at least 10
 acceptors.
Heterogeneity spares the expense (in latency and resources) of 2
 unnecessary additional acceptors.

To address a similar situation with homogeneous failures, we'd
 still need 10 acceptors (to tolerate 2 \byzantine failures in one
 group, and 1 in the other).
To tolerate 2 \byzantine failures and one crash failure
 (so heterogeneous failures) with homogeneous acceptors, we'd need
 9 total acceptors.
The additional detail of heterogeneous acceptors spares the
 expense (in latency and resources) of an unnecessary additional
 acceptor, as opposed to just heterogeneous failures.

\subsection{Heterogeneous Learners}
\label{sec:exampleheterogeneouslearners}
Heterogeneous Paxos can express protocols wherein learners have
 different failure assumptions, but each assumes acceptors and
 failures are homogeneous. 
This is the sort of scenario the original Ripple consensus
 protocol~\cite{Schwartz2014} tried to address, although it lacks
 liveness~\cite{Chase2018}.

\subsubsection{Acceptor Disagreement}
\label{sec:acceptordisagreement}
We discussed an acceptor disagreement example in
 \cref{sed:clgexample}, and we expand upon it here.

Suppose learners agree that they want to tolerate 1
 \byzantine failure out of 4 acceptors, but disagree about who the
 4 relevant acceptors are.
Suppose there are 4 learners: 2 red and 2 blue, as well as 5
 acceptors: 1 red, 1 blue, and 3 black.
The acceptors are illustrated in 
\cref{fig:membershipdisagreement}.

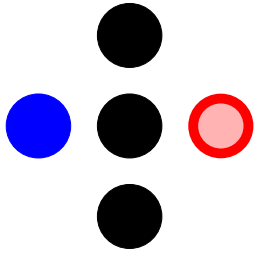
\begin{figure}
\centering
\begin{tikzpicture}[scale=1.2]
\membershipdisagreement{0}{0}{Final}{1}{1}{10mm}
\end{tikzpicture}
\caption[Membership Disagreement Example]{
  Membership Disagreement Example:
  Here, we draw one the blue acceptor as a solid blue circle,
   the red acceptor as a hollow red circle, and the black
   acceptors as black circles.
}
\label{fig:membershipdisagreement}
\end{figure}

The red learners want to agree if there is at most 1 \byzantine
 failure among the red and black acceptors, even if the blue
 acceptor has failed.
Likewise, the blue learners want to agree if there is at most 1
 \byzantine failure among the blue and black acceptors, even if
 the red acceptor has failed.
The red learners and blue learners acknowledge that they may
 disagree with each other if a black acceptor fails, but otherwise
 they want to agree.
Thus we draw the learner graph~(\cref{sec:learnergraph}) in,
\cref{fig:membershipdisagreementclg}.

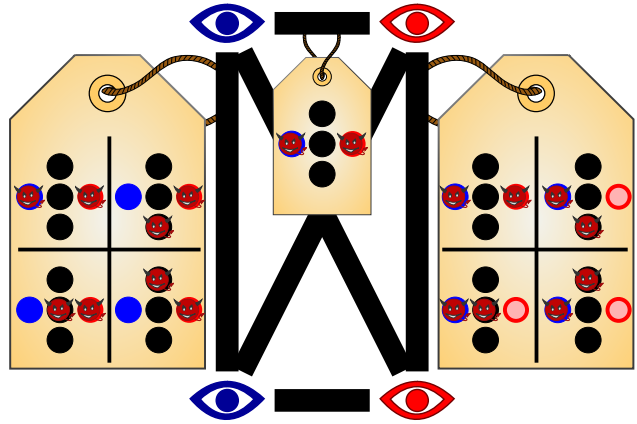
\begin{figure}
  \centering
\begin{tikzpicture}[scale=1]
  \pgfdeclareimage[width=10mm]{eye-blue}{figures/icons/eye-icon-custom-larger-darkblue.pdf}
  \pgfdeclareimage[width=10mm]{eye-red}{figures/icons/eye-icon-custom-outlinered.pdf}
  \pgfdeclareimage[width=13mm]{label}{figures/icons/label.pdf}
  \pgfdeclareimage[width=30mm]{label-sideways}{figures/icons/label-sideways.pdf}

  \node[](blue0) at (0,0) {\pgfuseimage{eye-blue}};
  \node[](blue1) at (0,5) {\pgfuseimage{eye-blue}};
  \node[](label0) at (-1.37,2.5) {\pgfuseimage{label-sideways}};
  \draw[line width=3mm] (blue0) -- (blue1);

  \membershipdisagreement{-2.6}{2.3}{label0universe0}{0.4}{0.4}
  \membershipdisagreementfailure{-2.6}{2.3}{label0universe0}{0.4}{0.4}{4mm}{A}{byz}
  \membershipdisagreementfailure{-2.6}{2.3}{label0universe0}{0.4}{0.4}{4mm}{E}{byz}

  \draw[line width=0.5mm, black] (-1.55, .5) -- (-1.55, 3.5);

  \membershipdisagreement{-1.3}{2.3}{label0universe1}{0.4}{0.4}
  \membershipdisagreementfailure{-1.3}{2.3}{label0universe1}{0.4}{0.4}{4mm}{B}{byz}
  \membershipdisagreementfailure{-1.3}{2.3}{label0universe1}{0.4}{0.4}{4mm}{E}{byz}

  \draw[line width=0.5mm, black] (-2.75, 2) -- (-0.35, 2);

  \membershipdisagreement{-2.6}{.8}{label0universe2}{0.4}{0.4}
  \membershipdisagreementfailure{-2.6}{.8}{label0universe2}{0.4}{0.4}{4mm}{C}{byz}
  \membershipdisagreementfailure{-2.6}{.8}{label0universe2}{0.4}{0.4}{4mm}{E}{byz}

  \membershipdisagreement{-1.3}{.8}{label0universe3}{0.4}{0.4}
  \membershipdisagreementfailure{-1.3}{.8}{label0universe3}{0.4}{0.4}{4mm}{D}{byz}
  \membershipdisagreementfailure{-1.3}{.8}{label0universe3}{0.4}{0.4}{4mm}{E}{byz}

  \node[](red0) at (2.5,0) {\pgfuseimage{eye-red}};
  \node[](red1) at (2.5,5) {\pgfuseimage{eye-red}};
  \draw[line width=3mm] (blue0) -- (red1);
  \draw[line width=3mm] (blue1) -- (red0);
  \draw[line width=3mm] (blue0) -- (red0);
  \node[xscale=-1](label1) at (1.25,3.8) {\pgfuseimage{label}};
  \draw[line width=3mm] (blue1) -- (red1);

  \membershipdisagreement{.85}{3}{label1universe0}{0.4}{0.4}
  \membershipdisagreementfailure{.85}{3}{label1universe0}{0.4}{0.4}{4mm}{E}{byz}
  \membershipdisagreementfailure{.85}{3}{label1universe0}{0.4}{0.4}{4mm}{A}{byz}

  \node[xscale=-1](label2) at (3.86,2.5) {\pgfuseimage{label-sideways}};
  \draw[line width=3mm] (red0) -- (red1);

  \membershipdisagreement{3}{2.3}{label2universe0}{0.4}{0.4}
  \membershipdisagreementfailure{3}{2.3}{label2universe0}{0.4}{0.4}{4mm}{E}{byz}
  \membershipdisagreementfailure{3}{2.3}{label2universe0}{0.4}{0.4}{4mm}{A}{byz}

  \draw[line width=0.5mm, black] (4.07, .5) -- (4.07, 3.5);

  \membershipdisagreement{4.35}{2.3}{label2universe1}{0.4}{0.4}
  \membershipdisagreementfailure{4.35}{2.3}{label2universe1}{0.4}{0.4}{4mm}{B}{byz}
  \membershipdisagreementfailure{4.35}{2.3}{label2universe1}{0.4}{0.4}{4mm}{A}{byz}

  \draw[line width=0.5mm, black] (2.83, 2) -- (5.27, 2);

  \membershipdisagreement{3}{.8}{label2universe2}{0.4}{0.4}
  \membershipdisagreementfailure{3}{.8}{label2universe2}{0.4}{0.4}{4mm}{C}{byz}
  \membershipdisagreementfailure{3}{.8}{label2universe2}{0.4}{0.4}{4mm}{A}{byz}

  \membershipdisagreement{4.35}{.8}{label2universe3}{0.4}{0.4}
  \membershipdisagreementfailure{4.35}{.8}{label2universe3}{0.4}{0.4}{4mm}{D}{byz}
  \membershipdisagreementfailure{4.35}{.8}{label2universe3}{0.4}{0.4}{4mm}{A}{byz}

\end{tikzpicture}
\caption[Membership Disagreement Learner Graph]{
Membership Disagreement Learner Graph:
Only edge labels are shown.
Learners are drawn as eyes, with
 darker blue learners on the left, and lighter, outlined red
 learners on the right.
Note that the edges are labeled with safe sets for which
 the pair of learners want to agree.
In each safe set, \byzantine failures are marked with a demonic
 face, and the unmarked acceptors are \textit{safe}.
All the edges except the rightmost and leftmost share the same label,
 in the middle.
}
\label{fig:membershipdisagreementclg}
\end{figure}

For the red learners, quorums are any 3 red or black acceptors,
 while for the blue learners, quorums are any 3 blue or black
 acceptors.
Quorums are shown in 
\cref{fig:membershipdisagreementquorums}.

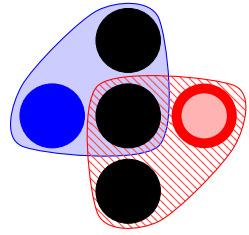
\begin{figure}
  \centering
\begin{tikzpicture}[scale=1]
\draw[blue,fill=blue!20] plot [smooth cycle] coordinates {
    (-0.4, 0.6)
    (-0.4, 1.4)
    (.6,   2.4)
    (1.4,  2.3)
    (1.4,  .6)
  };
\draw[red, pattern=north west lines, pattern color=red!70] plot [smooth cycle] coordinates {
    (2 + .4,    2 + -.6)
    (2 + .4,    2 + -1.4)
    (2 + -.6,   2 + -2.4)
    (2 + -1.4,  2 + -2.3)
    (2 + -1.4,  2 + -.6)
  };
\membershipdisagreement{0}{0}{Final}{1}{1}{10mm}
\end{tikzpicture}
\caption[Membership Disagreement Quorums]{
  Membership Disagreement Quorums:
  A quorum for the blue learners is shown in the light solid blue
   region, and a quorum for the red learners is shown as a striped
   red region.
}
\label{fig:membershipdisagreementquorums}
\end{figure}
Note that in order to tolerate 2 total failures, a homogeneous
 \byzantine-fault-tolerant consensus protocol would need at least 7
 acceptors.
Heterogeneity spares the expense of 2
 unnecessary additional acceptors.

\subsubsection{Failure Disagreement}
\label{sec:failuredisagreement}
Alternatively, learners might disagree about the types of failures
 they expect. 
Even if each learner expects homogeneous failures, learners'
 expectations may differ.

For example, consider a protocol with 5 acceptors. 
Two learners, called \textit{blue}, want termination and agreement so long as there
 isn't more than one failure, even if that failure is \byzantine.
Another two learners, called \textit{red}, want termination and agreement so long as
 there are no more than 2  crash failures, but accept that they may
 disagree if there is a \byzantine failure.
The red learners and the blue learners accept that they may disagree iff
 there is at least one \byzantine failure.
Thus we draw the learner graph~(\cref{sec:learnergraph}), in
\cref{fig:failuredisagreementclg}.

\begin{figure}
 \centering
\begin{tikzpicture}[scale=1]
  \pgfdeclareimage[width=10mm]{eye-blue}{figures/icons/eye-icon-custom-larger-darkblue.pdf}
  \pgfdeclareimage[width=10mm]{eye-red}{figures/icons/eye-icon-custom-outlinered.pdf}
  \pgfdeclareimage[width=8mm]{skull}{figures/icons/Skull-Crossbones.pdf}
  \pgfdeclareimage[width=10mm]{devil}{figures/icons/devil.pdf}
  \pgfdeclareimage[width=13mm]{label}{figures/icons/label.pdf}
  \pgfdeclareimage[width=16mm]{label-sideways}{figures/icons/label-sideways.pdf}

  \node[]() at (-2,3.5) {
    \begin{pgfrotateby}{\pgfdegree{270}}\pgfuseimage{label}\end{pgfrotateby}};
  \node[]() at (-1.5,1.5) {\pgfuseimage{devil}};

  \node[]() at (-2.3,5) {
    \begin{pgfrotateby}{\pgfdegree{270}}\pgfuseimage{label}\end{pgfrotateby}};
  \node[]() at (-2,3) {\pgfuseimage{skull}};

  \node[]() at (-2.3,2) {
    \begin{pgfrotateby}{\pgfdegree{270}}\pgfuseimage{label}\end{pgfrotateby}};
  \node[]() at (-2,0) {\pgfuseimage{skull}};

  \node[]() at (6.1,3.7) {
    \begin{pgfrotateby}{\pgfdegree{90}}\pgfuseimage{label}\end{pgfrotateby}};
  \node[]() at (4.15,3) {\pgfuseimage{skull}};
  \node[]() at (5.05,3) {\pgfuseimage{skull}};

  \node[]() at (6.1,0.7) {
    \begin{pgfrotateby}{\pgfdegree{90}}\pgfuseimage{label}\end{pgfrotateby}};
  \node[]() at (4.15,0) {\pgfuseimage{skull}};
  \node[]() at (5.05,0) {\pgfuseimage{skull}};

  \node[](blue0) at (0,0) {\pgfuseimage{eye-blue}};
  \node[](blue1) at (0,3) {\pgfuseimage{eye-blue}};
  \draw[line width=3mm] (blue0) -- (blue1);

  \node[](red0) at (2.5,0) {\pgfuseimage{eye-red}};
  \node[](red1) at (2.5,3) {\pgfuseimage{eye-red}};
  \draw[line width=3mm] (blue0) -- (red1);
  \draw[line width=3mm] (blue1) -- (red0);
  \draw[line width=3mm] (blue0) -- (red0);

  \draw[line width=3mm] (blue1) -- (red1);

  \draw[line width=3mm] (red0) -- (red1);
\end{tikzpicture}

\caption[Failure Disagreement Learner Graph]{
  Failure Disagreement Learner Graph:
  Learners are shown as eyes, with darker blue learners on the left,
   and lighter, outlined red learners on the right.
The edge between the blue learners is labeled with one \byzantine
 failure: they want to agree even if one acceptor is \byzantine.
All the other learners may disagree if there is a \byzantine failure.
The learners are each labeled with a number of crash failures: they
   want to terminate even if one or two acceptors crash. 
}
\label{fig:failuredisagreementclg}
\end{figure}
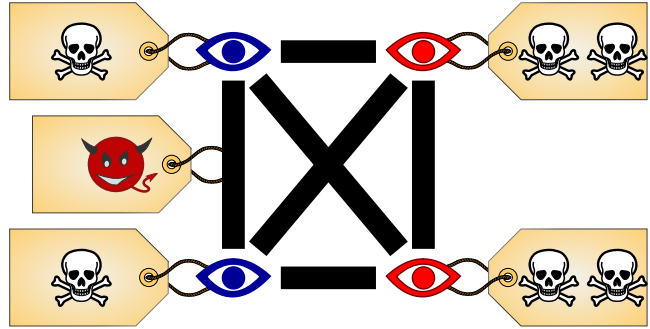

For the red learners, quorums are any 3 acceptors,
 while for the blue learners, quorums are any 4 acceptors.
The safe sets on the edge between the blue learners consist of any 4
 acceptors, and on all the other edges consist of all the acceptors.
Quorums are illustrated in~\cref{fig:failuredisagreement}.

 \begin{figure}
   \centering
\begin{tikzpicture}[scale=1]

\draw[blue,fill=blue!20] plot [smooth cycle] coordinates {
    (-.4,-1)
    (-1.3,-1)
    (-1.3,1.1)
    (.5,1.5)
    (1.45,.3)
    (1.45,-.2)
    (1,-.6)
    (0,0)
  };
\draw[red, pattern=north west lines, pattern color=red!70] plot [smooth cycle] coordinates {
    ( 0.8, 1.3)
    (-0.1, 1.2)
    (-0.1,-1.2)
    ( 0.8,-1.3)
    ( 1.6,0)
  };
\def \n {5}
\def \radius {1}
\foreach \a in {1,2,...,\n}{
  \draw(\a*360/\n: 10mm) node[circle, fill=black, inner sep=3mm, draw=black, line width=0mm]{\ };
}
\end{tikzpicture}
\caption[Failure Disagreement Example]{Failure Disagreement Example:
A quorum for the blue learners is shown in the light solid blue
 region, and a quorum for the red learners is shown as a striped red
 region.
}
\label{fig:failuredisagreement}
\end{figure}
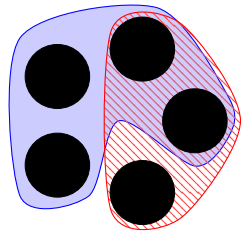

Note that in order to tolerate 2 total failures, a homogeneous
 \byzantine-fault-tolerant consensus protocol would need at least 7
 acceptors.
Heterogeneity spares the expense (in latency and resources) of 2
 unnecessary additional acceptors.

\subsection{Heterogeneous Learners and Failures}
\label{sec:exampleheterogeneouslearnersfailures}
Heterogeneous Paxos can express protocols wherein learners have
 different failure assumptions, and failures are heterogeneous, but
 acceptors are homogeneous.
Consider a protocol with 12 acceptors, and 4 learners.
Two learners, called \textit{blue}, want agreement and termination so long as there
 aren't aren't more than 3 \byzantine failures, and 1 additional \crash
 failure.
The edges between blue learners have safe sets consisting of any 9 acceptors.
Another two learners, called \textit{red}, want agreement and termination so long as
 there is no more than 1 \byzantine failure, and 4 additional \crash
 failures.
The edges between red learners have safe sets consisting of any 11 acceptors.
The red learners and the blue learners want to agree if there
 is no more than 1 \byzantine failure.
They accept that they may disagree otherwise.
The edges between red and blue learners likewise have safe sets consisting of any 11 acceptors.
Thus we draw the learner graph~(\cref{sec:learnergraph}) in 
      \cref{fig:hetobfailclg}.

 \begin{figure}
 \centering
\begin{tikzpicture}[scale=1]
  \pgfdeclareimage[width=10mm]{eye-blue}{figures/icons/eye-icon-custom-larger-darkblue.pdf}
  \pgfdeclareimage[width=10mm]{eye-red}{figures/icons/eye-icon-custom-outlinered.pdf}
  \pgfdeclareimage[width=5mm]{skull}{figures/icons/Skull-Crossbones.pdf}
  \pgfdeclareimage[width=6mm]{devil}{figures/icons/devil.pdf}
  \pgfdeclareimage[width=13mm]{label}{figures/icons/label.pdf}
  \pgfdeclareimage[width=16mm]{label-sideways}{figures/icons/label-sideways.pdf}

  \node[]() at (-2,3.5) {
    \begin{pgfrotateby}{\pgfdegree{270}}\pgfuseimage{label}\end{pgfrotateby}};
  \node[]() at (-1.5,1.2) {\pgfuseimage{devil}};
  \node[]() at (-2.1,1.2) {\pgfuseimage{devil}};
  \node[]() at (-1.8,1.8) {\pgfuseimage{devil}};

  \node[]() at (-2.3,5) {
    \begin{pgfrotateby}{\pgfdegree{270}}\pgfuseimage{label}\end{pgfrotateby}};
  \node[]() at (-2.5,3.3) {\pgfuseimage{skull}};
  \node[]() at (-2.5,2.7) {\pgfuseimage{skull}};
  \node[]() at (-1.5,3.3) {\pgfuseimage{skull}};
  \node[]() at (-1.5,2.7) {\pgfuseimage{skull}};

  \node[]() at (-2.3,2) {
    \begin{pgfrotateby}{\pgfdegree{270}}\pgfuseimage{label}\end{pgfrotateby}};
  \node[]() at (-2.5,0.3) {\pgfuseimage{skull}};
  \node[]() at (-2.5,-0.3) {\pgfuseimage{skull}};
  \node[]() at (-1.5,0.3) {\pgfuseimage{skull}};
  \node[]() at (-1.5,-0.3) {\pgfuseimage{skull}};

  \node[]() at (6.1,3.7) {
    \begin{pgfrotateby}{\pgfdegree{90}}\pgfuseimage{label}\end{pgfrotateby}};
  \node[]() at (4.1,3.3) {\pgfuseimage{skull}};
  \node[]() at (4.1,2.7) {\pgfuseimage{skull}};
  \node[]() at (5.1,3.3) {\pgfuseimage{skull}};
  \node[]() at (5.1,2.7) {\pgfuseimage{skull}};
  \node[]() at (4.6,3.0) {\pgfuseimage{skull}};

  \node[]() at (6.1,0.7) {
    \begin{pgfrotateby}{\pgfdegree{90}}\pgfuseimage{label}\end{pgfrotateby}};
  \node[]() at (4.1,0.3) {\pgfuseimage{skull}};
  \node[]() at (4.1,-0.3) {\pgfuseimage{skull}};
  \node[]() at (5.1,0.3) {\pgfuseimage{skull}};
  \node[]() at (5.1,-0.3) {\pgfuseimage{skull}};
  \node[]() at (4.6,0.0) {\pgfuseimage{skull}};

  \node[]() at (5.8,2.2) {
    \begin{pgfrotateby}{\pgfdegree{90}}\pgfuseimage{label}\end{pgfrotateby}};
  \node[]() at (4.3,1.5) {\pgfuseimage{devil}};

  \node[](blue0) at (0,0) {\pgfuseimage{eye-blue}};
  \node[](blue1) at (0,3) {\pgfuseimage{eye-blue}};
  \draw[line width=3mm] (blue0) -- (blue1);

  \node[](red0) at (2.5,0) {\pgfuseimage{eye-red}};
  \node[](red1) at (2.5,3) {\pgfuseimage{eye-red}};
  \draw[line width=3mm] (blue0) -- (red1);
  \draw[line width=3mm] (blue1) -- (red0);
  \draw[line width=3mm] (blue0) -- (red0);

  \node[xscale=-1](label1) at (1.25,1.8) {\pgfuseimage{label}};
  \node[](skull0) at (1.27,1.3) {\pgfuseimage{devil}};
  \draw[line width=3mm] (blue1) -- (red1);

  \draw[line width=3mm] (red0) -- (red1);
\end{tikzpicture}
      \caption[Heterogeneous Learners and Failures Learner Graph]{
Heterogeneous Learners and Failures Learner Graph:
Learners are drawn as eyes, with
 darker blue learners on the left, and lighter, outlined red
 learners on the right.
Note that the edges between each pair of learners are labeled with
 the \byzantine failures under which they require agreement.
All edges except the rightmost and leftmost share a central label.
Each learner is labeled with the crash failures under which they require termination.
}
      \label{fig:hetobfailclg}
    \end{figure}
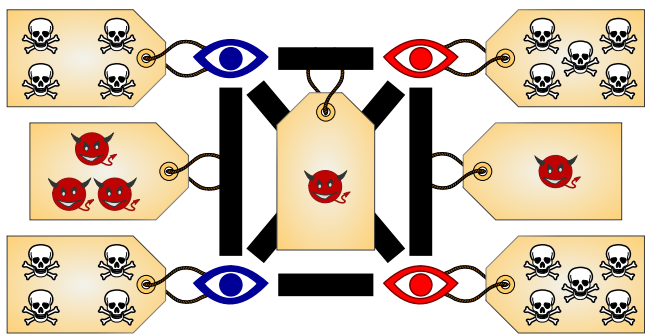

For the red learners, quorums are any 7 acceptors,
 while for the blue learners, quorums are any 8 acceptors.
Example quorums are shown in 
\cref{fig:hetlearnerfail}.

 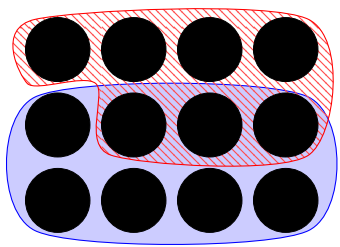
\begin{figure}
   \centering
\begin{tikzpicture}[scale=1]
\draw[blue,fill=blue!20] plot [smooth cycle] coordinates {
    (-1.3,-1.4)
    (-1.3,0.37)
    (2.3,0.37)
    (2.3,-1.4)
  };
\draw[red, pattern=north west lines, pattern color=red!70] plot [smooth cycle] coordinates {
    (-1.3, 1.4)
    ( 2.3, 1.4)
    ( 2.3,-0.4)
    (-0.3,-0.4)
    (-0.5, 0.55)
    (-1.4, 0.55)
  };
\node[circle, fill=black, inner sep=3mm, draw=black, line width=0mm](A) at (-1,-1) {\ };
\node[circle, fill=black, inner sep=3mm, draw=black, line width=0mm](B) at ( 0,-1) {\ };
\node[circle, fill=black, inner sep=3mm, draw=black, line width=0mm](C) at ( 1,-1) {\ };
\node[circle, fill=black, inner sep=3mm, draw=black, line width=0mm](D) at ( 2,-1) {\ };
\node[circle, fill=black, inner sep=3mm, draw=black, line width=0mm](E) at (-1,0) {\ };
\node[circle, fill=black, inner sep=3mm, draw=black, line width=0mm](F) at ( 0,0) {\ };
\node[circle, fill=black, inner sep=3mm, draw=black, line width=0mm](G) at ( 1,0) {\ };
\node[circle, fill=black, inner sep=3mm, draw=black, line width=0mm](H) at ( 2,0) {\ };
\node[circle, fill=black, inner sep=3mm, draw=black, line width=0mm](I) at (-1,1) {\ };
\node[circle, fill=black, inner sep=3mm, draw=black, line width=0mm](J) at ( 0,1) {\ };
\node[circle, fill=black, inner sep=3mm, draw=black, line width=0mm](K) at ( 1,1) {\ };
\node[circle, fill=black, inner sep=3mm, draw=black, line width=0mm](L) at ( 2,1) {\ };
\end{tikzpicture}
\caption[Heterogeneous Learners and Failures Example]{
  Heterogeneous Learners and Failures Example:
  A quorum for the blue learners is shown in the light solid blue
   region, and a quorum for the red learners is shown as a striped
   red region.
}
\label{fig:hetlearnerfail}
\end{figure}

Note that in order to tolerate 5 total failures, a homogeneous
 \byzantine-fault-tolerant consensus protocol would need at least 16
 acceptors.
Heterogeneity spares the expense (in latency and resources) of 4
 unnecessary additional acceptors.

To simultaneously tolerate all learners' worst fears, (3 \byzantine
 and 2 additional crash failures), we'd need 14 acceptors.
The additional detail of heterogeneous learners spares the
 expense (in latency and resources) of 2 unnecessary additional
 acceptors, as opposed to just heterogeneous failures.

\subsection{Heterogeneous Learners and Acceptors}
\label{sec:exampleheterogeneouslearnersacceptors}
Heterogeneous Paxos can express protocols wherein learners have
 different failure assumptions, and acceptors are heterogeneous,
 but failures are homogeneous.
Consider a protocol with 8 acceptors, and 4 learners.
The acceptors are divided into two groups of 4: \textit{blue} and
 \textit{red}.

All learners want agreement whenever at most 1 acceptor is
 \byzantine.
All edges therefore include at safe sets consisting of any 7 acceptors.
Two learners, called \textit{blue}, also want agreement with each
 other, as well as termination, whenever at most 1 blue and 2 red acceptors are \byzantine.
The edge between them therefore includes safe sets with any 3 blue and 2 red acceptors, and their quorums likewise consist of any 3 blue and 2 red acceptors.
Two learners, called \textit{red}, also want agreement with each
 other, as well as termination, whenever at most 1 red and 2 blue acceptors are \byzantine.
The edge between them therefore includes safe sets with any 2 blue and 3 red acceptors, and their quorums likewise consist of any 2 blue and 3 red acceptors.
Example quorums are illustrated in 
\cref{fig:hetlearnersacceptors}.

 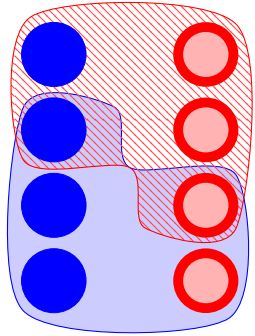
\begin{figure}
   \centering
\begin{tikzpicture}[scale=1]
\draw[blue,fill=blue!20] plot [smooth cycle] coordinates {
    (-0.4, -0.4)
    (-0.4,  2.3)
    (.8,    2.3)
    (1,     1.5)
    (2.4,   1.4)
    (2.3,  -0.5)
  };
\draw[red, pattern=north west lines, pattern color=red!70] plot [smooth cycle] coordinates {
    (2.4,   3.4)
    (2.4,    .7)
    (1.2,    .7)
    (1,     1.5)
    (-.4,   1.6)
    (-.3,   3.5)
  };
\node[circle, fill=blue, inner sep=3mm, draw=blue, line width=0mm](A) at (0,0) {\ };
\node[circle, fill=blue, inner sep=3mm, draw=blue, line width=0mm](B) at (0,1) {\ };
\node[circle, fill=blue, inner sep=3mm, draw=blue, line width=0mm](C) at (0,2) {\ };
\node[circle, fill=blue, inner sep=3mm, draw=blue, line width=0mm](D) at (0,3) {\ };
\node[circle, fill=red!30, inner sep=2.57mm, draw=red, line width=1.3mm](E) at (2,0) {\ };
\node[circle, fill=red!30, inner sep=2.57mm, draw=red, line width=1.3mm](F) at (2,1) {\ };
\node[circle, fill=red!30, inner sep=2.57mm, draw=red, line width=1.3mm](G) at (2,2) {\ };
\node[circle, fill=red!30, inner sep=2.57mm, draw=red, line width=1.3mm](H) at (2,3) {\ };
\end{tikzpicture}
\caption[Heterogeneous Learners and Acceptors Example]{Heterogeneous Learners and Acceptors Example:
Here, we draw one group as solid blue circles, and the other as hollow
 red circles.
A quorum for the blue learners is shown in the light solid blue
 region, and a quorum for the red learners is shown as a striped red
 region.
}
\label{fig:hetlearnersacceptors}
\end{figure}

To tolerate all the failures any learner believes possible,
 a consensus with homogeneous learners would need at least one more
 acceptor.
Taking heterogeneous learners into account spares the expense of that
 unnecessary acceptor

Since a single learner can tolerate 3 total failures, a protocol with
 homogeneous acceptors would require at least 10 acceptors. 
Heterogeneity spares the expense of 2 unnecessary additional
 acceptors.

\subsection{Heterogeneous Learners, Failures, and Acceptors}
\label{sec:exampleheterogeneouslearnersfailuresacceptors}
Heterogeneous Paxos can express protocols wherein learners have
 different failure assumptions, failures are heterogeneous,
 and so are acceptors.
Consider a protocol with 9 acceptors, and 4 learners.
The acceptors are divided into 3 groups of 3: \textit{blue},
 \textit{black}, and \textit{red}.

All learners want agreement when no \byzantine failures occur: all edges in the learner graph include safe sets with all 9 acceptors.
Two learners, called \textit{blue}, want agreement whenever at most 1 black acceptor, and all the red acceptors, are \byzantine: the edge between them includes safe sets with all the blue acceptors, and any 2 black acceptors.
Furthermore, blue learners want termination even when one blue acceptor has crashed, one black acceptor is \byzantine, and all red acceptors are \byzantine.
The blue learners' quorums therefore consist of any 2 blue acceptors, as well as any 2 black acceptors.

Another two learners, called \textit{red}, want agreement whenever at most 1 black acceptor, and all the blue acceptors, are \byzantine: the edge between them includes safe sets with all the red acceptors, and any 2 black acceptors.
Furthermore, red learners want termination even when one red acceptor has crashed, one black acceptor is \byzantine, and all blue acceptors are \byzantine.
The red learners' quorums therefore consist of any 2 red acceptors, as well as any 2 black acceptors.
Example quorums are illustrated in 
\cref{fig:hetx}.

 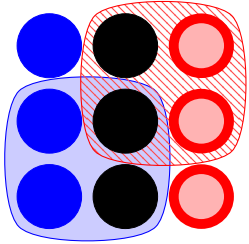
\begin{figure}
 \centering
\begin{tikzpicture}[scale=1]
\draw[blue,fill=blue!20] plot [smooth cycle] coordinates {
    (-.4, -.4)
    (-.4, 1.4)
    (1.4, 1.4)
    (1.4,  -.4)
  };
\draw[red, pattern=north west lines, pattern color=red!70] plot [smooth cycle] coordinates {
    (2.4, 2.4)
    ( .6, 2.4)
    ( .6,  .6)
    (2.4,  .6)
  };
\node[circle, fill=blue, inner sep=3mm, draw=blue, line width=0mm](A) at (0,0) {\ };
\node[circle, fill=blue, inner sep=3mm, draw=blue, line width=0mm](B) at (0,1) {\ };
\node[circle, fill=blue, inner sep=3mm, draw=blue, line width=0mm](C) at (0,2) {\ };
\node[circle, fill=black, inner sep=3mm, draw=black, line width=0mm](D) at (1,0) {\ };
\node[circle, fill=black, inner sep=3mm, draw=black, line width=0mm](E) at (1,1) {\ };
\node[circle, fill=black, inner sep=3mm, draw=black, line width=0mm](F) at (1,2) {\ };
\node[circle, fill=red!30, inner sep=2.57mm, draw=red, line width=1.3mm](G) at (2,0) {\ };
\node[circle, fill=red!30, inner sep=2.57mm, draw=red, line width=1.3mm](H) at (2,1) {\ };
\node[circle, fill=red!30, inner sep=2.57mm, draw=red, line width=1.3mm](I) at (2,2) {\ };
\end{tikzpicture}
\label{fig:hetx}
\caption[Heterogeneous Learners, Acceptors, and Failures Example]{
Heterogeneous Learners, Acceptors, and Failures Example:
Here, we draw one group as solid blue circles, and the other as hollow
 red circles.
A quorum for the blue learners is shown in the light solid blue
 region, and a quorum for the red learners is shown as a striped red
 region.}
\end{figure}

For a fully homogeneous consensus to tolerate 4 \byzantine failures
 would require 13 acceptors, so heterogeneity spares the cost of 4
 additional unnecessary acceptors.
\fi

\section{Implementation}
\label{sec:charlotte}
Since Heterogeneous Paxos is designed for cross-domain
 applications where different parties have different trust
 assumptions, it is well-suited for blockchains.
We constructed a variety of example blockchains using the Charlotte
 framework~\cite{CharlotteTR}, which allows for pluggable
 integrity (consensus) mechanisms.
\ifreport
In particular, we added a new ``proof of consensus'' subtype 
 featuring decision sets~(\cref{defn:decision}), for a given
 learner.
Each block on a chain features such a proof, to demonstrate that the
 learners for that chain have decided on that block at that height.

\subsection{Cross-Domain Commits}
\label{sec:meet}
The Charlotte framework~\cite{CharlotteTR} allows a single
 proof to atomically commit one block into two distinct data
 structures.
This is similar to atomically committing a transaction on two separate
 databases.
Such a proof must simultaneously satisfy the learners (who dictate
 the failure tolerance requirements) relevant to both data structures.
Heterogeneous Paxos provides an expressive language for learners
 to specify precise trust assumptions: the learner
 graph~(\cref{sec:learnergraph}).
If one group of learners care about one data structure, they
 presumably have edges among themselves dictating the conditions under
 which they agree on the makeup of that data structure.
If another group cares about another data structure, then when both
 groups want to commit a block to both, they need to specify a new
 learner graph, including edges between the two groups: the conditions
 under which they want the commit to be atomic.

To preserve maximum \textit{safety} (but not necessarily maximum
 liveness), the new learner graph will feature new quorums for each learner:
 each quorum necessary for a dual-data-structure protocol
 is the union of one quorum from each component type.
With this construction, we can atomically commit a single \block onto
 multiple Heterogeneous Paxos chains.

To preserve maximum \textit{liveness}, learners should simply not
 demand to agree between the two groups.
We can posit some hypothetical acceptor which everyone knows to be
 \byzantine, and include that acceptor in all safe sets in the edges
 between the two groups.
Each will commit blocks independently, and they will likely not agree
 (it's not very safe).
With this construction, Heterogeneous Paxos neatly describes
 independent consensus protocols as a single protocol.

\subsection{Charlotte Representation}
\label{sec:charlotteimpl}
We implemented a prototype of Heterogeneous Paxos as a ``Fern''
 Integrity service within the Charlotte
 framework~\cite{CharlotteTR}.
The proofs it produces are specific to each learner's assumptions.
We also use Charlotte's blocks as messages in the consensus protocol,
 taking advantage of its reference-by-hash format, as well as
 marshaling and message-passing functions.
\fi
Our servers are implemented in 1,704 lines of
 \ifreport\else open-source \fi Java\ifreport,\footnote{
    excluding import statements and comments}
    available at \url{\charlotteURL}\fi.
Charlotte uses 256-bit SHA3 hashes, P256 elliptic curve signatures,
 protobufs~\cite{protobufs} for marshaling\ifreport and unmarshaling\fi, and
 gRPC~\cite{grpc} for transmitting messages over TLS 1.3 channels.

\ifreport

Our implementation has a separate \textit{light client} that
 does not participate in consensus;
 it merely requests that a block be added to a chain, and an
 acceptor server acts as the proposer in the consensus protocol.
Including communication with the light client, the process has a
 minimum latency of 5 messages.

\subsection{Evaluation}
\label{sec:evaluation}
We ran two types of experiments: heterogeneous configuration
 experiments demonstrate the advantages of tailoring consensus to a
 heterogeneous environment, and contention experiments measure our
 implementation's ability to handle multiple simultaneous proposals.
In each experiment, we assigned quorums to preserve maximum
 safety~\cref{sec:meet}, and a single light client appended 2,000
 blocks.
To avoid any ``warm-up'' or ``cool-down'' effects, the measurements
 ignore the first and last 500 blocks.
Our servers were VMs with 1 physical core of an Intel E5-2690 2.9\,GHz
 CPU, and 8 GB of RAM each.
To simulate geodistribution, we added 100 ms of artificial latency to
 each network connection, so the theoretical optimum latency from the
 light client, through the consensus protocol, and back, is 500 ms. 

\subsubsection{Heterogeneous Configuration Experiments}
\begin{figure} \centering
\begin{tikzpicture}[scale=1]
\def\xscale{1.24} 
\def\yscale{5} 
\pgfdeclareimage[height=\yscale  cm]{clusterA}{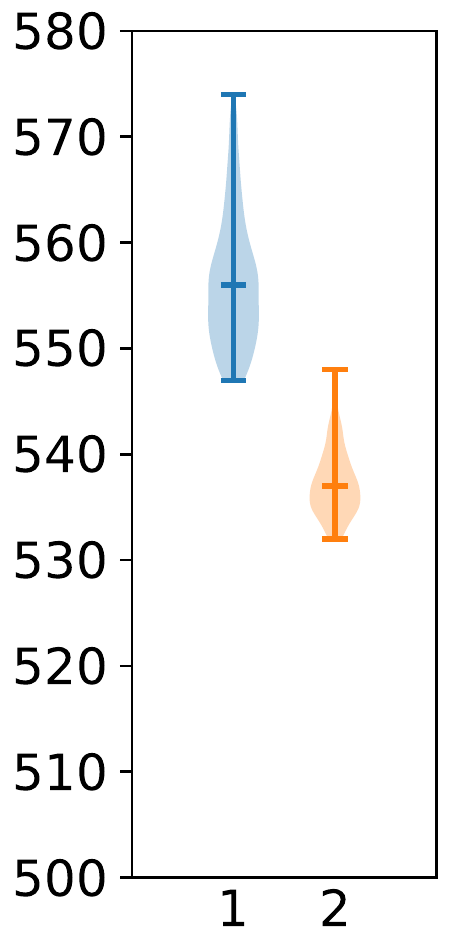}
\pgfdeclareimage[height=\yscale  cm]{clusterB}{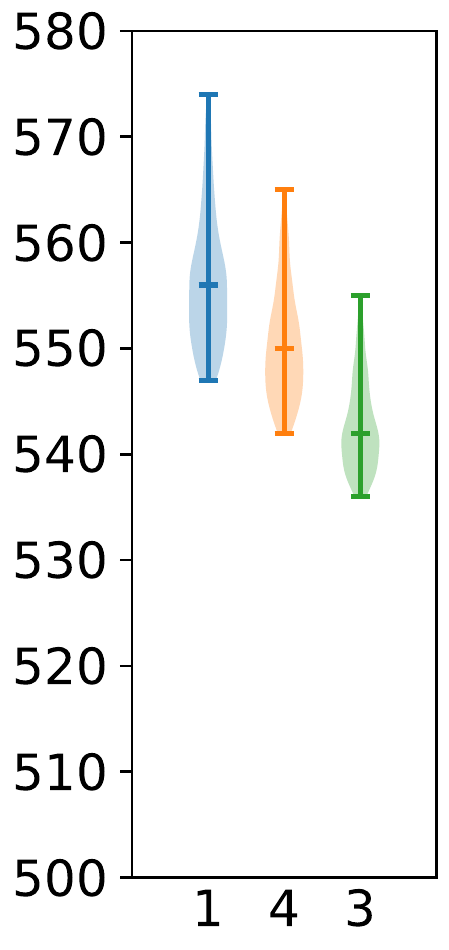}
\pgfdeclareimage[height=\yscale  cm]{clusterC}{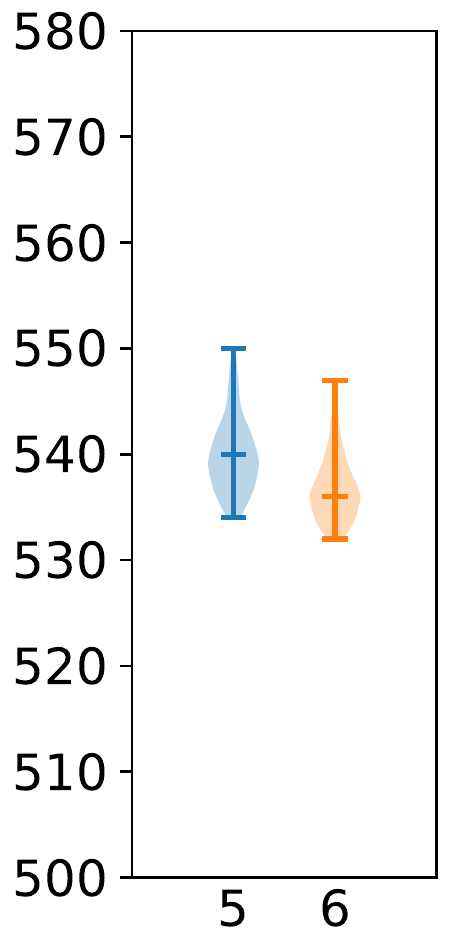}
\pgfdeclareimage[height=\yscale  cm]{clusterD}{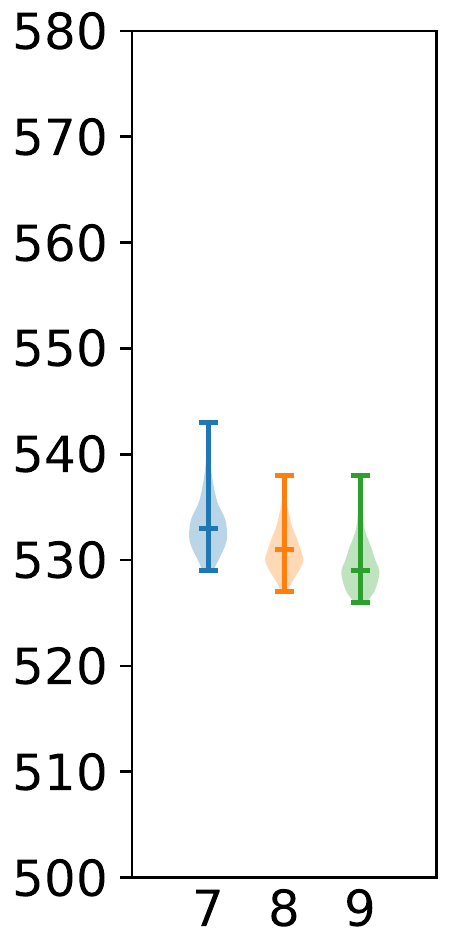}
\pgfdeclareimage[height=\yscale  cm]{clusterE}{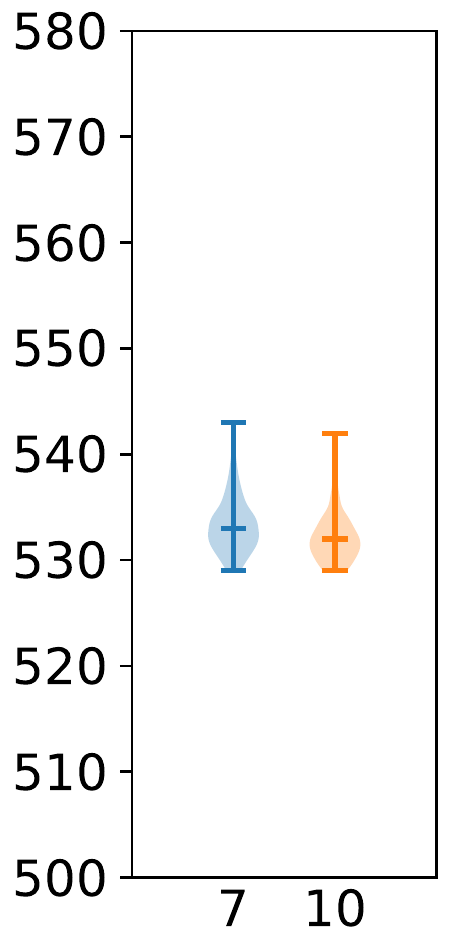}
\pgfdeclareimage[height=\yscale  cm]{clusterF}{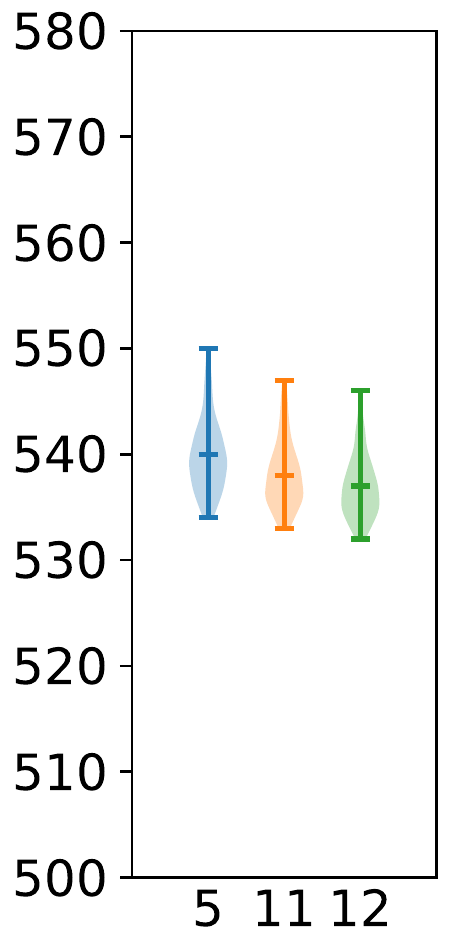}

\node[] at (-5*\xscale,\yscale){\pgfuseimage{clusterA}};
\node[] at (-3*\xscale,\yscale){\pgfuseimage{clusterB}};
\node[] at (-1*\xscale,\yscale){\pgfuseimage{clusterC}};
\node[] at ( -5*\xscale,0){\pgfuseimage{clusterD}};
\node[] at ( -3*\xscale,0){\pgfuseimage{clusterE}};
\node[] at ( -1*\xscale,0){\pgfuseimage{clusterF}};

\def\labelwidthoffset{.15*\yscale}
\def\labelheight{.4*\yscale}
\node[] at (-5*\xscale + \labelwidthoffset, \yscale+\labelheight){\purple{(A)}};
\node[] at (-3*\xscale + \labelwidthoffset, \yscale+\labelheight){\purple{(B)}};
\node[] at (-1*\xscale + \labelwidthoffset, \yscale+\labelheight){\purple{(C)}};
\node[] at ( -5*\xscale + \labelwidthoffset, \labelheight){\purple{(D)}};
\node[] at ( -3*\xscale + \labelwidthoffset, \labelheight){\purple{(E)}};
\node[] at ( -1*\xscale + \labelwidthoffset, \labelheight){\purple{(F)}};

\node[] at (-6.1*\xscale, 2.5){\rotatebox[origin=l]{90}{
                              \purple{Latency(ms)}}};
\node[] at (-3.5, -.52*\yscale){\purple{Experiment Number}};
\end{tikzpicture}
\\\vspace{-2mm}
\caption{
Round-trip latency in various heterogeneous configurations.
Bars represent the 5th percentile, median, and 95th percentile for
 each experiment, within the distribution shown as a violin plot.
Experiments are clustered according to their ability to tolerate
 similar failures with varying degrees of heterogeneity (some are
 shown in multiple clusters).
Optimal latency is 500 ms.
}
\label{fig:hetlatency}
\end{figure} 
\begin{table}
\centering
\begin{tabular}{| r | r | r | r | l | l | l | l | l |  l |}
  \hline
  \textbf{\#}
  & \multicolumn{3}{c|}{\textbf{acceptors}}
  & \multicolumn{3}{c|}{\textbf{agreement}}
  & \multicolumn{2}{c|}{\textbf{termination}}
  & \textbf{\S}
\\\hline
  & \textbf{\blue{b}}
  & \textbf{\red{r}}
  & \textbf{other}
  & \textbf{all}
  & \textbf{\blue{blue}}
  & \textbf{\red{red}}
  & \textbf{\blue{blue}}
  & \textbf{\red{red}}
  &
\\\hline
  1 & \multicolumn{3}{r|}{16} & \multicolumn{3}{l|}{5} & \multicolumn{2}{l|}{5} &
\\\hline
  2 & \blue 3 & \red 3            & 3                  &
      0       & 1t $ \land $ \red{3r} & \blue{3b} $ \land $ 1t &
      \blue{1b}$\land$1t$\land$\red{3r} & \blue{3b}$\land$1t$\land$\red{1r}
      &\ref{sec:exampleheterogeneouslearnersfailuresacceptors}
\\\hline
  3 & \multicolumn{3}{r|}{12} & 1 & 3 & 1 &  4 & 5
      &\ref{sec:exampleheterogeneouslearnersfailures}
\\\hline
  4 & \multicolumn{3}{r|}{14} & \multicolumn{3}{l|}{3} & \multicolumn{2}{l|}{5} &
\\\hline
  5 & \multicolumn{3}{r|}{10} & \multicolumn{3}{l|}{3} & \multicolumn{2}{l|}{3} &
\\\hline
  6 & \blue 4 & \red 4 & 0 &
    \red{1r}$\lor$\blue{1b} & \red{2r}$\land$\blue{1b} & \red{1r}$\land$\blue{2b} & 
    \red{2r} $ \land $ \blue{1b} & \red{1r} $ \land $ \blue{2b} 
    &\ref{sec:exampleheterogeneouslearnersfailuresacceptors}
\\\hline
  7 & \multicolumn{3}{r|}{7} & \multicolumn{3}{l|}{2} & \multicolumn{2}{l|}{2} &
\\\hline
  8 & \multicolumn{3}{r|}{6} & \multicolumn{3}{l|}{1} & \multicolumn{2}{l|}{2} 
    &\ref{sec:exampleheterogeneousfailures}
\\\hline
  9 & \multicolumn{3}{r|}{5} & 0 & 1 & 0 & 1 & 2 
    &\ref{sec:failuredisagreement}\\\hline
  10&\blue 3 & \red 3 & 0 & \multicolumn{3}{l|}{\blue{2b} $ \lor $ \red{2r}} 
    & \multicolumn{2}{l|}{\blue{2b} $ \lor $ \red{2r}} 
    &\ref{sec:exampleheterogeneousacceptors}
\\\hline
  11& \multicolumn{3}{r|}{9} & \multicolumn{3}{l|}{2} & \multicolumn{2}{l|}{3} & 
\\\hline
  12&\blue 4 & \red 4 & 0 & \multicolumn{3}{l|}{\blue{2b} $ \lor $ \red{2r}} 
    & \multicolumn{2}{l|}{(\blue{2b} $ \land $ \red{1r}) $ \lor $ (\blue{1b} $ \land $ \red{2r})} 
    &\ref{sec:exampleheterogeneousfailuresacceptors}
\\\hline
\end{tabular}
\caption{
Each experiment is numbered (\#).
For each, we divide acceptors into groups of \red{red} (\red r),
\blue{blue} (\blue b), and other (t), as necessary.
Each experiment has multiple \red{red} and  \blue{blue} learners.
For learners of the same or different colors, the agreement column
 lists the number of \byzantine failures of each acceptor group for
 which the learners require agreement.
For each color of learner, the termination column lists the number of
 crash failures of each acceptor group for which the learners require
 termination.
}
\label{table:experiments}
\end{table}
 
To show the advantages of tailoring consensus for
 heterogeneous environments, we ran several experiments, each with a
 single blockchain and a different \clg~(\cref{sec:learnergraph}).
In general, heterogeneous configurations save resources and latency
 compared with homogeneous configurations tolerating the same
 failures.
\Cref{table:experiments} describes the acceptors and learner
 graph for each experiment.
\Cref{fig:hetlatency} shows the round-trip latency distributions for
 committing blocks in each.

Experiment 2 uses the setup from~\cref{sec:introexample}.
Since the learners can tolerate 5 failures, some of them
 \byzantine, a fully homogeneous system would need 16
 acceptors, becoming experiment 1. 
Homogeneity costs 7 additional unnecessary acceptors, and
 as~\cref{fig:hetlatency}.A shows, 51\% median latency overhead
 (over the 500 ms optimal latency).

Similarly, experiment 3 uses heterogeneous learners and failures.
To tolerate all learners' worst fears, two more
 acceptors would be necessary, as represented in experiment 4.
Heterogeneous learners save the unnecessary expense of 2
 acceptors, and, as~\cref{fig:hetlatency}.B shows, 14\% median
 latency overhead.
Since learners can tolerate 5 failures, some of them
 \byzantine, a fully homogeneous system would need 16
 acceptors, becoming experiment 1. 
Heterogeneous failures alone save the unnecessary expense of a
 further 2 acceptors, and, as~\cref{fig:hetlatency}.B shows, 12\%
 median latency overhead.
For experiment 3, then, heterogeneity saves the unnecessary expense of
 4 acceptors, and, as~\cref{fig:hetlatency}.B shows, 33\% median
 latency overhead.

Without the ability to distinguish heterogeneous acceptors,
 experiment 6 becomes experiment 5: learners need to tolerate 3
 \byzantine failures.
Heterogeneous acceptors save the unnecessary expense of 2 
 acceptors, and, as~\cref{fig:hetlatency}.C shows, 11\%
 median latency overhead.

Since learners in experiment 8 can tolerate 2 failures, 1 of them
 \byzantine, a homogeneous system would need 7 acceptors,
 becoming experiment 7.
Heterogeneity saves the unnecessary expense of an additional
 acceptor, and, as~\cref{fig:hetlatency}.D shows, 6\%
 median latency overhead.
Without accounting for heterogeneous learners, experiment 9
 would need to tolerate 1 \byzantine and 1 additional crash
 failure, becoming experiment 8. 
Heterogeneous learners save the expense of an unnecessary
 acceptor and,  as~\cref{fig:hetlatency}.D shows, 7\%
 median latency overhead.
A fully homogeneous consensus would need to tolerate 2
 \byzantine failures, becoming experiment 7.
Heterogeneity saves the unnecessary expense of 2 additional
 acceptors, and, as~\cref{fig:hetlatency}.D shows, 14\%
 median latency overhead.

Without the ability to distinguish heterogeneous acceptors,
 learners in experiment 10 would reduce to a homogeneous consensus
 tolerating 2 \byzantine failures.
They would need 7 acceptors, becoming experiment 7.
Heterogeneity saves the unnecessary expense of an additional
 acceptor, and, as~\cref{fig:hetlatency}.E shows, 3\%
 median latency overhead.

Without heterogeneous failures, experiment 11 becomes a homogeneous
 system tolerating 3 \byzantine failures.
Learners would need 10 acceptors, becoming experiment 5.
Heterogeneity saves the expense of an additional acceptor
 and, as~\cref{fig:hetlatency}.F shows, 5\% median latency
 overhead.
Without accounting for heterogeneous acceptors, experiment 12
 would need to tolerate 2 \byzantine and 1 crash
 failure, becoming experiment 11.
Heterogeneous acceptors save the expense of 1 unnecessary
 acceptor, and, as~\cref{fig:hetlatency}.F shows, 3\%
 median latency overhead. 
A fully homogeneous system tolerating 3 \byzantine failures
 would need 10 acceptors, becoming experiment 5.
Heterogeneity saves the expense of 2 unnecessary
 acceptors, and, as~\cref{fig:hetlatency}.F shows, 8\%
 median latency overhead.

\subsubsection{Multichain shared \blocks}
\label{sed:multichain}
Shared (joint) \blocks facilitate cross-domain interaction.
In these experiments, the client appends blocks to
 2--4 chains, simultaneously, so a single execution of consensus
 decides atomically whether the block is appended to all chains, or
 none of them.
Each chain uses with 4 or 7 ``Fern'' Consensus acceptors,
 and tolerates 1 or 2 \byzantine failures.
These cross-chain commits preserve the \textit{safety} of both
 chains~(\cref{sec:meet}): quorums for the cross-chain commits each
 include a quorum from each individual chain.
As the yellow lines in \cref{fig:ManyMulti} show,
 latency scales roughly linearly with the number of chains.
Each acceptor's computational overhead
 (marshaling messages and verifying signatures) is linear in the
 number of chains.
The darker green lines in  \cref{fig:ManyMulti} serve as a control:
 running 1--4 independent chains in parallel (on separate VMs) does
 not affect individual chains' latency.

\paragraph{Single Chain}
In these experiments, a client appends 2000 successive \blocks to one
 chain.
Mean latency is 527 ms for a chain with 4 Fern servers and 538$ms$
 for 7 Fern servers.
Since the best possible latency is 500 ms, these results are
 promising.
Overheads include cryptographic signatures, verification, and garbage
 collection.

\begin{figure}

\centering
\begin{tabular}{ccc}

\begin{minipage}{.3\textwidth}

\begin{tabular}{rc}
  \rotatebox[origin=l]{90}{Latency (ms)}&
\includegraphics[width=.8\textwidth]{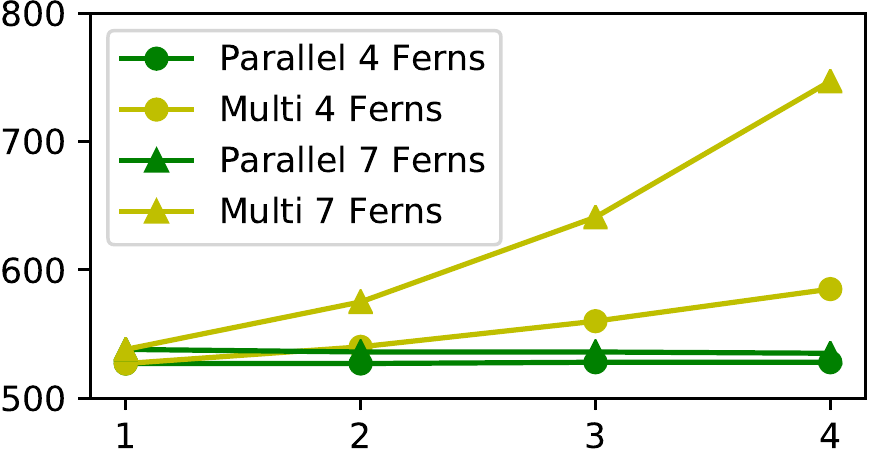}
\\ & Number of Chains
\end{tabular}

\end{minipage} & \begin{minipage}{.3\textwidth}

\begin{tabular}{rc}
\rotatebox[origin=l]{90}{blocks/sec}&
\includegraphics[width=.8\textwidth]{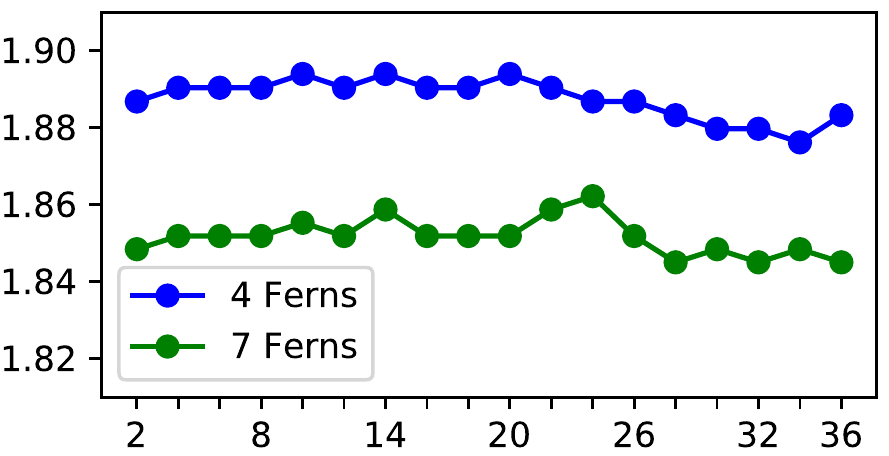}
\\ & Number of Clients
\end{tabular}

\end{minipage} & \begin{minipage}{.3\textwidth}

\begin{tabular}{rc}
\rotatebox[origin=l]{90}{blocks/sec}&
\includegraphics[width=.8\textwidth]{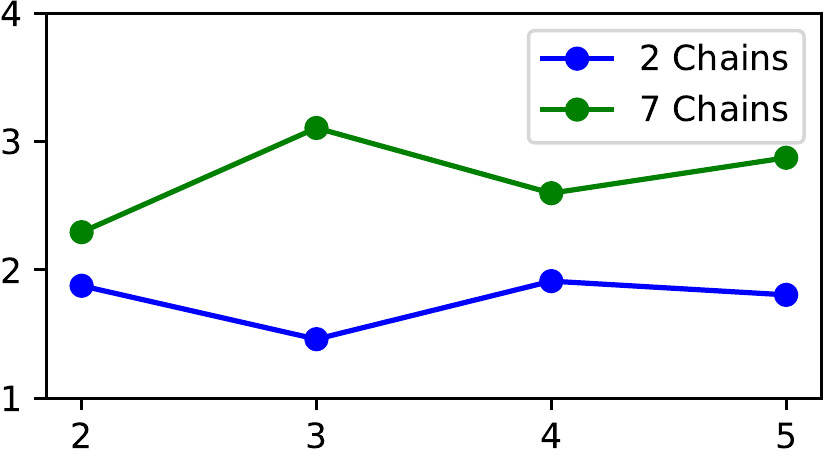}
\\ & Number of Clients
\end{tabular}

\end{minipage} \\

\begin{minipage}[t]{.3\textwidth}

\caption[Heterogeneous Paxos Multichain and Parallel experiments.]{
Heterogeneous Paxos Multichain and Parallel experiments.
In Parallel experiments, each chain operates independently
 (and has its own client).
In Multichain experiments, one client tries to append all blocks to
 all chains.
Optimal latency is 500 ms.
}
\label{fig:ManyMulti}

\end{minipage} & \begin{minipage}[t]{.3\textwidth}

\caption[Throughput of Heterogeneous Paxos under contention.]{
Throughput of Heterogeneous Paxos under contention.
  2--36 clients try to
          append 2000 blocks to just one chain. Optimal throughput is 2
          \blocks/sec.}
          \label{fig:ContentionThroughput}

\end{minipage} & \begin{minipage}[t]{.3\textwidth}

\caption{Throughput of Heterogeneous Paxos mixed-workload experiment (4 Fern
servers).}
\label{fig:MixedThroughput} 

\end{minipage}

\end{tabular}

\end{figure}

\begin{figure}
\centering
  \begin{tabular}{rl}
    \begin{tabular}{rc}
        \rotatebox[origin=l]{90}{\hspace{0.85cm}blocks/sec}
      &
        \includegraphics[width=.4\textwidth]{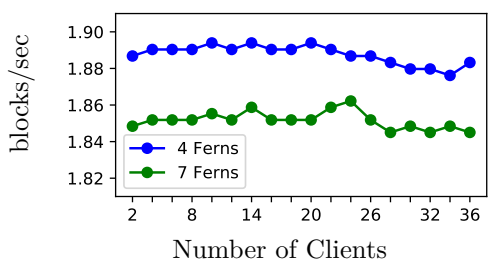}
      \\
      &
        Number of Clients
    \end{tabular}
  &
    \begin{minipage}[c]{.4\textwidth}
      \caption{Throughput of Heterogeneous Paxos under contention.
               2--36 clients try to append 2000 blocks to a single
                chain with either 4 or 7 acceptors (Fern servers),
				tolerating 1 or 2  \byzantine failures.
               Optimal throughput is 2 \blocks/sec.}
      \label{fig:ContentionThroughput}
    \end{minipage}
  \end{tabular}
\end{figure}

\subsubsection{Contention}
In these experiments, all clients simultaneously contend to append
 2000 unique \blocks to the same chain.
We measured the blocks that were actually accepted into slots
 500--1500 of the chain.
We used 2--36 clients, and chains with 4 or 7 servers,
configured with 2 GB RAM.
Like Byzantized Paxos~\cite{byzantizing-paxos},
 Heterogeneous Paxos can require a dynamic timeout to
 automatically trigger a new ballot.
Chain throughput is shown in
\cref{fig:ContentionThroughput}. 
Our chains, on average, achieved 1.88 \blocks/sec throughput for 4
 servers and 1.85 \blocks/sec for 7, not far from the
 2 \blocks/sec optimum.
Throughput does not decrease much with the number of clients.

	
\subsubsection{Mixed} 
These experiments attempt to simulate a more realistic scenario with
 multiple clients appending to multiple chains.
 including the previous 3 types of workload.
2--5 clients contend to append blocks onto either 2 or 7 chains, each
 with 4 acceptors.
On each block, a client tries to append a shared block to two random
chains with probability 10\% and otherwise tries appending to a
random single chain.
The results are in \cref{fig:MixedThroughput}.
Throughput can be over 2.0 blocks/sec because multiple clients can
 append blocks to different chains in parallel.
Mean throughput is 1.8 blocks/sec and 2.7 blocks/sec for
 2 and 7 chains respectively, which
is expected because the 2-chain configuration has more contention.

Heterogeneous Paxos scales horizontally with multiple chains running in
 parallel.
Furthermore, throughput does not decrease much with contention.

This gives us ability to make progress even with many clients
concurrently connecting to the same chain.
We also notice that the number of servers plays a major role in
 latency.
With a small group of servers, Heterogeneous Paxos almost reaches
 the theoretical optimum of 500 ms.
Since our Heterogeneous Paxos implementation is just a prototype,
 we believe that with further efforts in optimization, average latency
 performance can be improved.

\section{Future Work}
\label{sec:future}
We have generalized what it means to have
 Validity, Agreement and Termination
 in a heterogeneous setting, and designed and implemented a
 consensus protocol that meets these properties with minimum
 theoretical latency.
However, Heterogeneous Paxos is far from perfect.
For example, we have not discussed reconfiguration or changing the
 learner graph.
Here are a few more ways it can almost certainly improved, without
 altering its core concepts.

\subsection{Additional Failure Types}
Although the labels in our learner graphs have so far only taken
 crash and \byzantine failures into account, there are many other
 possible behaviors we can express.
For instance, detectable failures could be very useful to add to the
 model: rather than reconfiguring when acceptors are known to have
 failed, detection could be integrated directly.

\subsection{Network Assumption and Termination}
While Paxos and PBFT are partially synchronous consensus
 protocols~\cite{paxos,pbft}, recent advances in cryptography have
 made fully asynchronous, probabilistically terminating consensus
 protocols viable~\cite{Miller2016,Abraham2018}.
These have the advantage of never needing to insert artificial delays
 in order to guarantee termination.
It is possible that Heterogeneous Paxos could be adapted to this
 setting, using a similar ``shared coin flip'' mechanism, however it
 may require more interesting cryptography to deal with non-uniform
 quorums.

Even in the partially synchronous~\cite{Dwork1988} setting, there are
 almost certainly
 better timing strategies than those we
 present~(\cref{sec:semisynchronous}).
Optimizations like the leader/view change used in pbft~\cite{pbft} or
 the batching used in BFT-SMart~\cite{bessani2014state} could greatly
 improve performance.

\subsection{Bandwidth}
If each message bounces off of each acceptor, and each message is
 sent to each acceptor and each learner, and each phase features each
 acceptor sending a message, then the communication overhead for a
 setup with $n$ acceptors and $m$ learners is $O\p{n^3 + n^2 m}$
 messages.
While the reference-by-hash architecture allows each message to avoid
 copying the proposed value, that can still be a lot of overhead.
A fairly naive optimization would be to allow acceptors not to
 bounce all messages they receive, and instead request them only if
 the original sender didn't broadcast the message fast enough.
This could reduce communication overhead to $O\p{n^2 + nm}$, in the
 absence of \byzantine failures.

Using advanced cryptographic techniques, protocols like
 Hot-Stuff~\cite{Abraham2018} achieve consensus in $O\p n$ bandwidth
 overhead.
It may be possible to adapt Heterogeneous Paxos with some of these
 optimizations.

\subsection{Programming}
Implementations like BFT-SMart~\cite{bessani2014state}  have been
 heavily optimized, with researchers discovering better data
 structures and communication protocols along the way.
Our implementation is far from optimal.
Better memory layout, data structures, and parallelism are certainly
 possible.
\else
To explore the performance of Heterogeneous Paxos, we created
 several blockchains with different
 {\clg}s~(\cref{sec:learnergraph}).
The results~(\cref{sec:evaluation})
 show that heterogeneous configurations save resources and latency
 compared with homogeneous configurations tolerating the same
 failures.
For instance, in our example configuration~\cref{sec:introexample},
 a Homogeneous configuration tolerating similar failures would cost
 an extra 7 unnecessary acceptors, increasing latency overhead by 51\%
 relative to Heterogeneous Paxos.
%
\textit{2a} messages include a quorum of 256-bit message hashes,
 so they expand linearly with quorum size, as does the cost of
 unmarshaling and verifying the signatures of the messages referenced.
In all experiments, however, computational overhead was dominated by
 the theoretical minimum (simulated) geodistributed network latency. 
\fi

\section{Related Work}
\label{sec:related}

\ifreport\subsection{Heterogeneous Acceptors and Failures}
\else\textbf{Heterogeneous Acceptors and Failures:}
\fi
Heterogeneous Paxos is based on Leslie Lamport's
 \byzantine-fault-tolerant variant~\cite{byzantizing-paxos} of
 Paxos~\cite{paxos}.
Byzantine Paxos supports heterogeneous acceptors because it uses
 quorums: not all acceptors need be of equal worth, but
 all quorums are.
Although Lamport does not describe it explicitly,
 Byzantine Paxos can have heterogeneous, or
 \textit{mixed}~\cite{Siu1998},
 failures, so long as quorum
 intersections have a safe acceptor and at least one quorum
 is safe and live.

Many papers have investigated hybrid failure
 models~\cite{Siu1998,clement2009upright,Cachin2013,Liu2015XFTPF}
 in which different consensus protocol acceptors can
 have different failure modes, including crash failures and \byzantine
 failures (heterogeneous failures).
These papers typically investigate how many failures in each class can
 be tolerated.
Other papers have looked at system models in which different
 acceptors may be more or less likely to
 fail~\cite{Gifford79,Malkhi97a},
 or where failures are dependent (heterogeneous
 acceptors)~\cite{survivor-sets,gallet2011,guerraoui2007}.

Further generalizations are possible.
Our Learner Graph uses only \emph{safe} and \emph{live} acceptors, but
 its labels might be generalized to support other failure types such as
 rational failures~\cite{bar-sosp}.
We have only considered learners that all make
 the same (weak) synchrony assumption, but others have studied
 learners with heterogeneous network
 assumptions~\cite{Blum2019,Malkhi2019}.

\ifreport\subsection{Heterogeneous Learners}
\else\textbf{Heterogeneous Learners:}
\fi
Unlike ours, most related work conflates learners and
 acceptors.
Early related work on ``Consensus with Unknown
 Participants''~\cite{Cavin2004,Greve2007,Alchieri2008}
 defines protocols in which each participant 
 knows only a subset of other participants, inducing a
``who-knows-whom'' digraph; this work identifies
 properties of this graph that must hold to achieve
 consensus.
Not every participant knows all participants, but trust
 assumptions are homogeneous: participants have the
 same beliefs about trustworthiness of other participants.

\ifblinded Sheff et al.\@ describe\else
Our prior work describes\fi~\cite{hetconstechreport}
 a heterogeneous failure model in which different
 participants may have different failure assumptions about other
 participants.
\ifblinded They distinguish \else We distinguished \fi 
 learners whose failure assumptions are accurate from
 those whose failure assumptions are inaccurate and
 \ifblinded they specify \else we specified \fi
 a heterogeneous consensus protocol in terms of the possibly
 different conditions under which each learner is guaranteed
 agreement.
The paper constructs a heterogeneous consensus protocol that meets the
 requirements of all learners using lattice-based information
 flow to analyze and prove protocol properties.

Heterogeneous learners became of interest to blockchain
 implementations based on voting protocols where open membership was
 desirable.
Ripple (XRP)~\cite{Schwartz2014} was the earliest blockchain to
 attempt support for heterogeneous learners.
Originally, each learner had its own
 Unique Node List (UNL), the set of acceptors that it partially
 trusts and uses for making decisions.
An acceptor in more UNLs is implicitly more influential.
The protocol was updated because of correctness
 issues~\cite{Chase2018}, and support for diverse UNLs was all but
 eliminated.
Ripple has proposed a protocol called Cobalt~\cite{cobalt},
 in which each learner specifies a set of acceptors they
 partially trust, and it works if those sets intersect ``enough.''
Cobalt does not account for heterogeneous failures, and only limited
 acceptor heterogeneity.

The Stellar Consensus~\cite{mazieresstellar,stellarSOSP,StellarDISC}
 blockchain protocol supports both heterogeneous learners
 and acceptors, although it does not distinguish the two;
each learner specifies a set of ``quorum slices.''
Like Cobalt, Stellar does not account for heterogeneous failures.
Neither Stellar nor Cobalt match Heterogeneous Paxos' best-case
 latency.
Heterogeneous Paxos inherits Byzantine Paxos' 3-message-send best
 case latency, which is optimal for a consensus tolerating
 $\ceil{\frac{n}{3}} - 1$ failures in the homogeneous \byzantine case
 or $\ceil{\frac{n}{2}} - 1$ failures in the homogeneous \crash
 case~\cite{Bracha1983}.
However, both Cobalt and Stellar are designed for an ``open-world'' model, where not all
 acceptors and learners are known in advance. 
We have not yet adapted Heterogeneous Paxos to an open-world setting.

The heterogeneous learner models of Cobalt and Stellar have been
 studied in detail by Garc{\'i}a-P\'erez and
 Gotsman~\cite{GarcaPrez2018FederatedBQ}.
\ifreport
They explore what happens when \byzantine nodes lie about their trust
 choices and strengthen earlier results.
\fi
Cachin and Tackmann examine Stellar-style asymmetric trust models,
 including in shared-memory environments~\cite{Cachin2019}.
However, neither paper separates learners from acceptors,
 attempts to solve consensus, or considers heterogeneous failures;
the Learner Graph is more general.

Like our work, \textit{Flexible BFT}~\cite{Malkhi2019}
 distinguishes learners from acceptors and accounts for both
 heterogeneous learners and heterogeneous failures.
It does not allow heterogeneous acceptors: they
 are interchangeable, and quorums are specified by size.
Flexible BFT also has optimal best-case latency.
It does not support crash failures, but introduces a new
 failure type called \emph{alive-but-corrupt} for acceptors
 interested in violating safety but not liveness.

\section{Conclusion}
\label{sec:conclusion}
Heterogeneous Paxos is the first consensus protocol with
 heterogeneous acceptor, failures, and learners. 
It is based on the Learner Graph, a new and expressive way to
 capture learners' diverse failure-tolerance assumptions.
Heterogeneous consensus facilitates a more nuanced approach that can
 save time and resources, or even make previously unachievable
 consensus possible.
Heterogeneous Paxos is proven correct against our new generalization
 of consensus for heterogeneous settings.
This approach is well-suited to systems spanning heterogeneous
 trust domains; for example,
we demonstrate working \ifreport, composable \fi
 blockchains with heterogeneous trust.

Future work may expand learner graphs to represent even more types
 of failures.
Heterogeneous Paxos may be extended to allow for changing
 configurations, or improved efficiency in terms of bandwidth and
 computational overhead.
New protocols can also make use of our definition of heterogeneous
 consensus, perhaps adding new guarantees such as probabilistic
 termination in asynchronous networks.

\ifacknowledgments
\acknowledgements{}
\fi

\pagelimit{15}
\finalpage
\ifreport
\bibliography{bibtex/pm-master}
\else
\bibliography{abbreviated}

\begin{thebibliography}{1}

\bibitem{DBLP:journals/cacm/Dijkstra68a}
Edsger~W. Dijkstra.
\newblock Letters to the editor: go to statement considered harmful.
\newblock {\em Commun. {ACM}}, 11(3):147--148, 1968.
\newblock \href {https://doi.org/10.1145/362929.362947}
  {\path{doi:10.1145/362929.362947}}.

\bibitem{DBLP:books/mk/GrayR93}
Jim Gray and Andreas Reuter.
\newblock {\em Transaction Processing: Concepts and Techniques}.
\newblock Morgan Kaufmann, 1993.

\bibitem{DBLP:conf/focs/HopcroftPV75}
{John E.} Hopcroft, {Wolfgang J.} Paul, and {Leslie G.} Valiant.
\newblock On time versus space and related problems.
\newblock In {\em 16th Annual Symposium on Foundations of Computer Science,
  Berkeley, California, USA, October 13-15, 1975}, pages 57--64. {IEEE}
  Computer Society, 1975.
\newblock \href {https://doi.org/10.1109/SFCS.1975.23}
  {\path{doi:10.1109/SFCS.1975.23}}.

\bibitem{DBLP:journals/cacm/Knuth74}
Donald~E. Knuth.
\newblock {Computer Programming as an Art}.
\newblock {\em Commun. {ACM}}, 17(12):667--673, 1974.
\newblock \href {https://doi.org/10.1145/361604.361612}
  {\path{doi:10.1145/361604.361612}}.

\end{thebibliography}


\begin{thebibliography}{10}

\bibitem{quorumWP}
\href
  {https://github.com/ConsenSys/quorum/blob/master/docs/Quorum%20Whitepaper%20v0.2.pdf}
  {Quorum whitepaper}.
\newblock Technical report, ConsenSys, 2018.

\bibitem{Abraham2018}
Ittai Abraham, Guy Gueta, and Dahlia Malkhi.
\newblock \href {http://arxiv.org/abs/1803.05069} {Hot-stuff the linear,
  optimal-resilience, one-message {BFT} devil}.
\newblock {\em CoRR}, abs/1803.05069, 2018.

\bibitem{bar-sosp}
Amitanand~S. Aiyer, Lorenzo Alvisi, Allen Clement, Mike Dahlin, Jean-Philippe
  Martin, and Carl Porth.
\newblock \href {http://dx.doi.org/10.1145/1095810.1095816} {{BAR} fault
  tolerance for cooperative services}.
\newblock In {\em 20\textsuperscript{th} {ACM} Symp.~on Operating System
  Principles (SOSP)}, SOSP '05, pages 45--58, 2005.

\bibitem{Alchieri2008}
Eduardo~A. Alchieri, Alysson~Neves Bessani, Joni Silva~Fraga, and Fab\'{\i}ola
  Greve.
\newblock \href {http://dx.doi.org/10.1007/978-3-540-92221-6_4} {Byzantine
  consensus with unknown participants}.
\newblock In {\em Conference on Principles of Distributed Systems}, OPODIS
  2008, pages 22--40, 2008.

\bibitem{bessani2014state}
Alysson Bessani, Jo{\~a}o Sousa, and Eduardo~EP Alchieri.
\newblock \href
  {http://www.di.fc.ul.pt/~bessani/publications/dsn14-bftsmart.pdf} {State
  machine replication for the masses with bft-smart}.
\newblock In {\em Dependable Systems and Networks (DSN), 2014 44th Annual
  IEEE/IFIP International Conference on}, pages 355--362. IEEE, 2014.

\bibitem{Blum2019}
Erica Blum, Jonathan Katz, and Julian Loss.
\newblock Synchronous consensus with optimal asynchronous fallback guarantees.
\newblock In Dennis Hofheinz and Alon Rosen, editors, {\em Theory of
  Cryptography}, pages 131--150, 2019.

\bibitem{hyperledger}
Tamas Blummer, Sean Bohan, Mic Bowman, Christian Cachin, Nick Gaski, Nathan
  George, Gordon Graham, Daniel Hardman, Ram Jagadeesan, Travin Keith, Renat
  Khasanshyn, Murali Krishna, Tracy Kuhrt, Arnaud~Le Hors, Jonathan Levi,
  Stanislav Liberman, Esther Mendez, Dan Middleton, Hart Montgomery, Dan
  O'Prey, Drummond Reed, Stefan Teis, Dave Voell, Greg Wallace, and Baohua
  Yang.
\newblock \href
  {https://www.hyperledger.org/wp-content/uploads/2018/08/HL_Whitepaper_IntroductiontoHyperledger.pdf}
  {An introduction to hyperledger}.
\newblock Technical report, Hyperledger, 2018.

\bibitem{Bracha1983}
Gabriel Bracha and Sam Toueg.
\newblock \href {http://dx.doi.org/10.1145/800221.806706} {Resilient consensus
  protocols}.
\newblock In {\em 2\textsuperscript{nd} {ACM} Symp.~on Principles of
  Distributed Computing}, PODC '83, pages 12--26, 1983.

\bibitem{Cachin2013}
Christian Cachin and Michael Backes.
\newblock \href {http://dx.doi.org/10.1109/DSN.2003.1209914} {Reliable
  broadcast in a computational hybrid model with byzantine faults, crashes, and
  recoveries}.
\newblock In {\em Conference on Dependable Systems and Networks (DSN)},
  page~37, jun 2003.

\bibitem{Cachin2019}
Christian Cachin and Bj{\"{o}}rn Tackmann.
\newblock \href {http://dx.doi.org/10.4230/LIPIcs.OPODIS.2019.7} {Asymmetric
  distributed trust}.
\newblock In Pascal Felber, Roy Friedman, Seth Gilbert, and Avery Miller,
  editors, {\em 23rd International Conference on Principles of Distributed
  Systems, {OPODIS} 2019, December 17-19, 2019, Neuch{\^{a}}tel, Switzerland},
  volume 153 of {\em LIPIcs}, pages 7:1--7:16. 2019.

\bibitem{calder2011}
Brad Calder, Ju~Wang, Aaron Ogus, Niranjan Nilakantan, Arild Skjolsvold, Sam
  McKelvie, Yikang Xu, Shashwat Srivastav, Jiesheng Wu, Huseyin Simitci, et~al.
\newblock Windows {Azure} {Storage}: a highly available cloud storage service
  with strong consistency.
\newblock In {\em 23\textsuperscript{rd} {ACM} Symp.~on Operating System
  Principles (SOSP)}, pages 143--157. ACM, 2011.

\bibitem{osdi99}
Miguel Castro and Barbara Liskov.
\newblock Practical {B}yzantine fault tolerance.
\newblock In {\em 3\textsuperscript{rd} {USENIX} Symp.~on Operating Systems
  Design and Implementation (OSDI)}, February 1999.

\bibitem{pbft}
Miguel Castro and Barbara Liskov.
\newblock Practical {B}yzantine fault tolerance and proactive recovery.
\newblock {\em ACM Trans.\@ on Computer Systems}, 20:2002, 2002.

\bibitem{Cavin2004}
David Cavin, Yoav Sasson, and Andr{\'{e}} Schiper.
\newblock \href {http://dx.doi.org/10.1007/978-3-540-28634-9\_11} {Consensus
  with unknown participants or fundamental self-organization}.
\newblock In Ioanis Nikolaidis, Michel Barbeau, and Evangelos Kranakis,
  editors, {\em Ad-Hoc, Mobile, and Wireless Networks {(ADHOC-NOW)}}, volume
  3158 of {\em Lecture Notes in Computer Science}, pages 135--148. 2004.

\bibitem{Chase2018}
Brad Chase and Ethan MacBrough.
\newblock \href {http://arxiv.org/abs/1802.07242} {Analysis of the {XRP} ledger
  consensus protocol}.
\newblock {\em CoRR}, abs/1802.07242, 2018.

\bibitem{clement2009upright}
Allen Clement, Manos Kapritsos, Sangmin Lee, Yang Wang, Lorenzo Alvisi, Mike
  Dahlin, and Taylor Riche.
\newblock \href
  {http://www.cs.utexas.edu/users/lorenzo/papers/clement-sosp09.pdf} {Upright
  cluster services}.
\newblock In {\em 22\textsuperscript{nd} {ACM} Symp.~on Operating System
  Principles (SOSP)}, pages 277--290, 2009.

\bibitem{corbett2013}
James~C. Corbett, Jeffrey Dean, Michael Epstein, Andrew Fikes, Christopher
  Frost, Jeffrey~John Furman, Sanjay Ghemawat, Andrey Gubarev, Christopher
  Heiser, Peter Hochschild, et~al.
\newblock Spanner: Google’s globally distributed database.
\newblock {\em ACM Transactions on Computer Systems (TOCS)}, 31(3):8, 2013.

\bibitem{Correia2006}
Miguel Correia, Nuno Neves, and Paulo Ver{\'i}ssimo.
\newblock From consensus to atomic broadcast: Time-free byzantine-resistant
  protocols without signatures.
\newblock {\em Comput. J.}, 49:82--96, 01 2006.

\bibitem{ScalingDecentralizedBlockchains}
Kyle Croman, Christian Decker, Ittay Eyal, Adem~Efe Gencer, Ari Juels, Ahmed
  Kosba, Andrew Miller, Prateek Saxena, Elaine Shi, Emin G{\"u}n~Sirer, Dawn
  Song, and Roger Wattenhofer.
\newblock \href {https://fc16.ifca.ai/bitcoin/papers/CDE+16.pdf} {On scaling
  decentralized blockchains}.
\newblock In Jeremy Clark, Sarah Meiklejohn, Peter~Y.A. Ryan, Dan Wallach,
  Michael Brenner, and Kurt Rohloff, editors, {\em Financial Cryptography and
  Data Security}, pages 106--125, 2016.

\bibitem{gallet2011}
Carole Delporte-Gallet, Hugues Fauconnier, Rachid Guerraoui, and Andreas
  Tielmann.
\newblock \href {http://dx.doi.org/10.1007/s00446-010-0122-4} {The disagreement
  power of an adversary}.
\newblock {\em Distributed Computing}, 24:137--147, November 2011.

\bibitem{Dwork1988}
Cynthia Dwork, Nancy Lynch, and Larry Stockmeyer.
\newblock \href {http://dx.doi.org/10.1145/42282.42283} {Consensus in the
  presence of partial synchrony}.
\newblock {\em J. ACM}, 35(2):288–323, April 1988.

\bibitem{Fischer82b}
M.~J. Fischer, N.~A. Lynch, and M.~S. Paterson.
\newblock Impossibility of distributed consensus with one faulty process.
\newblock {\em Journal of the ACM}, 32(2):374--382, April 1985.
\newblock Also published as MIT Laboratory of Science Technical Report
  MIT/LCS/TR-282, Cambridge, MA, 1982.

\bibitem{Floyd1962}
Robert~W. Floyd.
\newblock \href {http://dx.doi.org/10.1145/367766.368168} {Algorithm 97:
  Shortest path}.
\newblock {\em Commun. ACM}, 5(6):345--, June 1962.

\bibitem{ethereum}
Ethereum Foundation.
\newblock \href {https://github.com/ethereum/wiki/wiki/White-Paper} {Ethereum
  white paper}.
\newblock Technical report, Ethereum Foundation, 2018.

\bibitem{GarcaPrez2018FederatedBQ}
{\'A}lvaro Garc{\'i}a-P{\'e}rez and Alexey Gotsman.
\newblock \href {http://dx.doi.org/10.4230/LIPIcs.OPODIS.2018.17} {{Federated
  Byzantine Quorum Systems}}.
\newblock In Jiannong Cao, Faith Ellen, Luis Rodrigues, and Bernardo Ferreira,
  editors, {\em 22nd International Conference on Principles of Distributed
  Systems (OPODIS 2018)}, volume 125 of {\em Leibniz International Proceedings
  in Informatics (LIPIcs)}, pages 17:1--17:16, 2018.

\bibitem{Gifford79}
David~K. Gifford.
\newblock Weighted voting for replicated data.
\newblock In {\em 7\textsuperscript{th} {ACM} Symp.~on Operating System
  Principles (SOSP), ACM Operating Systems Review}, pages 150--162, December
  1979. ACM SIGOPS.

\bibitem{Greve2007}
Fab{\'\i}ola Greve and Sebastien Tixeuil.
\newblock \href {http://dx.doi.org/10.1109/DSN.2007.61} {Knowledge connectivity
  vs. synchrony requirements for fault-tolerant agreement in unknown networks}.
\newblock In {\em 37th Annual IEEE/IFIP International Conference on Dependable
  Systems and Networks (DSN'07)}, pages 82--91, June 2007.

\bibitem{grpc}
\href {https://grpc.io} {grpc: A high performance, open-source universal {RPC}
  framework}.
\newblock \url{https://grpc.io}, 2018.

\bibitem{guerraoui2007}
Rachid Guerraoui and Marko {Vukoli{\'c}}.
\newblock \href {http://dx.doi.org/10.1145/1281100.1281120} {Refined quorum
  systems}.
\newblock In {\em Principles of Distributed Computing}, PODC 2007, pages
  119--128, 2007.

\bibitem{corda}
Mike Hearn and Richard~Gendal Brown.
\newblock \href {https://www.r3.com/reports/corda-technical-whitepaper/}
  {Corda: A distributed ledger}.
\newblock Technical report, r3, 2019.

\bibitem{survivor-sets}
Flavio Junqueira and Keith Marzullo.
\newblock Designing algorithms for dependent process failures.
\newblock In {\em Workshop on Future Directions in Distributed Computing},
  pages 24--28, 2003.

\bibitem{paxos}
Leslie Lamport.
\newblock \href {http://dx.doi.org/10.1145/279227.279229} {The {P}art-time
  {P}arliament}.
\newblock {\em ACM Trans.\@ on Computer Systems}, 16(2):133--169, May 1998.

\bibitem{paxos-made-simple}
Leslie Lamport.
\newblock \href
  {https://www.microsoft.com/en-us/research/publication/paxos-made-simple/}
  {Paxos made simple}.
\newblock Technical report, Microsoft Research, December 2001.

\bibitem{byzantizing-paxos}
Leslie Lamport.
\newblock \href {https://lamport.azurewebsites.net/tla/byzsimple.pdf}
  {Byzantizing {Paxos} by refinement}.
\newblock In {\em 25th Int'l Conf. on Distributed Computing {(DISC)}}, pages
  211--224, 2011.

\bibitem{Lamport82}
Leslie Lamport, Robert Shostak, and Marshall Pease.
\newblock The {B}yzantine {G}enerals {P}roblem.
\newblock {\em ACM Trans.\@ on Programming Languages and Systems},
  4(3):382--401, July 1982.

\bibitem{LAMP}
Butler~W. Lampson and Howard~E. Sturgis.
\newblock \href
  {http://research.microsoft.com/en-us/um/people/blampson/21-crashrecovery/Abstract.html}
  {Crash recovery in a distributed data storage system}.
\newblock Technical report, Xerox Palo Alto Research Center, Palo Alto, CA,
  1979.

\bibitem{Liu2015XFTPF}
Shengyun Liu, Paolo Viotti, Christian Cachin, Vivien Qu{\'e}ma, and Marko
  Vukolic.
\newblock \href
  {https://www.usenix.org/system/files/conference/osdi16/osdi16-liu.pdf} {Xft:
  Practical fault tolerance beyond crashes}.
\newblock In {\em OSDI}, 2016.

\bibitem{stellarSOSP}
Marta Lokhava, Giuliano Losa, David Mazi{\`e}res, Graydon Hoare, Nicolas P~E
  Barry, Eli Gafni, Jonathan Jov{\'e}, Rafa{\l} Malinowsky, and J~Murphy
  McCaleb.
\newblock \href
  {http://delivery.acm.org/10.1145/3360000/3359636/p80-lokhava.pdf} {Fast and
  secure global payments with {Stellar}}.
\newblock In {\em 27\textsuperscript{th} {ACM} Symp.~on Operating System
  Principles (SOSP)}, 2019.

\bibitem{StellarDISC}
Giuliano Losa, Eli Gafni, and David Mazi{\`e}res.
\newblock Stellar consensus by instantiation.
\newblock In {\em DISC}, 2019.

\bibitem{cobalt}
Ethan MacBrough.
\newblock \href {http://arxiv.org/abs/1802.07240} {Cobalt: {BFT} governance in
  open networks}.
\newblock {\em CoRR}, abs/1802.07240, 2018.

\bibitem{Malkhi2019}
Dahlia Malkhi, Kartik Nayak, and Ling Ren.
\newblock \href {http://dx.doi.org/10.1145/3319535.3354225} {Flexible byzantine
  fault tolerance}.
\newblock In {\em 26\textsuperscript{th} ACM Conf.\@~on Computer and
  Communications Security (CCS)}, CCS 2019, pages 1041--1053, 2019.

\bibitem{Malkhi97a}
Dahlia Malkhi and Michael Reiter.
\newblock {B}yzantine quorum systems.
\newblock In {\em 29th ACM Symposium on Theory of Computing}, pages 569--578,
  May 1997.

\bibitem{mazieresstellar}
David Mazi{\`e}res.
\newblock \href {https://www.stellar.org} {The {S}tellar consensus protocol: A
  federated model for internet-level consensus}.
\newblock \url{https://www.stellar.org}, April 2015.

\bibitem{Miller2016}
Andrew Miller, Yu~Xia, Kyle Croman, Elaine Shi, and Dawn Song.
\newblock \href {http://dx.doi.org/10.1145/2976749.2978399} {The honey badger
  of {BFT} protocols}.
\newblock In {\em 23\textsuperscript{rd} ACM Conf.\@~on Computer and
  Communications Security (CCS)}, pages 31--42, 2016.

\bibitem{bitcoin}
Satoshi Nakamoto.
\newblock Bitcoin: A peer-to-peer electronic cash system, 2008.

\bibitem{collision-resistance}
Bart Preneel.
\newblock \href {http://dx.doi.org/10.1007/978-1-4419-5906-5_565} {Collision
  resistance}.
\newblock In Henk C.~A. van Tilbor and Sushil Jajodia, editors, {\em
  Encyclopedia of Cryptography and Security}, pages 221--222, 2011.

\bibitem{protobufs}
\href {https://developers.google.com/protocol-buffers/} {Protocol buffers}.
\newblock \url{https://developers.google.com/protocol-buffers/}, 2018.

\bibitem{Schwartz2014}
David Schwartz, Noah Youngs, and Arthur Britto.
\newblock \href {https://ripple.com/files/ripple_consensus_whitepaper.pdf} {The
  {Ripple} protocol consensus algorithm}.
\newblock Technical report, Ripple Labs Inc, 2014.

\bibitem{CharlotteTR}
Isaac Sheff, Xinwen Wang, Haobin Ni, Robbert van Renesse, and Andrew~C. Myers.
\newblock \href {http://arxiv.org/abs/1905.03888} {Charlotte: Composable
  authenticated distributed data structures, technical report}, 2019.

\bibitem{hetconstechreport}
Isaac~C. Sheff, Robbert van Renesse, and Andrew~C. Myers.
\newblock \href {http://arxiv.org/abs/1412.3136} {Distributed protocols and
  heterogeneous trust: Technical report}.
\newblock Technical Report arXiv:1412.3136, Cornell University Computer and
  Information Science, December 2014.

\bibitem{Siu1998}
Hin-Sing Siu, Yeh-Hao Chin, and Wei-Peng Yang.
\newblock \href {http://dx.doi.org/10.1109/71.667895} {Byzantine agreement in
  the presence of mixed faults on processors and links}.
\newblock {\em Parallel and Distributed Systems, IEEE Transactions on},
  9(4):335--345, Apr 1998.

\end{thebibliography}
\fi
\checkpagelimit

\end{document}